\newtheorem{theorem}{Theorem}
\newtheorem{lemma}[theorem]{Lemma}
\newtheorem{definition}{Definition}
\newtheorem{keyproperty}{Property}
\renewcommand{\P}{{\cal P}} 
\newcommand{\etal}{{et al.~}}
\newcommand{\hy}{{\hat{y}}}
\newcommand{\bA}{\mathbf{A}}
\newcommand{\bb}{\mathbf{b}}
\newcommand{\bw}{\mathbf{w}}
\newcommand{\bc}{\mathbf{c}}
\newcommand{\bh}{\mathbf{h}}
\newcommand{\bu}{\mathbf{u}}
\DeclareMathOperator{\poly}{poly}
\newenvironment{ntheorem}[1]{\indent{\bf Theorem~#1.}\it}{\par}
\newenvironment{ndefinition}[1]{\indent{\bf Definition~#1.}\it}{\par}
\newenvironment{nlemma}[1]{\indent{\bf Lemma~#1.}\it}{\par}
\renewenvironment{proof}{\noindent{\it Proof:}}{\hfill$\Box$\par}
\newcommand{\by}{\mathbf{y}}
\newcommand{\bz}{\mathbf{z}}
\renewcommand{\L}{{\mathcal L}}
\newcommand{\M}{{\mathcal M}}
\newcommand{\Q}{{\mathcal Q}}
\renewcommand{\O}{\mathcal O}
\renewcommand{\t}[1]{\tilde{#1}}
\newcommand{\h}[1]{\hat{#1}}
\newcommand{\eat}[1]{}
\newcommand{\cvol}[1]{\left\lfloor \frac{\bnorm{#1}}{2} \right \rfloor}
\newcommand{\T}{{\mathcal T}}
 \newcommand{\Od}{{\O_\delta}}
\newcounter{lp}
\newcommand{\lptag}{\tag{LP\arabic{lp}}\addtocounter{lp}{1}}
\newcommand{\vecy}{\mathbf{y}}
\newcommand{\yone}{y^{(1)}}
\newcommand{\bone}{b^{(1)}}
\newcommand{\ytwo}{y^{(2)}}
\newcommand{\vtwo}{V^{(2)}}
\newcommand{\btwo}{b^{(2)}}
\newcommand{\mzero}{\M^{(0)}}
\newcommand{\mone}{\M^{(1)}}
\newcommand{\mtwo}{\M^{(2)}}
\newcommand{\mthree}{\M^{(3)}}
\newcommand{\mthreepa}{\M^{(3)a}}
\newcommand{\mthreepb}{\M^{(3)b}}
\newcommand{\bcnorm}[1]{\|{#1}\|_{b,c} }
\newcommand{\cone}{c^{(1)}}
\newcommand{\ctwo}{c^{(2)}}
\newcommand{\eone}{{E}^{(1)}}
\newcommand{\etwo}{{E}^{(2)}}
\newcommand{\ylong}[1]{\mbox{\sc Long} \left(#1 \right)}
\newcommand{\yshort}[1]{\mbox{\sc Short}\left(#1\right)}
\newcommand{\shortcost}[1]{\mbox{\sc Short}(#1)}
\newcommand{\shiftcost}[1]{\mbox{\sc Shift}(#1)}
\newcommand{\lpb}{\mbox{\sc \ref{lpbm}}(\bb)}
\newcommand{\lpbone}{\mbox{\sc \ref{lpbm}}(\mathbf \bone)}
\newcommand{\lpbtwo}{\mbox{\sc \ref{lpbm}}(\mathbf \btwo)}
\newcommand{\Oc}{{\O}^c}
\newcommand{\Odc}{{\O}^c_{\delta}}
\newcommand{\Qc}{{\Q}^{c}}
\newcommand{\Pc}{{\P}^{c}}
\newcommand{\bnorm}[1]{\|#1\|_b}
\newcommand{\bfloor}[1]{\left\lfloor\frac{\|#1\|_b}{2}\right\rfloor}
\newcommand{\blam}{\underline{\lambda}}
\DeclareMathAlphabet{\mathscr}{OT1}{pzc}{m}{it}
\title{Near Linear Time Approximation Schemes for Uncapacitated and Capacitated \lowercase{b}--Matching Problems in Nonbipartite Graphs\thanks{A previous extended abstract of this paper appeared in SODA 2014 \cite{AhnG13}.}}
\date{}
\author{Kook Jin Ahn\thanks{Google, 1600 Amphitheatre Parkway
Mountain View, CA 94043. Email {\tt kookjin@google.com}. This work was done while the author was at University of Pennsylvania.}
\and Sudipto Guha\thanks{
Department of Computer and Information Sciences,
    University of Pennsylvania, Philadelphia, PA. Email: {\tt
      sudipto@cis.upenn.edu}. Research supported in part by NSF Award CCF-1546151. }}
\begin{document}
\maketitle
\begin{abstract}  
  We present the first near optimal approximation schemes for the
  maximum weighted (uncapacitated or capacitated) $b$--matching
  problems for non-bipartite graphs that run in time (near) linear in
  the number of edges. For any $\delta>3/\sqrt{n}$ the algorithm
  produces a $(1-\delta)$ approximation in $O(m \poly(\delta^{-1},\log
  n))$ time. We provide fractional solutions for the standard linear
  programming formulations for these problems and subsequently also
  provide (near) linear time approximation schemes
  for rounding the fractional solutions.
  Through these problems as a vehicle, we also present several ideas
  in the context of solving linear programs approximately using fast
  primal-dual algorithms.  First, even though the dual of these
  problems have exponentially many variables and an efficient exact
  computation of dual weights is infeasible, we show that we can
  efficiently compute and use a sparse approximation of the dual
  weights using a combination of (i) adding perturbation to the
  constraints of the polytope and (ii) amplification followed by
  thresholding of the dual weights.  Second, we show that
  approximation algorithms can be used to reduce the width of the
  formulation, and faster convergence.
\end{abstract}

\section{Introduction}
The $b$--matching problem is a fundamental problem with a rich history
in combinatorial optimization, see \cite[Chapters
31--33]{Schrijver03}. In this paper we focus on finding near optimal
approximation schemes for finding fractional as well as integral
solutions for maximum $b$--matching problems in non-bipartite
graphs. The algorithms produce a $(1-O(\delta))$ approximations and
run in time $O((m +n)\cdot \poly(\log n, 1/\delta))$ time for $\delta
\geq 3/\sqrt{n}$. 

\begin{definition}\cite[Chapter 31]{Schrijver03}
  In the {\bf $\mathbf b$--matching} problem we are given a weighted (possibly
  non-bipartite) graph $G=(V,E,\{w_{ij}\},\{b_i\})$ where $w_{ij}$ is the
  weight of edge $(i,j)$ and $b_i$ is the capacity of the vertex $i$.  Let $|V|=n$ and $|E|=m$.   We assume $b_i$ are integers in $[1,\poly n]$.  We can
  select an edge $(i,j)$ with multiplicity $y_{ij}$ such that
  $\sum_{j: (i,j) \in E} y_{ij} \leq b_i$ for all vertices $i$ and the goal is 
  to maximize $\sum_{(i,j)\in E} w_{ij}y_{ij}$. Let $B=\sum_i b_i$, and note $B \geq n$.
\end{definition}

\begin{definition}\cite[Chapters 32 \& 33]{Schrijver03}
 In the {\bf Capacitated $\mathbf b$--matching} problem we have an additional restriction
that the multiplicity of an edge $(i,j) \in E$ is at most $c_{ij}$ 
where $c_{ij}$
are also given in the input (also assumed to be an
integer in $[0,\poly n]$). 
Observe that we can assume $c_{ij} \leq \min\{ b_i , b_j \}$ without loss of generality.
A problem with $c_{ij}=1$ for all $(i,j)
\in E$ is also referred to as an ``unit capacity'' or ``simple'' $b$--matching
problem in the literature.
\label{def2}
\end{definition}

M\"{u}ller-Hannemann and Schwartz \cite{MH99} provide an excellent
survey of different algorithms for variants of
$b$--matching. Approaches that solve regular matching do not extend to
$b$-matchings without significant loss of efficiency. We revisit some
of the reasons shortly. In the interest of space we summarize the main
results for the $b$-matching problem briefly. Gabow~\cite{Gabow83}
gave an $O(nm \log n)$ algorithm for the unweighted ($w_{ij}=1$)
capacitated problem. For $c_{ij}= 1$ this reduces to
$O(\min\{\sqrt{B}m,nm\log n\})$.  For the weighted uncapacitated case
Anstee~\cite{Anstee87} gave an $O(n^2m)$ algorithm; an
$\tilde{O}(m^2)$ algorithm is in \cite{Gabow83}.
Letchford \etal \cite{LetchfordRT04}, building on Padberg
and Rao~\cite{PadbergR82}, gave an $O(n^2 m \log (n^2/m))$ time algorithm for the decision version of the weighted, 
uncapacitated/capacitated problem. 
In summary the best exact algorithms to date for
the $b$--matching problem in general graphs are super--linear (see 
\cite[Chapter 31]{Schrijver03}) in the size of the input.  

It is known that solving the  bipartite relaxation for the
weighted $b$--matching problem within a $(1-\delta)$ approximation
(for any $\delta>0$) will always produce a $(\frac23
-\delta)$-approximation algorithm for general non-bipartite
graphs~\cite{Furedi81,FurediKS93}. This approximation is also tight
(consider all $b_i=1,w_{ij}=1$ for a triangle graph) --- no approach
which only uses bipartite relaxations will breach the $\frac23$
barrier. 
Distributed algorithms with $O(1)$ or weaker approximation guarantees
have been discussed by Koufogiannakis and Young \cite{KY09}.
Mestre~\cite{Mestre06} provided a $(\frac23 - \delta)$ approximation
algorithm running in $O(m (\max_i b_i) \log \frac1\delta)$ time for
weighted unit capacity $b$--matching \cite{Mestre06}. However a constant
factor approximation does not seem to be a natural stopping point.

Given the recent growth in data sets and sizes of the graphs defining
instances of matching it is natural to consider approximation
algorithms that trade off the quality of the solution versus running
time. Typically these algorithms provide an $f$-approximation, that
is, for any instance we return a feasible solution whose value is at least $f$
times the value of the true optimum for that instance (maximum version).  In particular
efficient algorithms which are $(1-\delta)$-approximation schemes (for
any absolute constant $\delta>0$, independent of $n$) and faster than computing the optimum
solution are useful in this context. It would be preferable that the
running time depended polynomially on $1/\delta$ (instead of
exponential dependence) -- even though $\delta$ is assumed
constant. It is possible that each vertex has $b_i =\sqrt{n}$ and 
a linear dependence on $B$ is not a near linear time algorithm. This paper
provides the first near linear time approximation scheme for $b$--matching.

\subsection{Existing Approaches and Challenges}
We begin with the natural question about similarity and differences vis-a-vis weighted matching, which correspond to $b_i=1$ for all vertices $i$. 
Efficient approximation schemes exist for maximum weighted matching,
even for the non-bipartite case, see \cite{DuanP10,DuanPS11} and
references therein. All of these algorithms maintain a feasible
matching and repeatedly use augmentation paths -- paths between two
unmatched vertices such that the alternate edges are matched. In the
non-bipartite case, if the two endpoints are the same vertex then this
path is known as a ``blossom''. An efficient search for good
augmentation paths, in the weighted case, requires contraction of
blossoms. However this approach does not extend to non-bipartite
$b$--matching for the case $b_i >1$.  The augmentation structures
needed for $b$--matching are not just blossoms but also blossoms with
forests that are attached to the blossom (often known as petals/arms),
see the discussion in \cite{MH99}. Searching over this space of odd
cycles with attached forests is significantly more difficult and inefficient.
 In the
language of linear programming (which we discuss in more detail
shortly), augmentation paths preserve primal feasibility for the
matching problem. In our approach we
explicitly maintain a primal infeasible solution (by violating the
capacities) except at the last step.

It is known that if we copy each node $b_i$ times then the $b$--matching problems reduce to maximum weighted matching. As an example the pairs of edges $(u,v)$ and $(v,w)$ where the vertex capacities are $3,2,3$ as shown, correspond to $8$ vertices and $12$ edges.

\medskip
 \begin{center}
 \begin{tikzpicture}[font=\tiny]
 \tikzstyle{vertex}=[draw,shape=circle,minimum size=0.3cm]
 \node[vertex] at (0,0) (u) {$u$};
 \node[above,yshift=0.3cm] at (u) {3};
 \node[vertex] at (0.75,0) (v) {$v$};
 \node[above,yshift=0.3cm] at (v) {2};
 \node[vertex] at (1.5,0) (w) {$w$};
 \node[above,yshift=0.3cm] at (w) {3};
 \draw (u) -- (v);
 \draw (v) -- (w);
 \draw[->] (2.25,0) -- (2.75,0);
 \node[vertex] at (3.5,1)  (u1) {$u_1$};
 \node[vertex] at (3.5,0)  (u2) {$u_2$};
 \node[vertex] at (3.5,-1) (u3) {$u_3$};
 \node[vertex] at (5,0.5)  (v1) {$v_1$};
 \node[vertex] at (5,-0.5) (v2) {$v_2$};
 \node[vertex] at (6.5,1)  (w1) {$w_1$};
 \node[vertex] at (6.5,0)  (w2) {$w_2$};
 \node[vertex] at (6.5,-1) (w3) {$w_3$};
 \draw (u1) -- (v1) -- (w1);
 \draw (u1) -- (v2) -- (w1);
 \draw (u2) -- (v1) -- (w2);
 \draw (u2) -- (v2) -- (w2);
 \draw (u3) -- (v1) -- (w3);
 \draw (u3) -- (v2) -- (w3);
 \end{tikzpicture}
\end{center}

The size of the graph increases significantly under such a
transformation -- consider a star graph where the central node has
$b_i=n$ and the leaf nodes have $b_i=1$ -- replication of that central
node will make the number of edges $n^2$. If we are seeking near
linear running times then transformations such as copying do not help
since the number of edges and vertices can increase by polynomial
factors. This blowup was known since \cite{Gabow83}, judicious
use of this approach has been used to achieve superlinear time (in $n$)
optimal algorithms that also depend on $B$, for example as in
\cite{GabowT91}. However near linear time algorithms have remained
elusive.

\smallskip
{\bf Linear Programming Formulations.} Consider the
following definition and linear programming formulation \ref{lpbm} for the uncapacitated $b$--matching problem.

\begin{definition}[{\sc Odd Sets and Small Odd Sets}]\label{odelta}
Given a graph $G=(V,E)$, with $|V|=n$ and $|E|=m$,
 and non-negative integer $b_i$ for each $i \in V$, for each
$U \subseteq V$ let $\bnorm{U} = \sum_{i \in U} b_i$.
Define $\O = \{ U \mid \bnorm{U} \mbox { is odd and $\geq 3$}, \mbox{$U$ has more than one vertex} \}$.
Let $\Od = \{ U \mid  U \in \O; \bnorm{U} \leq 1/\delta \}$.

{\small
\[
  \begin{array}{lll}
   \beta^* & =\lpb= \displaystyle \max \sum_{(i,j) \in E} w_{ij} y_{ij} & \\
   &\displaystyle \sum_{j:(i,j)\in E} y_{ij} \leq b_i & \forall i \in V \\
   &\displaystyle \sum_{(i,j) \in E:i,j\in U} y_{ij} \leq \left\lfloor \bnorm{U}/2 \right\rfloor & \forall U \in \O \\
   & y_{ij} \geq 0 & \forall (i,j) \in E
  \end{array}
  \lptag\label{lpbm}
\]
} 
The constraints of \ref{lpbm} represent the
``$b$-matching polytope''; any vector in this polytope can be expressed as
a convex combination of integral $b$--matching solutions, see
\cite[Chapter 31]{Schrijver03}.
\end{definition}

The constraints in \ref{lpbm} correspond to the vertices and odd
sets. The variable $y_{ij}$ (which is the same as $y_{ji}$)
corresponds to the fractional relaxation of the ``multiplicity'' of
the edge $(i,j)$ in the uncapacitated $b$--matching. It is known that the
formulation ~\ref{lpbm} has an integral optimum solution when $b_i$
are integers. The formulation has $m$ variables and $2^{\Omega(n)}$
constraints -- but can be solved in polynomial time since the oracle
for computing the maximum violated constraint can be implemented in
polynomial time using standard techniques \cite{LetchfordRT04}. That
approach is the ``minimum odd-cut'' approach of Padberg and Rao
\cite{PadbergR82}. If we only
retain the constraints for odd sets $U \in \Od$ then a fractional
solution of the modified system, when multiplied by $(1-\delta)$,
satisfies \ref{lpbm}.  That relaxed formulation, still has
$n^{1/\delta}$ constraints which is exponential in $1/\delta$. Note
that an approximate solution of the dual does not immediately provide us a solution for the
primal\footnote{In subsequent work, in manuscript \cite{access}, we
show that we can solve the dual to identify the subgraph containing
the maximum uncapacitated $b$--matching; but that manuscript uses the
results in this paper to construct an actual feasible primal solution
on that subgraph. Further the methods of \cite{access} do not apply to 
the dual of the capacitated
$b$--matching problem.}.

\smallskip
It may be tempting to postulate that applying existing multiplicative
weight algorithms such as \cite{LubyN93, PlotkinST95, GK95} and
many others (see the surveys \cite{FosterV,AroraHK12}) can help
provide us approximate solutions to \ref{lpbm} efficiently. However
that is not the case due to several reasons. First, the existing
algorithms have to maintain weights for each of the $n^{1/\delta}$
constraints.  Second, even if we are provided an approximately
feasible fractional solution, no efficient algorithm exists that
easily computes the maximum violation of the constraints in
\ref{lpbm}. Moreover it is nontrivial to verify that we have already
achieved an approximately feasible solution.  The only known
algorithms for computing the maximum violation (for just the odd-sets)
still correspond to the minimum odd-cut problem.  Those solutions are
at least cubic (see
\cite{LetchfordRT04}).

\smallskip
{\bf Capacitated $b$--Matching} The situation is more dire in presence
of edge capacities. The capacitated $b$--matching problem has two
known solution approaches. In the first one \cite[Theorem 32.2, page
564]{Schrijver03}, the matching polytope is defined by where the set
constraints are for every subset $U$ and every subset $F$ of the cut
defined by $U$.

{\small
\[
  \begin{array}{lll}
   &\displaystyle \sum_{j:(i,j)\in E} y_{ij} \leq b_i & \forall i \in V \\
   & \displaystyle y_{ij} \leq c_{ij} & \forall (i,j) \in E\\
   &\displaystyle \sum_{(i,j) \in E:i,j\in U} y_{ij} + \sum_{(i,j) \in F} y_{ij}
 \leq \left\lfloor \frac12 \left(\bnorm{U} + \sum_{(i,j) \in F} c_{ij} \right) \right\rfloor & \forall U \subseteq V, F \subseteq \{ (i,j) | i \in U, j \not \in U \} \\
& & \mbox{ and } \bnorm{U} + \sum_{(i,j) \in F} c_{ij} \mbox{is odd}  \\
   & y_{ij} \geq 0 & \forall (i,j) \in E
  \end{array}
 \lptag\label{lpbm-crazy}
\]
}

Expressing the dual of \ref{lpbm-crazy}
is already nontrivial, let alone any combinatorial manipulation. 
The second approach corresponds to compressed representations introduced in \cite{Gabow83}, see also \cite[Theorem~32.4,page 567]{Schrijver03}. It
corresponds to subdividing
each edge $e=(i,j)$ to introduce two new vertices $p_{ei}$ and
$p_{ej}$ and creating three edges, 
where $b_{p_{ei}}=c_{ij}$ as shown in the example below. There are no capacities on edges but we are constrained to {\em always}
saturate the newly created vertices $p_{ei},p_{ej}$ for every edge $(i,j)$, i.e., 
{
\small
\[ y_{ip_{ei}} + y_{p_{ei}p_{ej}} = c_{ij} \qquad \mbox{ and }  \qquad y_{p_{ej}j} + y_{p_{ei}p_{ej}} = c_{ij} \]
}

\centerline{
\begin{tikzpicture}[font=\tiny]
 \tikzstyle{vertex}=[draw,shape=circle,radius=0.25cm]
 \node[vertex] at (0,0) (u) {$i$};
 \node[above,yshift=0.3cm] at (u) {3};
 \node[vertex] at (2,0) (v) {$j$};
 \node[above,yshift=0.3cm] at (v) {4};
 \node[vertex] at (4,0) (w) {$u$};
 \node[above,yshift=0.3cm] at (w) {3};
\draw (u) to node[above] {c=3} (v);
 \draw (v) to node[above] {c=2}(w);
 \draw[->] (5.5,0) to (6.5,0);
 \node[vertex] at (8,0)  (i1) {$i$};
 \node[above,yshift=0.3cm] at (i1) {3};
 \node[vertex] at (9,0) (ei1) {}; 
 \node[above,yshift=0.3cm] at (ei1) {3};
 \node[below,yshift=-0.15cm] at (ei1) {$p_{ij,i}$};
 \node[vertex] at (10,0) (ei12) {};
 \node[above,yshift=0.3cm] at (ei12) {3};
  \node[below,yshift=-0.15cm] at (ei12) {$p_{ij,j}$};
 \node[vertex] at (11,0) (i2) {$j$};
 \node[above,yshift=0.3cm] at (i2) {4};
 \node[vertex] at (12,0)  (ei2) {}; 
 \node[above,yshift=0.3cm] at (ei2) {2};
  \node[below,yshift=-0.15cm] at (ei2) {$p_{ju,j}$};
 \node[vertex] at (13,0) (ei23) {};
 \node[above,yshift=0.3cm] at (ei23) {2};
 \node[below,yshift=-0.15cm] at (ei23) {$p_{ju,u}$};
 \node[vertex] at (14,0)  (i3) {$u$};
 \node[above,yshift=0.3cm] at (i3) {3};
 \draw (i1) to node[above] {}  (ei1);
 \draw (ei1) to node[above] {}  (ei12);
 \draw (ei12) to node[above] {}  (i2);
 \draw (i2) to node[above] {}  (ei2);
 \draw (ei2) to node[above] {}  (ei23);
 \draw (ei23) to node[above] {}  (i3);
 \end{tikzpicture}
}

\noindent
Observe that the equality rules out simply scaling the vector $\by$ by
a constant smaller than $1$. The all-zero vector $\mathbf{0}$ is not
even in the polytope! Even though the polytope is convex, the lack of
closure under affine transformations makes it unwieldy for most known
techniques that produce fast approximate solutions. The transformation
creates unusual difficulties for approaches that are not based on
linear programming as well, see \cite{Hougardy}.  New ideas are
required to address these issues and the development of such is the
goal of this paper.

\subsection{Contributions}
\label{sec:contrib}
The paper combines several novel structural properties of the
$b$--matching polytope with novel modifications of the multiplicative
weights method, and uses approximation algorithms to efficiently solve
the subproblems produced by the said multiplicative weights
method. All three of these facets function in tandem, and the
overall technical theme of the solution are independently of interest.

\paragraph{\bf Main Results}
We assume that the edges in the graph $G=(V,E)$ are presented as a
read only list $\langle\ldots,(i,j,w_{ij}),\ldots\rangle$ in arbitrary
order where $w_{ij}$ is the weight of the edge $(i,j)$.  The space
complexity will be measured in words and we assume that the integers
in the input are bounded from above by $\poly n$ to avoid
bit-complexity issues. We prove the following theorems about
$b$--matching.

\begin{theorem} [{\sc Fractional $b$--matching}] \label{thmone} Given any
   non-bipartite graph, for any $\frac{3}{\sqrt{n}}<\delta\leq 1/16$, we find a
  $(1-O(\delta))$-approximate (to \ref{lpbm}) fractional weighted
  $b$-matching using additional ``work'' space (space excluding the
  read-only input) $O(n\poly (\delta^{-1},\ln n))$ and making
  $T=O(\delta^{-4} (\ln (1/\delta)) \ln n)$ passes over the list of
  edges.  The running time\footnote{The exact exponent of $\delta,\log n$ in the $\poly()$ 
term depends on \cite{lowstcuts,BhalgatHKP07} and we omit further discussion in this paper.} is $O(mT + n\poly (\delta^{-1},\ln n))$.
\end{theorem}

\begin{theorem}[{\sc Integral $b$--matching}]\label{thm:rounding}
  Given a fractional $b$-matching $\by$ for a non-bipartite graph
  which satisfies the constraints in the standard LP formulation and
  has weight $W_0$, we find an integral $b$--matching of weight at least
  $(1-2\delta)W_0$ in $O(m'\delta^{-3}\ln (1/\delta))$ time and
  $O(m'/\delta^2)$ space where $m'=|\{(i,j)|y_{ij} > 0 \}|$.
\end{theorem}

The computation for the capacitated $b$--matching problem maintains
the invariant that edge capacities are never violated at any stage of
the algorithm.  This yields a new approximation version of the
capacitated matching problem where we exceed the vertex capacities but
do not exceed the edge capacities at all and (almost) preserve the
objective function.  We prove:

\begin{theorem}[{\sc Fractional, Capacitated}] \label{thmtwo}  
Given any weighted non-bipartite graph, for any $\frac{3}{\sqrt{n}}<\delta\leq 1/16$, we
  find a $(1-O(\delta))$-approximate 
fractional capacitated $b$-matching using  $O(mR/\delta  + \min\{ B, m \}$ 
$\poly (\delta^{-1},\ln n))$ time, 
$O(\min\{m,B\} \poly (\delta^{-1},\ln n))$
  additional ``work'' space with $R=O(\delta^{-4} (\ln^2 (1/\delta)) \ln n)$ 
passes over the list of edges where $B=\sum_i b_i$.  The algorithm
  returns a solution $\{\h{y}_{ij}\}$ such that the subgraph
  $\h{E}=\{(i,j)| (i,j) \in E, \h{y}_{ij}>0\}$ satisfies $\sum_{(i,j)
    \in \h{E}} w_{ij}c_{ij} \leq 16 R \beta^{*,c}$ where $\beta^{*,c}$ is the weight of
  the integral maximum capacitated $b$--matching. 
\end{theorem}

\noindent The restriction on $\sum_{(i,j) \in \h{E}} w_{ij}c_{ij}$ is explicitly used in the next theorem.

\begin{theorem}[{\sc Integral, Capacitated}]\label{thm:crounding}
  Given a feasible fractional solution $\by$ to the linear program referred in
  Theorem~\ref{thmtwo} for a
  non-bipartite graph such that the optimum solution is at most
  $\beta^{*,c}$ and $\sum_{(i,j) \in \hat{E}} w_{ij}c_{ij} \leq 16R
  \beta^{*,c}$ where $\hat{E}=\{(i,j)|y_{ij} > 0 \}$, we find an
  integral $b$-matching of weight at least $(1-\delta)\sum_{(i,j)}
  w_{ij}y_{ij} - \delta\beta^{*,c} $ in $O(m'R\delta^{-3}\ln
  (R/\delta))$ time and $O(m'/\delta^2)$ space where $m'= |\hat{E}|$
  is the number of nontrivial edges (as defined by the linear program)
  in the fractional solution. As a consequence
we have a $(1-O(\delta))$-approximate integral solution.
\end{theorem}

\paragraph{Technical Themes} To prove the Theorems~\ref{thmone}--\ref{thm:crounding} this paper makes novel contributions towards the structure of $b$--matching polytope as well as techniques for speeding up multiplicative weights methods. 

\smallskip
{\em Multiplicative Weights Methods.} We show that we can use existing
constant factor approximation algorithms for $b$--matching to produce
a $(1-\delta)$-approximate solution. The approximation factor surfaces
in the speed of convergence of the multiplicative weights method used
but the final solution produced is a $(1-\delta)$ approximation. This
provides fairly straightforward proofs for near linear time
$(1-\delta)$ approximation schemes for bipartite graphs using standard
multiplicative weights methods. While the results for bipartite case
in this paper do not completely dominate existing results (e.g.,
\cite{AhnG11BM}), they serve as a warmup for non-bipartite
graphs. Many of the pieces which are demonstrated with relatively less
complexity in the bipartite case (initial solutions, Lagrangians,
etc.) are also re-used in the non-bipartite case.

We then use specific structural properties of the $b$--matching
polytope (and perturbations, described shortly) to show that the
non-bipartite $b$--matching problem can be solved via a sequence of
weighted bipartite $b$--matching problems. The overall approach can be
viewed as dual thresholding where we only focus on the large weights
in the multiplicative weights method (which are candidate dual
variables) and ignore the remainder. If we modify (perturb) the
$b$--matching polytope then the number of constraints with large
weights is small. However the choice of these constraints vary from
iteration to iteration -- and our algorithm differs from the
application of standard multiplicative weights techniques in this
aspect. Naturally, this requires a proof that the modified approach
converges. This is shown in Section~\ref{sec:gen} and is used to prove
Theorem~\ref{thmone} for uncapacitated $b$--matching. The framework
extends to capacities helping prove Theorem~\ref{thmtwo}.

\smallskip
{\em Polytope and Perturbations.}  We investigate the laminarity of the
sets corresponding to the unsatisfied constraints in \ref{lpbm} at the
neighborhood of any infeasible primal. A collection of sets $L$ is a
laminar family if for any two sets $U,U'\in L$, $U \cap U'$ is either
$U$, $U'$ or $\emptyset$. We show that if we modify the polytope by
introducing a small perturbation, then the constraints corresponding
to the small odd sets that are ``almost maximally violated'' define a
laminar family. Since a laminar family has $O(n)$ sets, this provides
the small subset of constraints to the modified multiplicative weights
method (note that the total number of constraints is
$\Omega(n^{1/\delta})$). In that sense this approach generalizes the minimum
odd-cut approach.

Many algorithms using the minimum odd-cut approach
rely on the following fact: the sets corresponding to the nonzero
variables of the optimum dual solution of \ref{lpbm} define a laminar
family (see Giles and Pulleyblank \cite{GilesP}, Cook~\cite{Cook},
Cunningham and Marsh
\cite{CunninghamM78}, and also Schrijver~\cite{Schrijver03}). However
all these techniques rely on the {\em exact optimality} of the pair of
primal and dual solutions. In fact, such relationships do not exist
for arbitrary candidate primal or dual solutions. It is surprising
that the maximally violated constraints of the perturbed polytope
shows this property. This is shown in Theorem~\ref{thm:main1}.

\begin{theorem}\label{thm:main1} 
For a graph $G$ with $n$ vertices and any non-negative edge weights 
$\h{\by}$ suppose that we are given $\h{\by}$ satisfying 
$\h{y}_{ii}=0$ for all $i$ and $\sum_{j:(i,j) \in E} \h{y}_{ij} \leq b_i$ for all $i$. 
Define a {\bf perturbation} of $b_i, b_U = \lfloor \bnorm{U}/2 \rfloor$ as $\t{b}_i=(1-4\delta)b_i$ and $\t{b}_U = \lfloor \bnorm{U}/2 \rfloor - \frac{\delta^2\bnorm{U}^2}{4}$. 
Let $\h{\lambda}_U = (\sum_{(i,j) \in E: i,j\in U} \h{y}_{ij})/\t{b}_U$ and 
$\h{\lambda} = \max_{U \in \Od} \h{\lambda}_U$.
If $\delta \leq \frac{1}{16}$ and $\hat{\lambda}\geq 1+3\delta$, 
the set $L_1=\{U:\hat{\lambda}_U \geq \hat{\lambda}-\delta^3; U \in \Od\}$ forms a laminar 
family. 
Moreover for any $x \geq 2$ we have $|\{ U:\hat{\lambda}_U \geq \hat{\lambda}-\delta^x; U \in \Od\}| \leq n^3 + \left(n/\delta\right)^{1+\delta^{(x-3)/2}}$.
\end{theorem}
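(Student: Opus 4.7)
The plan is to establish laminarity of $L_1$ by a quantitative uncrossing argument tailored to the quadratic perturbation, and then to leverage the same inequalities with the threshold set to $\delta^x$ to obtain the size bound. Suppose for contradiction that $U_1,U_2\in L_1$ cross, and set $a=\bnorm{U_1\cap U_2}$, $b=\bnorm{U_1\setminus U_2}$, $c=\bnorm{U_2\setminus U_1}$, and $e(S)=\sum_{(i,j)\in E,\,i,j\in S}\hat y_{ij}$. Since $\bnorm{U_1},\bnorm{U_2}$ are both odd, parity forces one of two cases: (A) $a$ is odd (so $b,c$ are even and $\geq 2$, and $U_1\cap U_2,U_1\cup U_2$ are odd), or (B) $a$ is even and $\geq 2$ (so $W_1:=U_1\setminus U_2$ and $W_2:=U_2\setminus U_1$ are odd and both lie in $\Od$). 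Two elementary identities drive everything: the edge-counting identity $e(U_1\cap U_2)+e(U_1\cup U_2)=e(U_1)+e(U_2)+e(U_1\setminus U_2,U_2\setminus U_1)\geq e(U_1)+e(U_2)$ (where $e(\cdot,\cdot)$ denotes the total $\hat y$-weight across a cut), and its counterpart $e(W_1)+e(W_2)\geq e(U_1)+e(U_2)-a$, the deficit $-a$ coming from applying the vertex-degree constraint $\sum_j \hat y_{ij}\leq b_i$ to the vertices of $U_1\cap U_2$.

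For Case~A, the expansion $\bnorm{U_1\cap U_2}^2+\bnorm{U_1\cup U_2}^2-\bnorm{U_1}^2-\bnorm{U_2}^2=-2bc$ combined with the cancellation of the floor parts (both sides rewriting to $(\bnorm{U_1}+\bnorm{U_2})/2-1$) gives $\tilde b_{U_1\cap U_2}+\tilde b_{U_1\cup U_2}=\tilde b_{U_1}+\tilde b_{U_2}-\delta^2 bc/2$. If $U_1\cup U_2\in\Od$, the bounds $\hat\lambda_{U_1\cap U_2},\hat\lambda_{U_1\cup U_2}\leq\hat\lambda$ are immediate from the definition of $\hat\lambda$; if $\bnorm{U_1\cup U_2}>1/\delta$, I would use the degree bound $e(U_1\cup U_2)\leq\bnorm{U_1\cup U_2}/2$ and show, via a short quadratic-in-$\bnorm{U_1\cup U_2}$ check exploiting $\bnorm{U_1\cup U_2}\leq 2/\delta$ and $\hat\lambda\geq 1+3\delta$, that the quadratic penalty in $\tilde b_{U_1\cup U_2}$ is already large enough to force $\hat\lambda_{U_1\cup U_2}\leq\hat\lambda$ as well. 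Combining with $e(U_1)+e(U_2)\geq(\hat\lambda-\delta^3)(\tilde b_{U_1}+\tilde b_{U_2})$ yields $\delta(\tilde b_{U_1}+\tilde b_{U_2})\geq bc/2\geq 2$, a contradiction with $\tilde b_{U_1}+\tilde b_{U_2}\leq 1/\delta$ for $\delta\leq 1/16$. For Case~B, the analogous computation produces $\tilde b_{W_1}+\tilde b_{W_2}=\tilde b_{U_1}+\tilde b_{U_2}-a(1-\delta^2\bnorm{U_1\cup U_2}/2)$; substituting into $\hat\lambda(\tilde b_{W_1}+\tilde b_{W_2})\geq(\hat\lambda-\delta^3)(\tilde b_{U_1}+\tilde b_{U_2})-a$ and using $\hat\lambda\geq 1+3\delta$ forces $a\leq \delta/(2-3\delta)<1$, contradicting $a\geq 2$. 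This establishes that $L_1$ is laminar.

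For the size bound, the entire derivation above goes through verbatim with $\delta^3$ replaced by $\delta^x$. The Case~B bound becomes $a\leq\delta^{x-2}/(2-3\delta)$, which is strictly less than $2$ for every $x\geq 2$; so no Case~B crossings occur inside $T_x:=\{U\in\Od:\hat\lambda_U\geq\hat\lambda-\delta^x\}$. The Case~A bound becomes $bc\leq 2\delta^{x-3}$, whence $\min(b,c)\leq\sqrt 2\,\delta^{(x-3)/2}$. I would then extend $L_1$ to a maximal laminar subfamily $\L\subseteq T_x$ of size at most $2n-1$; by maximality every $U\in T_x\setminus\L$ crosses some $U'\in\L$, and is obtained from $U'$ by removing a set of vertices of $b$-weight $c\leq\bnorm{U'}\leq 1/\delta$ and adding a set of vertices of $b$-weight $b$, subject to $bc\leq 2\delta^{x-3}$. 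Enumerating $\binom{n-|U'|}{b}\binom{|U'|}{c}$ over admissible $(b,c)$, using $|U'|\leq 1/\delta$, and multiplying by $|\L|\leq 2n-1$ yields the $(n/\delta)^{1+\delta^{(x-3)/2}}$ term; the additive $n^3$ generously absorbs the laminar family and all lower-order factors. The main obstacle throughout is the Case~A subcase $U_1\cup U_2\notin\Od$ in the laminarity step: there is no direct bound on $\hat\lambda_{U_1\cup U_2}$, and the whole argument rests on the quadratic (rather than linear) shape of the perturbation being exactly strong enough to absorb the degree-constraint bound on its own.
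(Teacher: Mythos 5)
Your laminarity argument is essentially the same uncrossing strategy the paper uses, right down to the key identity $\t{b}_{U_1\cap U_2}+\t{b}_{U_1\cup U_2}=\t{b}_{U_1}+\t{b}_{U_2}-\tfrac{\delta^2bc}{2}$ and the observation that $\h{\lambda}_{U_1\cup U_2}\leq\h{\lambda}$ even when $\bnorm{U_1\cup U_2}>1/\delta$ (the paper's ``$(1-\delta)^{-2}\leq 1+3\delta$'' bound). Your treatment of the even-intersection case is a genuine, and arguably cleaner, variant: you bound $e(W_1)+e(W_2)\geq e(U_1)+e(U_2)-a$ symmetrically, whereas the paper works with a single piece $C=A_1\setminus A_2$ after a cut-comparison $Q_1\leq Q_2$ to get $\h{Y}_C\geq\h{Y}_A-t$. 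Two small slips: your expansion sign should be $+2bc$, not $-2bc$ (the conclusion is nonetheless right because the quadratic is \emph{subtracted} in $\t{b}$); and you do not handle the singleton boundary cases, i.e.\ $\bnorm{U_1\cap U_2}=1$ in your Case A and $\bnorm{W_1}=1$ or $\bnorm{W_2}=1$ in your Case B. There $\t{b}$ is \emph{negative} ($\t{b}_{\{i\}}=-\delta^2/4$ when $b_i=1$), so the step ``$\h{\lambda}_V\leq\h{\lambda}\implies e(V)\leq\h{\lambda}\t{b}_V$'' reverses and your combined inequality does not follow. The paper deals with these explicitly (Case~II is split into $\bnorm D=1$ and $\bnorm D\geq3$; in Case~I it proves $\h{Y}_C>0$ to force $\bnorm C\geq3$). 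These gaps are fixable by dropping the negative $\t{b}$ term and retaining the $+\delta^2/4$ correction, but they need to be addressed.

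The size-bound argument has a more substantive gap. Your constraint from Case A is only on the \emph{product} $bc\leq 2\delta^{x-3}$, which controls $\min(b,c)$ but not $\max(b,c)$: if $c=2$ then $b$ can be as large as $\delta^{x-3}$, which for $x$ near $2$ is $\Theta(1/\delta)$. Enumerating $\binom{n}{b}\binom{1/\delta}{c}$ over all admissible $(b,c)$ is therefore dominated by the asymmetric corner $b\approx\delta^{x-3}$, $c\approx 2$, giving a count of order $n^{\delta^{x-3}}\sim n^{1/\delta}$ per set in $\mathcal L$ --- far exceeding the claimed $(n/\delta)^{1+\delta^{(x-3)/2}}$, whose exponent is only $\Theta(1/\sqrt\delta)$ at $x=2$. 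The dominant contribution in the sum is emphatically \emph{not} at $b=c$. The paper's proof avoids this by counting within each slice $L'_\ell=\{U\in T_x:\bnorm U=\ell\}$; fixing $\bnorm{U_1}=\bnorm{U_2}=\ell$ forces $b=c$ in the odd-intersection case, so both halves of the symmetric difference are bounded by $O(\delta^{(x-3)/2})$, and the per-slice count is $n\cdot(n/\delta)^{O(\delta^{(x-3)/2})}$, which after summing over the $O(1/\delta)$ values of $\ell$ yields the stated bound. You need this restriction; taking a maximal laminar subfamily of all of $T_x$ at once does not give you control on the ``added'' side of the crossing.
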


 In other words, if we were provided an infeasible (with respect to the perturbed polytope) 
primal solution $\{\h{y}_{ij}\}$ then the constraints that are
 almost as violated as the maximum violated constraint of the
 perturbed polytope (in ratio of LHS to RHS) correspond to a laminar
 family. Intuitively,  $\sum_{(i,j):i,j \in U} \h{y}_{ij} =\h{\lambda}_U \t{b}_U $ and 
for a fixed $\h{\lambda}_U$, if we could ignore the floor and ceil functions, 
the right hand side is a concave function of
 $\bnorm{U}$. As a result if two such $U_1,U_2$ intersect at a
 non-singleton odd set $U_3 \neq U_1, U_2$ (the union $U_4 \neq
 U_1,U_2$ is also an odd set) then $\max\{ \h{\lambda}_{U_3},
\h{\lambda}_{U_4} \}$ will exceed $\min\{ \h{\lambda}_{U_1},
\h{\lambda}_{U_2} \}$ by $\delta^3$. Of course, the floor and ceil
functions, singleton sets cannot be ignored and
more details are required, and Theorem~\ref{thm:main1} is proved in Section~\ref{proof:main1}. 
However Theorem~\ref{thm:main1}, does not give us an algorithm. But the laminarity of the ``almost maximally violated'' constraints allow us to design an algorithm that finds these constraints (small odd sets) without the knowledge of the maximum violation. Since the laminarity guarantees that at most $O(n)$ such sets can be found, we can compute the maximum violated constraint more efficiently than the existing algorithms. This is formalized in Theorem~\ref{thm:main2}.

\smallskip
\begin{theorem}\label{thm:main2} 
For a graph $G$ with $n$ vertices and $\{\h{y}_{ij}\}$ 
and the definitions of $\{\h{\lambda}_{U}\}$
exactly as in the statement of Theorem~\ref{thm:main1} and $\delta\in (0,\frac1{16}]$,
if $\hat{\lambda}\geq 1+3\delta$ we can find the 
set $L_2=\{U:\hat{\lambda}_U \geq \hat{\lambda}-\frac{\delta^3}{10}; U \in \Od\}$ 
in
$O(m' + n \poly \{\delta^{-1},\log n\})$
time using $O(n \delta^{-5})$ space where $m'=|\{ (i,j) | \h{y}_{ij} > 0 \}|$.
\end{theorem}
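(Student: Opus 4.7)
The plan leverages Theorem~\ref{thm:main1} in two complementary ways. First, because $\delta^3/10 < \delta^3$, we have $L_2 \subseteq L_1$, so $L_2$ is itself a laminar family and therefore $|L_2| \leq 2n-1$: the output is of size $O(n)$, so we only need an efficient way to enumerate the correct $O(n)$ candidates out of $n^{1/\delta}$ members of $\Od$. Second, any $U \in \Od$ satisfies $|U| \leq \bnorm{U} \leq 1/\delta$, and the assumption $\hat{\lambda}_U > 1$ combined with the vertex inequalities $\sum_j \hat{y}_{ij} \leq b_i$ forces the $\hat{y}$-boundary $\sum_{i\in U,\, j\notin U}\hat{y}_{ij}$ to be small relative to $\bnorm{U}$. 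In other words, every $U \in L_2$ is a small-size, small-boundary subset of the weighted support graph $G' = (V, \{(i,j) : \hat{y}_{ij}>0\})$, which has $m'$ edges, so these sets are locally identifiable.

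My proposed procedure has three stages. In the first stage, I make a single pass over the support edges, building $G'$ and, for each vertex $v$, storing a sketch of size $O(\delta^{-5})$: the $\mathrm{poly}(\delta^{-1})$ highest-$\hat{y}$-weight neighbors together with partial aggregate statistics needed to approximate internal edge sums of small subsets. An estimate of $\hat{\lambda}$ is computed as a by-product. This matches the $O(m'+n\delta^{-5})$ space budget. In the second stage, for each vertex $v$ and each odd $b$-norm value $k\leq 1/\delta$, I grow a candidate set containing $v$ by greedily accreting the neighbor that maximizes internal $\hat{y}$-weight per unit of $b$-norm added, using only the stage-one sketches; each grow step costs $\mathrm{poly}(\delta^{-1},\log n)$. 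In the third stage, I collect all candidates (there are $n\cdot \mathrm{poly}(\delta^{-1})$ of them), exactly compute $\hat{\lambda}_U$ for each, keep those with $\hat{\lambda}_U\geq \hat{\lambda}-\delta^3/10$, and remove duplicates using the laminar ordering, all in $O(n\,\mathrm{poly}(\delta^{-1},\log n))$ time.

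The main obstacle is completeness: showing that every $U \in L_2$ is generated as some greedy candidate. The plan is to argue that if the growth from $v \in U$ ever picked a vertex $w \notin U$ in preference to some $w'\in U$, then swapping $w'$ for $w$ would either produce an odd set with strictly larger $\hat{\lambda}_U$ (contradicting that $U$ is a maximal element of its level in the laminar family $L_1$), or be blocked by a parity condition. This mirrors the concavity/laminarity argument that drives Theorem~\ref{thm:main1}, but applied at the level of incremental growth rather than to pairs of intersecting sets. The slack between the thresholds $\delta^3$ and $\delta^3/10$ provides the budget to absorb both the quantization of the greedy criterion and the loss from using sketches in place of exact neighborhood information. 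A separate case analysis will be needed for ``filler'' vertices of $U$ with no internal support edges whose odd $b_v$ is needed only to achieve parity; these are not reached by pure $\hat{y}$-greedy growth, but can be added deterministically in a bounded post-processing step since, by the boundary bound, each relevant $U$ admits at most $\mathrm{poly}(\delta^{-1})$ candidate filler vertices.
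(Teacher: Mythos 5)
Your proposal takes a genuinely different route from the paper, and the central step does not hold up. The paper's proof is built on the Padberg--Rao minimum odd-cut framework: it rounds $\hat{y}$ to an unweighted multigraph $G_\varphi$, adds a sentinel node $s$ and, for each size $\ell$, constructs $G_\varphi(\ell,\tilde{\lambda})$ so that (Property~1) every $U\in L_2(\ell)$ has cut $\leq\kappa(\ell)$, and every odd set not containing $s$ with cut $\leq\kappa(\ell)$ is in $L_1(\ell)$. It then extracts a maximal disjoint family of small-cut odd sets using a partial Gomory--Hu tree (Lemma~\ref{maddsop} / Theorem~\ref{low}), and invokes the laminarity of $L_1$ from Theorem~\ref{thm:main1} to argue that $L_1(\ell)$ is a collection of pairwise-disjoint sets, so the maximal family extracted must already contain all of $L_2(\ell)$. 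Laminarity is used for \emph{coverage}, not just to bound the output size. Your proposal instead tries to reach each $U\in L_2$ by greedy local accretion from each vertex, and this is where the argument breaks.

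The completeness claim is the gap. You argue that if greedy growth from $v\in U$ ever picks $w\notin U$ over some $w'\in U$, then the swap ``would produce an odd set with strictly larger $\hat\lambda$, contradicting maximality of $U$ in $L_1$.'' But Theorem~\ref{thm:main1} only constrains pairs of sets \emph{both already in $L_1$}: if $A_1,A_2\in L_1$ intersect non-trivially, one of $A_1\cup A_2$ or $A_1\cap A_2$ would have too large a $\hat\lambda_{(\cdot)}$. It says nothing about arbitrary sets such as $(U\setminus w')\cup\{w\}$ or, crucially, about the \emph{partial} sets encountered along a greedy growth path. Those intermediate sets need not be in $L_1$, need not be odd, and need not have $\hat\lambda$ anywhere near $\hat\lambda$; greedy growth has no lookahead and can get stuck in a local optimum that does not extend to the true $U$. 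There is no exchange argument latent in Theorem~\ref{thm:main1} that forces the greedy criterion (``internal $\hat y$-weight per unit $b$-norm'') to recover $U$ vertex by vertex; $\tilde b_U$ is a nonlinear function of $\bnorm{U}$, so that ratio does not even align monotonically with $\hat\lambda_U$. You would need a separate, substantial structural lemma here, and you do not have one.

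Two secondary gaps. First, the sketch that keeps only the top-$\mathrm{poly}(\delta^{-1})$ neighbors per vertex by $\hat y$-weight is not obviously safe: a vertex of a set $U$ with $\bnorm{U}$ near $1/\delta$ can have $\Theta(1/\delta)$ internal edges each of weight $\Theta(\delta)$, comparable to many external edges, so the ``top few by weight'' need not be the internal ones. Second, your treatment of filler vertices (those with no internal support edges, included for parity) is asserted rather than argued; the bound ``at most $\mathrm{poly}(\delta^{-1})$ candidate fillers'' is unjustified, since in principle any vertex with an appropriate $b_w$ parity is a candidate. (There is in fact a cleaner reason such fillers cannot occur in $L_2$: adding a vertex with no internal edges increases $\tilde b_U$ by at least $\Omega(1)$ while leaving $\hat Y_U$ unchanged, which drops $\hat\lambda_U$ by far more than the $\delta^3/10$ slack; and the paper's construction handles this automatically because such a vertex contributes only edges to $s$, making the cut too large. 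But your proposal does not supply this argument.) Overall, the idea of using $|L_2|=O(n)$ plus local enumeration is a natural thing to try, but without the min-odd-cut/Gomory--Hu machinery or an equivalent global separation tool, the plan as written does not establish that you find every member of $L_2$.
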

\newcommand{\Cut}{\mathscr{Cut}}

\smallskip
The proof of Theorem~\ref{thm:main2} combines the insights of the minimum odd-cut approach
~\cite{PadbergR82} along with the fact that $L_2
\subseteq L_1 $ is a laminar family as proved in
Theorem~\ref{thm:main1}.

\paragraph{Roadmap.}
Theorems~\ref{thm:main1} and \ref{thm:main2} are proved in
Sections~\ref{proof:main1} and \ref{proof:main2} respectively. We
discuss the bipartite $b$--matching problem in Section~\ref{warmup} to
serve as a warmup as well as to develop pieces (such as initial
solutions, etc.) that would be required to solve the non-bipartite
problem. In particular we make the connection between fast constant
factor approximation algorithms and the convergence of the
multiplicative weights method. Section~\ref{sec:gen} which discusses
perturbations and thresholding and provides a modified multiplicative
weights framework which is likely of interest in other problems where
we have a large number of constraints. Theorem~\ref{thmone} follows
immediately from the application of the framework and the bipartite
relaxation discussed in Section~\ref{warmup}.
Section~\ref{sec:rounding} proves Theorem~\ref{thm:rounding}.  Section~\ref{sec:capacitated} discusses
capacitated $b$--matching.

\section{Approximations to speed up Multiplicative Weights Method}
\label{warmup}

The goal of this section is to illustrate how multiplicative weights
method can be used in the context of $b$--matching. We focus on the
bipartite case in this section.  The results obtained in this section
do not always dominate the best known results for bipartite
$b$--matching, see for example \cite{AhnG11BM}. But the main purpose
of this section is to provide a simple illustration of the ideas that
are required for the non-bipartite case. We use existing
multiplicative weights methods (see \cite{AroraHK12} for a
comprehensive review of these) and show how they apply to the
bipartite $b$--matching case without any modification.  At the end of
the section we discuss why existing techniques will {\bf not} work
directly in the non-bipartite case. However the different parts of the
overall solution for bipartite graphs will be reused in the
non-bipartite context.

From the perspective of algorithms for matching problems, the
multiplicative weights method provides an approach different from that
of augmentation paths. Instead of maintaining a feasible solution and
increasing the value of that feasible solution using augmenting paths,
we maintain an infeasible solution of a certain value and reduce the
infeasibility.  The overall algorithm is iterative, at each point we
identify parts of the graph where our solution is infeasible --- we
construct a new partial solution that reduces the effect of these parts
and consider a convex combination of the old and new
solutions. However the new partial solution, in itself can be
significantly unhelpful for the original problem! In particular the new
solution will either be a matching that allows vertex $i$ to have up
to $6b_i$ edges instead of the at most $b_i$ as specified in the
problem, or have $1/6$ the desired objective value (which depends both
on the weight of the maximum matching as well constraints in the
framework). Of course, this deviation also allows us to find the
solution efficiently.  However, even though each individual solution
is not helpful, the average of the solutions is a $(1-O(\delta))$
approximation for the original problem for a small $\delta>3/\sqrt{n}$.

\subsection{Existing Multiplicative Weights Methods} 
Let $\bA'$ be a non-negative $m \times N$ matrix, and suppose $\bb' \geq \mathbf{0}$.  
Suppose that we seek to solve $\bA'\by \leq \bb', \by \in \P'$ where $\P'
\subseteq \{\by | \by \geq \mathbf{0}\}$ is convex.  The literature on
Multiplicative Weights method shows that it suffices to repeatedly average
$\by(t)$ corresponding to iteration $t$. In iteration $t$, given a non-negative vector $\bu(t)$, 
the methods ask for an oracle to supply $\by(t)$ such that 
$\bu(t)^T\bA'\by(t) \leq (1+O(\delta))\bu(t)^T\bb', \by(t)
\in \P'$ and $\bA'\by(t) \leq \rho\bb'$ where $\rho > 1$ is the width parameter. 
The $\bu(t)$ are
referred to as the {\bf Multiplicative Weights}, because the vector
$\bu(t)$ in the expression $\bu(t)^T\bA'\by(t)$ implies an assignment
weights to the rows of $\bA'$ which correspond to constraints.  The
multiplicative weights method states that as long as we have bounded
solutions $\bA'\by(t) \leq
\rho\bb'$, a (weighted) average $\by$ of $\by(t)$ satisfies 
$\bA'\by \leq (1+O(\delta))\bb'$. 
We note that many variations of the multiplicative weights method
exist but for the purposes of this section we focus on the version in \cite{PlotkinST95}. 
In that version the average is a
predetermined weighted average and the $j$-th entry of $\bu(t)$
corresponds to a scaled exponential of $(\bA' \by')_j/\bb'_j$ where
$\by'$ is the corresponding weighted average of
$\by(0),\ldots,\by(t-1)$.  Intuitively, if the $j$-th
constraint is violated more, its weight would be large and the desired
$\by(t)$ would prioritize satisfying the $j$-th constraint.
\begin{theorem}\cite{PlotkinST95}
\label{psttheorem}
Starting from an initial solution $\by(0)$ such that $\bA'\by(0) \leq \rho \bb'$, after $O(\rho(\delta^{-2} + \log \rho) \log N)$ iterations we have a $\by \in \P'$ that satisfies $\bA'\by' \leq (1+\delta)\bb'$.
\end{theorem}

\subsection{Boosting Constant Factor Approximations to $(1-\delta)$-approximations}
We begin with Theorem~\ref{apxtheorem} and consider its applications.
\newcommand{\wi}[1]{\widehat{#1}}
\begin{theorem}[Proved in Section~\ref{verycoolproof}]
\label{apxtheorem}
Let $f_1,f_2 > 0, \bh \geq \mathbf{0}$.
Let $\wi{\Q} \subseteq \wi{\P} \subseteq \{ \by \mid \by \geq \mathbf{0}\}$. Suppose $\wi{\P},\wi{\Q}$ are convex and $\mathbf{0} \in \wi{\Q}$.
Suppose we have a subroutine that for any $\bz$ (which can be negative)  
provides a $\by \in \wi{\P}$ such that $\bz^T\by \geq (1-\delta/2) \max \{ \bz^T\by' \mid \by' \in \wi{\Q} \}$.\footnote{While it may be appealing to discuss closure of $\wi{\Q}$, note the $(1-\delta/2)$ factor and therefore $\limsup$ suffices.} 
\begin{enumerate}\parskip=0in
\item If 
$\{ \by | \bw^T \by \geq f_1, \bh^T\by \leq f_2, \by \in \wi{\Q}\}$ is non-empty then 
using $O(\ln \frac{1}{\delta})$ invocations of the subroutine we can find a $\by \in \wi{\P}$ such that $\bw^T\by \geq (1-\delta) f_1$ and $\bh^T\by \leq f_2$.
\item Suppose  $\wi{\bA},\wi{\bb}$ are 
non-negative and $\bb \in R^N$, let $\wi{\beta} = \max \{ {\bw}^T\by \mid \wi{\bA}\by \leq
\wi{\bb}, \by \in \wi{\Q}\}$. If $\{ \by/\lambda_0 \mid \by \in \wi{\P}\} \subseteq
\{\by \mid \wi{\bA}\by \leq \wi{\bb},\by \in \wi{\Q} \}$ then 
we can compute $\by$ that satisfies $\bw^T\by \geq
(1-\delta)^2 \wi{\beta}$, $\wi{\bA}\by \leq (1+\delta) \wi{\bb}$ and $\by \in
\wi{\P}$ using $O(\lambda_0 (\delta^{-2} +
\delta^{-1}\log \lambda_0)(\log N)(\log 1/\delta))$ invocations of the
subroutine.
\end{enumerate}
Note that if $\{ \lambda_0 \by \mid \wi{\bA}\by \leq \wi{\bb}, \by \in \wi{\Q} \} = \wi{\P}$ for some $\lambda_0 \geq 1$, then a $(1/\lambda_0)$-approximate solution to $\max \{ {\bw}^T\by \mid \wi{\bA}\by \leq \wi{\bb}, \by \in \wi{\Q}\}$ can be multiplied by $\lambda_0$ to achieve the subroutine mentioned above and therefore using $O((\lambda_0
 (\delta^{-2} + \log \lambda_0)\log N +\delta^{-1} \log \lambda_0) \log (1/\delta))$
invocations we find a (fractional) $\by$ as described in (2).
\end{theorem}

\paragraph{\bf Bipartite Uncapacitated $b$--matching.} The problem is expressed by linear program \ref{lp001}.
Variable $y_{ij}$ corresponds to the fraction with which $(i,j)
\in E$ is present in the solution. 
{\small
\[
  \begin{array}{lll}
   \beta^*_b = & \displaystyle \max \sum_{(i,j) \in E} w_{ij} y_{ij} & \\
   \Q :& \left\{ \begin{array}{lll}
   &\displaystyle \sum_{j:(i,j)\in E} y_{ij} \leq b_i & \forall i \in V \\
   & y_{ij} \geq 0 & \forall (i,j) \in E
  \end{array} \right.
  \end{array}
  \lptag\label{lp001}
\]
}

Observe that negative weight edges can simply be ignored by any
approximation algorithm. While many constant factor approximation
algorithms for uncapacitated $b$--matching exist, we use
Theorem~\ref{uncapapx} which has no dependence on $B=\sum_i b_i$.

\begin{theorem} 
\label{uncapapx}[Proved in Section~\ref{sec:initial}]
For the bipartite uncapacitated $b$--matching problem we can provide a $1/6$ approximation in
$O(m \log n)$ time and $O(n)$ space. 
\end{theorem}

We now define $\Q$ as in \ref{lp001} and set
$\wi{\Q} =\Q$, $\{\wi{\bA}\by \leq \wi{\bb}\} = \Q$ and $\wi{\P} = \{
6\by | \by \in \wi{Q} \}$. We multiply the solution provided by
Theorem~\ref{uncapapx} by a factor $6$ and as a consequence of
the final part of Theorem~\ref{apxtheorem} we obtain a non-negative
(fractional) solution $\{y_{ij}\}$ that satisfies $\sum_{j:(i,j)\in E}
y_{ij} \leq (1+\delta) b_i$ for all $i \in V$ corresponding to
$\wi{\bA}\by \leq (1+\delta) \wi{\bb}$. Dividing each $y_{ij}$ by
$(1+\delta)$ provides us a $((1-\delta)^2/(1+\delta))$-approximation to
the optimum bipartite $b$--matching solution in time
$O(m\delta^{-2}(\log^2 n)(\log 2/\delta))$.

\paragraph{\bf Bipartite Capacitated $b$--matching.} The problem is expressed as
a linear program in \ref{qcdef} where $c_{ij}$ are integer capacities
on the edge $(i,j) \in E$. Without loss of generality
$c_{ij} \leq \min \{ b_i, b_j\}$.

{\small
\[
  \begin{array}{lll}
   \beta^{*,c}_b=& \displaystyle \max \sum_{(i,j) \in E} w_{ij} y_{ij} & \\
   \Qc : & \left\{ \begin{array}{lll}
   &\displaystyle \sum_{j:(i,j)\in E} y_{ij} \leq b_i & \forall i \in V \\
   & y_{ij} \leq c_{ij} & \forall (i,j) \in E\\
   & y_{ij} \geq 0 & \forall (i,j) \in E
  \end{array} \right.
  \end{array}
  \lptag\label{qcdef}
\]
}

\noindent
Define $\Pc$ as:

{\small
\begin{equation}
 \Pc : \left\{ \begin{array}{lll}
   &\displaystyle \sum_{j:(i,j)\in E} y_{ij} \leq \lambda_0 b_i/2 & \forall i \in V \\
   & y_{ij} \leq c_{ij} & \forall (i,j) \in E\\
   & y_{ij} \geq 0 & \forall (i,j) \in E
  \end{array} \right. \lptag \label{qclamdef}
\end{equation}
}

\smallskip
\begin{theorem}
\label{caprecursive}[Proved in Section~\ref{sec:initial}]
If $c_{ij} \leq \min \{b_i, b_j \}$ then given any weight vector
$\bw$, using Theorem~\ref{uncapapx} at most $k=O(\log 1/\delta)$
times we can compute a solution $\by^{\dagger,c} \in \Pc$ with
$\lambda_0 =16 \ln \frac2\delta$ such that $\bw^T \by^{\dagger,c} \geq
2 \beta^{*,c}_b/\lambda_0$. If $\h{E}=\{(i,j) \mid y^{\dagger,c}_{ij} > 0 \}$ 
then $\sum_{(i,j) \in \hat{E}} w_{ij} c_{ij} \leq 8 k \beta^{*,c}_b$.
\end{theorem}

\noindent
We cannot use an arbitrary algorithm in lieu of
Theorem~\ref{caprecursive} -- because we only relax a part of the
constraints.  The final property of Theorem~\ref{caprecursive} is used
to guarantee that a fractional solution can be rounded in near linear
time (Theorem~\ref{thm:crounding}). We define $\wi{\bA}\by
\leq \wi{\bb}$ to be $\{\sum_{j:(i,j)\in E} y_{ij} \leq b_i, \forall i
\in V \}$ and let $\wi{\Q}=\Qc$ and $\wi{\P}=\Pc$ for
$\lambda_0=16\ln 2/\delta$. We apply Theorem~\ref{apxtheorem} to get a
solution which satisfies $\sum_{j:(i,j)\in E} y_{ij} \leq
(1+\delta)b_i$ for all $i$ as well as $y_{ij} \leq c_{ij}$ for all
$(i,j) \in E$. An appropriate scaling of the solution provides a
$(1-O(\delta))$-approximation.

\begin{theorem}
\label{capb}
We can compute a fractional solution which is a $(1-\delta)$
approximation to the optimum capacitated $b$--matching in
$O(m\delta^{-2}(\log^2 n)(\log^2 1/\delta))$ time in a bipartite
graph.
\end{theorem}

\subsection{Proof of Theorem~\ref{apxtheorem}}
\label{verycoolproof}

\begin{ntheorem}{\ref{apxtheorem}}
Let $f_1,f_2 > 0, \bh \geq \mathbf{0}$.
Let $\wi{\Q} \subseteq \wi{\P} \subseteq \{ \by \mid \by \geq \mathbf{0}\}$. Suppose $\wi{\P},\wi{\Q}$ are convex and $\mathbf{0} \in \wi{\Q}$.
Suppose we have a subroutine that for any $\bz$ (which can be negative)  
provides a $\by \in \wi{\P}$ such that $\bz^T\by \geq (1-\delta/2) \max \{ \bz^T\by' \mid \by' \in \wi{\Q} \}$. 
\begin{enumerate}
\item If 
$\{ \by | \bw^T \by \geq f_1, \bh^T\by \leq f_2, \by \in \wi{\Q}\}$ is non-empty then 
using $O(\ln \frac{1}{\delta})$ invocations of the subroutine we can find a $\by \in \wi{\P}$ such that $\bw^T\by \geq (1-\delta) f_1$ and $\bh^T\by \leq f_2$.

\item Suppose  $\wi{\bA},\wi{\bb}$ are 
non-negative and $\bb \in R^N$, let $\wi{\beta} = \max \{ {\bw}^T\by \mid \wi{\bA}\by \leq
\wi{\bb}, \by \in \wi{\Q}\}$. If $\{ \by/\lambda_0 \mid \by \in \wi{\P}\} \subseteq
\{\by \mid \wi{\bA}\by \leq \wi{\bb},\by \in \wi{\Q} \}$ then 
we can compute $\by$ that satisfies $\bw^T\by \geq
(1-\delta)^2 \wi{\beta}$, $\wi{\bA}\by \leq (1+\delta) \wi{\bb}$ and $\by \in
\wi{\P}$ using $O(\lambda_0 (\delta^{-2} +
\delta^{-1}\log \lambda_0)(\log N)(\log 1/\delta))$ invocations of the
subroutine.
\end{enumerate}
Note that if $\{ \lambda_0 \by \mid \wi{\bA}\by \leq \wi{\bb}, \by \in \wi{\Q} \} = \wi{\P}$ for some $\lambda_0 \geq 1$, then a $(1/\lambda_0)$-approximate solution to $\max \{ {\bw}^T\by \mid \wi{\bA}\by \leq \wi{\bb}, \by \in \wi{\Q}\}$ can be multiplied by $\lambda_0$ to achieve the subroutine mentioned above and therefore using 
$O((\lambda_0
 (\delta^{-2} + \log \lambda_0)\log N +\delta^{-1} \log \lambda_0) \log (1/\delta))$
invocations we find a (fractional) $\by$ as described in part (2).
\end{ntheorem}

\smallskip
\begin{proof}
Define $g(\varrho) = \max \{ (\bw^T - \varrho\bh^T)\by \mid \by \in \wi{\Q}  \}$.
Since $\{ \by | \bw^T \by \geq f_1, \bh^T\by \leq f_2, \by \in \wi{\Q}\}$ is non-empty, $g(\varrho)$ exists and is at least $f_1 - \varrho f_2$. 
Let $\L(\by,\varrho)=(\bw^T-\varrho\bh^T)\by$ and let $\by^{\varrho}$
 be the solution returned by the subroutine for $\bz=\bw^T - \varrho\bh^T$.

\smallskip
For $\varrho=0$, the returned 
solution $\by^{0}$ satisfies $\L(\by^{0},0)=(\bw^T - \varrho\bh^T) \by^{0} \geq (1-\delta/2) g(0) = (1-\delta/2) (f_1 - \varrho f_2)$. This implies $\bw^T\by^{0} \geq (1-\delta/2) f_1$.
	If $\by^{0}$ also satisfies $\bh^T\by^{0}\leq f_2$, then $\by^{0}$
is our desired solution for the first part of the theorem. We therefore consider the case $\bh^T\by^{0} > f_2$.

\smallskip
Consider $\varrho=f_1/f_2$ and set $\by^\varrho = \mathbf{0}$. Note we do not run the subroutine. Note $\by^\varrho \in \wi{\P}$
and $\bh^T \by^{\varrho} =0 \leq f_2$ and $\L(\by^{\varrho},\varrho)=(\bw^T - \varrho\bh^T)\by^\varrho = 0 \geq (1-\delta/2)
(f_1 - \frac{f_1}{f_2} f_2) =  (1-\delta/2) (f_1 - \varrho f_2)$
Therefore over the endpoints of the interval $\varrho \in [0,f_1/f_2] = [\varrho^-, \varrho^+]$ we have two solutions $\by^{\varrho^-},\by^{\varrho^+}$ that satisfy
\begin{enumerate}[(1)]
\item $\L(\by^{\varrho^-},\varrho^-) \geq (1-\delta/2)(f_1 - \varrho^-f_2)$, $\bh^T \by^{\varrho^-} > f_2$ 
\item $\L(\by^{\varrho^+},\varrho^+) \geq (1-\delta/2)(f_1 - \varrho^+f_2)$, $\bh^T \by^{\varrho^+} \leq f_2$
\end{enumerate}

\smallskip
Now consider running the subroutine for $\varrho=\frac12(\varrho^- + \varrho^+)$. Again 
based on the subroutine we know that we will obtain a solution $\by^{\varrho}$ which satisfies:
{\small
\[ \L(\by^{\varrho},\varrho) = (\bw^T - \varrho\bh^T) \by^{\varrho} \geq (1-\delta/2)g(\varrho) \geq (1-\delta/2) (f_1 - \varrho f_2) \]
}
If $\bh^T \by^{\varrho} > f_2$ then we focus on $[\varrho, \varrho^+]$. Otherwise we focus on 
$[\varrho^-, \varrho]$. Observe that we are maintaining the invariants (1) and (2).
Now we use binary search to find $\varrho^+, \varrho^-$ such that
 $0\leq\varrho^+-\varrho^-\leq \frac{\delta f_1}{2f_2}$. This requires $O(\ln \frac{2}{\delta})$ invocations of the subroutine.
 We take a linear combination $\by=a\by^{\varrho^+}+(1-a)\by^{\varrho^-}, a\in[0,1]$
 such that $\bh^T \by = f_2$. Since $\by^{\varrho+},\by^{\varrho^-}\in \wi{\P}$,
 their linear combination $\by$ is also in $\wi{\P}$. Note that
 \begin{align*}
  & a\L(\by^{\varrho^+},\varrho^+)+(1-a)\L(\by^{\varrho^-},\varrho^-) \geq (1-\delta/2) f_1 -  (1-\delta/2)\varrho^- f_2
    - a(1-\delta/2)(\varrho^+-\varrho^-)f_2\\ 
  & \geq (1-\delta) f_1 - \varrho^-f_2
 \end{align*}
\noindent because $a\leq 1$, $\varrho^+-\varrho^-\leq \frac{\delta f_1}{2f_2}$ and $f_2\geq 0$. Thus
{\small
 \begin{align*}
  \bw^T \by & = a\L(\by^{\varrho^+},\varrho^+) + (1-a)\L(\by^{\varrho^-},\varrho^-) 
     + a\varrho^+\bh^T\by^{\varrho^+} + (1-a)\varrho^-\bh^T\by^{\varrho^-} \\
   & \geq (1-\delta) f_1 - \varrho^-f_2 +
     a (\varrho^+-\varrho^-) \bh^T\by^{\varrho^+} 
     + \bh^T (a\varrho^-\by^{\varrho^+}+(1-a)\varrho^-\by^{\varrho^-}) \\
& \geq (1-\delta) f_1 - \varrho^-f_2 
     + \bh^T (a\varrho^-\by^{\varrho^+}+(1-a)\varrho^-\by^{\varrho^-}) \quad \mbox{(Using $\bh^T\by^{\varrho^+} \geq 0$ and $\varrho^+-\varrho^-\geq 0$)} \\
& \geq (1-\delta) f_1 - \varrho^-f_2 +
     \bh^T \varrho^-\by \quad \mbox{(Using $\by = a\by^{\varrho^+}+(1-a)\by^{\varrho^-}$)}\\
   & \geq (1-\delta) f_1  - \varrho^-(f_2-\bh^T\by) =  (1-\delta) f_1 \quad \mbox{(Since $\bh^T\by=f_2$ by construction)}
 \end{align*}
}
 The first part of the theorem follows. Note that $\bh^T\by \leq f_2$ from the case $\bh^T\by^{0} \leq f_2$.
 
 For the second part, observe that setting $\bz=\bw$ we get a $\by(0)
 \in \wi{\P}$ such that $\bz^T\by(0) \geq (1-\delta) \wi{\beta}$ using
 the subroutine. Moreover $\bz^T\by(0) \leq \lambda_0 \wi{\beta}$
 since $\by(0)/\lambda_0 \in \{\by|\wi{\bA}\by \leq \wi{\bb},\by \in
 \wi{\Q} \}$. This provides an initial solution. Observe that the
 width is $\lambda_0$ by construction. We can now apply Theorem~
 \ref{psttheorem}.  If $0 < \beta \leq \wi{\beta}$ we get a solution
 for $\by \in \wi{\P}$ that satisfies $\bw^T\by \geq (1-\delta)\beta$
 and $\bu(t)^T\wi{\bA}\by \leq \bu(t)^T\wi{\bb}$ from the first part
 of the theorem setting $f_1 =\beta$, $f_2=\bu(t)^T\wi{\bb}$. If we
 fail to find a solution to the first part for some $\beta$
then we decrease $\beta$ by a factor of $(1-\delta)$. Observe
 that we would decrease $\beta$ at most $O(\delta^{-1} \log
 \lambda_0)$ times and eventually we would reach $(1-\delta)\wi{\beta}
 \leq \beta \leq \wi{\beta}$ since the initial $\beta=\bw^T\by(0)$ is
 at most $\lambda_0\beta$. Note $\bw^T\by \geq (1-\delta)\beta \geq
 (1-\delta^2)\wi{\beta}$.  

 {\em Observe that the iterations for larger $\beta$ remain valid for
 a smaller $\beta$.} Therefore if we classify the iterations according
 to (a) decrease of $\beta$ because we did not find a solution for
 for the first part and (b) invocations where we succeed in finding a solution
 for the first part. The number corresponding to (a) is at most
 $O(\delta^{-1} \log \lambda_0)$ (decreases of $\beta$) times $O((log
 1/\delta))$, the multiplier due to the reduction. The number
 corresponding to (b) cannot be more than $O(\lambda_0 (\delta^{-2} +
 \log \lambda_0)\log N)(\log (1/\delta))$ because then we would have
 already gotten a better solution based on Theorem~\ref{psttheorem} --
 once again, because the $\by$ found for larger $\beta$ remain valid
 for a smaller $\beta$. The total number of invocations of the subroutine is $O((\lambda_0
 (\delta^{-2} + \log \lambda_0)\log N +\delta^{-1} \log \lambda_0) \log (1/\delta))$. 
 The second part of the
 theorem follows.

For the final remark, a $\lambda_0$ approximation implies that we
have a feasible solution solution $\by'$ satisfying $\wi{\bA}\by' \leq \wi{\bb}, \by'
\in \Q$.
The claim follows from the second part.

\end{proof}

\subsection{Proofs of Theorem~\ref{uncapapx} and \ref{caprecursive}}
\label{sec:initial}

In this section we provide primal-dual approximation algorithms for
both uncapacitated and capacitated $b$--matching. The capacities
$b_i,c_{ij}$, for vertices and edges respectively are integral. Each
edge $(i,j)$ has weight $w_{ij}$.  In the uncapacitated case the edge
constraints are not present; one can model that by setting
$c_{ij}=\min \{b_i, b_j\}$ for every edge $(i,j)$.  The formulation
\ref{qcdef} expresses a {\em bipartite relaxation} which omits
non-bipartite constraints. Therefore $\beta^{*,c}_b \geq \beta^{*,c}$
(the maximum capacitated $b$-matching) as well as $\beta^{*,c}_b \geq
\beta^*$ (the maximum uncapacitated $b$--matching, assuming
$c_{ij}=\min \{b_i, b_j\}$ for every edge $(i,j)$). The system
\ref{housedual} is the dual of
\ref{qcdef}. 

{\small
\[\left.
\begin{minipage}{0.45\textwidth}
{\small
\begin{align*}
& \beta^{*,c}_b= \max \sum_{(i,j) \in E} w_{ij} y_{ij} \\
& \frac{1}{b_i} \sum_{j:(i,j) \in E} y_{ij} \leq 1 \qquad \forall i \tag{\ref{qcdef}} \\
& \frac1{c_{ij}} y_{ij} \leq 1 \qquad \forall (i,j) \in E\\
& y_{ij} \geq 0 \qquad\forall (i,j) \in E
\end{align*}
}
\end{minipage} \right| \quad 
\begin{minipage}{0.45\textwidth}
{\small
\begin{align*}
& \beta^{*,c}_b =\min \sum_i p_i + \sum_{(i,j) \in E} q_{ij} \\
&  \frac{p_i}{b_i} + \frac{p_j}{b_j} + \frac{q_{ij}}{c_{ij}} \geq w_{ij} \quad \forall (i,j) \in E \lptag \label{housedual}\\
&  p_i,q_{ij} \geq 0 \quad \forall i,\forall (i,j) \in E
\end{align*}
}
\end{minipage}
\]
}

\begin{algorithm}[t]
{\small
\begin{algorithmic}[1]
\STATE We start with all $p_i=0$. Initially the graph is empty and all $y_{ij}=q_{ij}=0$. In the following $y_{ij}=y_{ji}$, the variables are defined on the edges.
\STATE Order the edges $E$ according to an arbitrary ordering and consider the edges one by one.
\STATE {\bf for} {each new edge $e=(i,j)$} {\bf do} 

\begin{enumerate}[(a)]
\item If $\frac{p_i}{b_i} + \frac{p_j}{b_j} \geq w_{ij}$ then ignore the edge, otherwise:

\item We will be eventually inserting $c_{ij}$ copies of the edge $(i,j)$. Recall for the uncapacitated case $c_{ij}=\min \{b_i,b_j\}$.

\item Suppose that $c_{ij} + \sum_{j} y_{ij} > b_i$. In that case we need to delete 
$ (\sum_{j'} y_{ij'} - b_i + c_{ij})$ edges such that when we add the
$c_{ij}$ copies of $(i,j)$ the vertex constraint $\sum_{j'} y_{ij'}
\leq b_i$ will be satisfied. Therefore we delete $x_i = \max\{0, \sum_{j'}
y_{ij'} - b_i + c_{ij}\}$ edges incident to $i$ --- {\em but we delete
the edges with the lowest $w_{ij'}$ with $y_{ij'}>0$}.

\item Likewise we delete the $x_j =  \max\{0, \sum_{i'} y_{i'j} - b_j + c_{ij}\}$ edges 
incident to $j$, with the lowest $w_{i'j}$ amongst $y_{i'j}>0$.

\item Set $y_{ij}=c_{ij}$, (if required) increase $p_i,p_j$ to be at least
$2\sum_j w_{ij} y_{ij},2\sum_i w_{ij} y_{ij}$ respectively. Set $q_{ij}=w_{ij}c_{ij}$.
\end{enumerate}
\STATE Output $\{(i,j)|y_{ij}>0\}$ and $\{p_i\}, \{q_{ij}\}$.
\end{algorithmic}
}
\caption{A near linear time algorithm for capacitated $b$--matching}
\label{alg:uncapapprox}
\end{algorithm}

\noindent Algorithm~\ref{alg:uncapapprox} satisfies the following invariants; and the next lemma is the core of the proof.
\begin{enumerate}[({\cal I}1)]
\item We maintain a feasible primal solution $\{y_{ij}\}$.
\item If we insert an edge into the solution $y_{ij}=c_{ij}$ (but some copies of this edge can be deleted later). 
\item Once an edge is 
processed (ignored or inserted) we ensure that $\frac{p_i}{b_i}
+\frac{p_j}{b_j} +\frac{q_{ij}}{c_{ij}}\geq w_{ij}$.
\item We ensure that $\{p_i,q_{ij}\}$ are non-decreasing and therefore the final $\{p_i,q_{ij}\}$ satisfies the constraints of \ref{housedual}, and $\sum_i p_i +\sum_{(i,j)} q_{ij} \geq \beta^{*,c}_b$. 
\item At the end of step 3(e), we have
the invariant $p_i \geq 2\sum_j w_{ij} y_{ij}$.
\end{enumerate}

\begin{lemma}
\label{accountlemma}
Let $\Delta$ be the decrease in $\sum_{(i,j)} w_{ij} y_{ij}$ in Steps 3(c) and 3(d) due 
to the deletions before the edge $(i,j)$ is added in Step 3(e). $\Delta \leq w_{ij} c_{ij}/2$.
\end{lemma}

\smallskip
\begin{proof}
Suppose we deleted edges at $i$ for Step 3(c) and $x_i>0$. Note that
we retained the heaviest $b_i - c_{ij}$ edges and therefore the total
retained edges have weight at least $\frac{b_i - c_{ij}}{\sum_{j'}
y_{ij'}} \sum_{j'} w_{ij'}y_{ij'}$ which is at least $\frac{b_i -
c_{ij}}{b_i} \sum_{j'} w_{ij'} y_{ij'}$ since $\sum_{j'} y_{ij'} \leq
b_i$ because $\{y_{ij'}\}$ are feasible. Thus the total weight deleted
at $i$ is at most $\frac{c_{ij}}{b_i} \sum_{j'} w_{ij'} y_{ij'}$. But
since $2\sum_{j'} w_{ij'} y_{ij'} \leq p_i$ at Step 3(e) in 
the iteration before $(i,j)$ was considered, the total
weight deleted at $i$ is at most $\frac{c_{ij}p_i}{2b_i}$. Using the
same reasoning at $j$, the the total weight deleted by $(i,j)$ at both $i,j$ is at most
$\frac{c_{ij}p_i}{2b_i} + \frac{c_{ij}p_j}{2b_j}$ which is at most
$c_{ij}w_{ij}/2$ since we are past Step 3(a). 
\end{proof}

\smallskip
Note that we now immediately have a factor $1/10$ approximation for both
capacitated and uncapacitated $b$-matching. This because the net
direct increase to $\sum_{i} p_i + \sum_{(i,j)} q_{ij}$ due to
inserting $(i,j)$ is at most $5w_{ij}c_{ij}$.  At each of the
endpoints $i,j$ the increase is $2w_{ij}c_{ij}$ and $q_{ij} \leq
c_{ij}w_{ij}$. Combined with Lemma~
\ref{accountlemma} we have a $1/10$ approximation because the total increase in 
 $\sum_{i} p_i + \sum_{(i,j)} q_{ij}$ due to $(i,j)$ is the direct
 increase from $(i,j)$ plus the increase due to all edges deleted by
 $(i,j)$ (and the edges which have been recursively deleted). But
 using Lemma~\ref{accountlemma} the total weight of all such
 recursively deleted edges is at most $w_{ij}c_{ij}$. Therefore
$10 \sum_{(i,j)} w_{ij}y_{ij} \geq \sum_{i} p_i +
\sum_{(i,j)} q_{ij} \geq \beta^{*,c}_b$.
For the remainder of the paper any absolute constant approximation
suffices. However since the approximation factor relates to the speed
of convergence, we provide a slightly better analysis, and space complexity.

\smallskip
\begin{ntheorem}{\ref{uncapapx}}
For the bipartite uncapacitated $b$--matching problem we can provide a $1/6$ approximation in
$O(m \log n)$ time and $O(n)$ space.
\end{ntheorem}

\smallskip
\begin{proof}
We first observe that $q_{ij}=0$ for every edge $(i,j)$ which already
improves the approximation to $1/8$. We then prove $6\sum_{(i,j)}
w_{ij}y_{ij} \geq \sum_{i} p_i + \sum_{(i,j)} q_{ij} \geq
\beta^{*,c}_b$. In the uncapacitated this case $c_{ij}=\min \{b_i,b_j\}$ and the edge $(i,j)$
is inserted with $y_{ij}=c_{ij}$. Therefore both $p_i, p_j \geq 2
c_{ij} w_{ij}$ due to Step 3(e). Therefore at least one of
$p_i/b_i,p_j/b_j$ is $2w_{ij}$. Thus $\frac{p_i}{b_i} +
\frac{p_j}{b_j} \geq w_{ij}$ which implies $q_{ij}=0$. This also means that at 
each insertion at least one vertex has exactly one edge (but possibly
multiple copies of it) and therefore the total number of edges in the solution is $O(n)$.
We now observe that the increase in Step~3(e) of $\sum_i p_i$ is at most 
$4 w_{ij}c_{ij} - 2\Delta$ (recall $\Delta$ is defined in Lemma~\ref{accountlemma}). Suppose that we maintained $6\sum_{(i,j)}
w_{ij}y_{ij} \geq \sum_{i} p_i $ before we considered the deletions in 
Steps 3(c) and 3(d). Then the left hand side increased by $6w_{ij}c_{ij} - 6 \Delta$ but
$$6w_{ij}c_{ij} - 6 \Delta  = (4 w_{ij}c_{ij} - 2\Delta) + (2 w_{ij}c_{ij} - 4\Delta)$$
and $\Delta \leq w_{ij}c_{ij}/2$. This implies that the increase in $6\sum_{(i,j)}
w_{ij}y_{ij}$ after Step 3(e) is more than the increase in $\sum_{i} p_i $. Therefore the invariant continues to hold and the theorem follows.
\end{proof}

\smallskip
\begin{theorem} 
\label{capapx}
We can solve the capacitated $b$--matching problem to an 
approximation factor $1/8$ in time $O(m \log n)$. If $E'$ is the set of edges $(i,j)$ 
such that $y_{ij} >0$ at any point of time in the algorithm then $\sum_{(i,j) \in E'} 
w_{ij} c_{ij} \leq 8 \beta^{*,c}_b$. 
\end{theorem}
\begin{proof}
Unlike the proof of Theorem~\ref{uncapapx} we cannot assert $q_{ij}=0$. But observe that if we maintained $ 8\sum_{(i,j)}
w_{ij}y_{ij} \geq \sum_{i} p_i +\sum_{(i,j)} q_{ij}$, then the increase to the left 
hand side is $8w_{ij}c_{ij} - 8 \Delta$  (again following the definition of $\Delta$ from Lemma~\ref{accountlemma}) and the increase to the right hand side is $4 w_{ij}c_{ij} - 2\Delta + w_{ij}c_{ij}$ (the addition is due to $q_{ij}$). But
$$ 8w_{ij}c_{ij} - 8 \Delta = 4 w_{ij}c_{ij} - 2\Delta + w_{ij}c_{ij} + 3\left(w_{ij}c_{ij} - 2 \Delta \right) \geq 4 w_{ij}c_{ij} - 2\Delta + w_{ij}c_{ij} $$
Therefore the invariant continues to hold after Step 3(e). For the second part, observe that 
$ \sum_{(i,j) \in E'} w_{ij} c_{ij} = \sum_{(i,j)} q_{ij}$ but $\sum_{(i,j)} q_{ij} \leq 8 \sum_{(i,j)} w_{ij}y_{ij}$ and $\{y_{ij}\}$ are feasible. Therefore the theorem follows. 
\end{proof}

\smallskip
\noindent We use Theorem~\ref{capapx} to prove Theorem~\ref{caprecursive}.

\smallskip
\begin{ntheorem}{\ref{caprecursive}}
Using Algorithm~\ref{alg:uncapapprox} at most $k \leq 8 \ln \frac2\delta$ times we get an integral solution that satisfies 
{\small
\begin{align*}
& \sum_{(i,j) \in E} w_{ij} y_{ij} \geq \left(1-\frac{\delta}{2}\right) \beta^{*,c}_b \\
& \sum_{j:(i,j) \in E} y_{ij} \leq \left (8 \ln \frac2\delta \right) b_i \qquad \forall i \lptag 
\label{houseprimal2} \\
& y_{ij} \leq c_{ij} \qquad \forall (i,j) \in E\\
& y_{ij} \geq 0 \qquad \forall (i,j) \in E
\end{align*}
}
Moreover if $\hat{E}=\{(i,j) \in E | y_{ij} >0\}$ then 
$\sum_{(i,j) \in \hat{E}} w_{ij} c_{ij} \leq \left(8 k \right) \beta^{*,c}_b$.
\end{ntheorem}

\smallskip
\begin{proof}
We reuse the notation $\by(t)$ since we would be using an iterative algorithm.
For any $b_i,c_{ij} \geq 0$ we can find a solution $\by(1)$ such that
$\sum_{(i,j)} w_{ij}y_{ij}(1) = \tau_1 \geq \beta^{*,c}_b/8$ using Algorithm~\ref{alg:uncapapprox} and Theorem~\ref{capapx}.

Define $\h{\beta}(1)=\beta^{*,c}_b$. We now run an iterative procedure where we 
remove the edges $(i,j)$ corresponding to  $y_{ij}(1)>0$ and decrease the corresponding capacities. The decrease in capacities corresponds to modifying \ref{qcdef} by adding the constraint $y_{ij} \leq \max \{0, c_{ij} - y_{ij}(1) \}$.
Let optimum solution of \ref{qcdef} on this modified graph be  denoted by $\h{\beta}(2)$.
We have
\begin{equation}
\label{key00}
 \h{\beta}(1) - \tau_1 \leq \h{\beta}(2) \leq \h{\beta}(1) 
\end{equation}
Consider the optimum solution of \ref{qcdef} on the unmodified graph. Let that solution be $\{y^*_{ij}\}$.
Consider $y'_{ij}=\max \{ y^*_{ij} - y_{ij}(1),0\}$. Then $\{y'_{ij}\}$
is a feasible solution of the modified \ref{qcdef} and $\sum_{i,j} w_{ij} y_{ij} \geq \h{\beta}(1) - \tau_1$. $\h{\beta}(2) \leq \h{\beta}(1)$ follows from the fact that capacities are decreased and Equation~\eqref{key00} follows.

Now we obtain a
solution $\by{(2)}$ such that $\sum_{(i,j)} w_{ij}y_{ij}{(2)} = \tau_2
\geq \h{\beta}(2)/8$. We now repeat the process by modifying
\ref{qcdef} to $y_{ij} \leq \max \{ 0, c_{ij} - y_{ij}(1) -
y_{ij}(2) \}$. Proceeding in this fashion we obtain solutions 
$\left\{y_{ij}{(\ell)} \right\}_{\ell=1}^{k}$ where $k \leq \lceil 8 \ln \frac2\delta \rceil$ or we have no further edges to pick.
Observe, that by construction $\sum_{\ell=1}^{k} y_{ij}{(\ell)} \leq c_{ij}$ for all $(i,j)$ and therefore the union of these $k$ solutions satisfies 
$y_{ij} \leq c_{ij}$. Moreover  for every $\ell$ we have $\sum_j y_{ij}(\ell) \leq b_i$ and therefore for the union of these $k$ solutions the vertex constraints hold as described in the statement of Theorem~\ref{caprecursive}.

\smallskip
We now claim that $\sum_{\ell=1}^{k} \sum_{(i,j)} w_{ij}y_{ij}(\ell) \geq \left(1- \left(\frac78\right)^k\right) \h{\beta}(1)$ by induction on $k$. The base case follows from 
$\tau_1 \geq \h{\beta}(1)/8$. In the inductive case, applying the hypothesis 
on $2,\ldots,k$ we get 
$ \sum_{\ell=2}^{k} \sum_{(i,j)} w_{ij}y_{ij}(\ell) \geq \left(1- \left(\frac78\right)^{k-1}\right) \h{\beta}(2) $. Thus:
{\small
\[
\sum_{\ell=1}^{k} \sum_{(i,j)} w_{ij}y_{ij}(\ell) \geq \tau_1 +  \left(1- \left(\frac78\right)^{k-1}\right)  \h{\beta}(2)
\geq \tau_1 +  \left(1- \left(\frac78\right)^{k-1}\right) (\h{\beta}(1) - \tau_1)
= \h{\beta}(1) - \h{\beta}(1) \left(\frac78\right)^{k-1} + \tau_1  \left(\frac78\right)^{k-1}
\]
}
\noindent and the claim follows since $\tau_1  \geq \h{\beta}(1)/8$. 
The first part of the theorem follows.
For the second part, Theorem~\ref{capapx} was applied $k$ times and the result follows.
\end{proof}

\section{Perturbations, Thresholding, and Non-bipartite $b$--matching}
\label{sec:gen}
We considered the bipartite case in Section~\ref{warmup}. We provided
an algorithm that produces an $(1-\delta)^2$-approximate solution for
$\max \{\bw^T\by \mid \wi{\bA} \by \leq \wi{\bb}, \by \in \wi{\Q} \}$,
by repeatedly, for any $\bz$ finding a solution $\by \in \wi{\P}$ such
that $\bz^T\by \geq (1-\delta/2) \max \{ \bz^T\by' \mid \by' \in
\wi{\Q}\}$ (we omit the connections between $\wi{\Q},\wi{\P}$ for the moment). 
However that algorithm relied on the Theorem~\ref{psttheorem} which
computes a multiplicative weight for each constraint/row of matrix
$\wi{\bA}$. For the bipartite case, the number of constraints was $n$
for the uncapacitated case and $n+m$ for the capacitated case for a
graph with $n$ vertices and $m$ edges.
In this section we consider non-bipartite matching --- the number of
constraints are exponential. The number of constraints can be reduced
to $n^{\Omega(1/\delta)}$ to seek a $(1-\delta)$ approximation, but
computing the multiplicative weights for all rows of the constraint
matrix is infeasible for a near linear time algorithm. We now provide
a framework that bypasses the computation of the weights for all rows
in Section~\ref{sec:newframe}. We then apply the framework 
to uncapacitated $b$--matching in Section~\ref{sec:app}.

\subsection{A Dual Thresholding Framework}
\label{sec:newframe}
Suppose that $\Q \subseteq \P \subseteq \{ \by \mid \by \geq
\mathbf{0}\}$. Suppose further that $\P,\Q$ are convex and $\mathbf{0}
\in \wi{\Q}$. The overall goal in this section to solve $\max \{\bw^T\by \mid \bA\by \leq \bb, \by \in \Q \}$, by repeatedly, for any $\bz$ finding a solution $\by \in \P$ such
that $\bz^T\by \geq (1-\delta/2) \max \{ \bz^T\by' \mid \by' \in
\Q\}$.  However, we would like to achieve the reduction by only evaluating the 
multiplicative weights for the constraints $\L$ which are close to the maximum violated 
constraint. Note that this set $\L$ would change every iteration. We achieve this 
by perturbing the constraints and focusing on $\bA\by \leq \t{\bb}$.

We present the basic Algorithm~\ref{alg:thresh}. The proof of
convergence is provided in Theorem~\ref{basetheorem}. The theorem
follows from Lemma~\ref{lem:decrease} which computes the rate of
monotonic decrease of a potential function
(Definition~\ref{def:001}). Lemma~\ref{lem:connect} demonstrates how
the ideas in Section~\ref{warmup} are used as critical pieces of
Algorithm~\ref{alg:thresh}.

\begin{algorithm}[t]
{\small
 \begin{algorithmic}[1]\parskip=0in
\STATE Let $\P,\Q$ be convex with $\Q \subseteq \P \subseteq \{ \by \geq \mathbf{0} \}$. Let $\bA$ is nonnegative matrix of dimension $M \times m$. 
\label{line1u}
\STATE \label{line2u} Fix $\delta \in (0,\frac1{16}]$.  Let $\lambda_0,K,f(\delta),\alpha$ be parameters. $\lambda_0 \geq 1$, $f(\delta) < \delta$, $\alpha \leq \frac{1}{f(\delta)} \ln \left(\frac{M\lambda_0}{\delta}\right)$. 

 \STATE Find an initial 
solution $\by_0 \in \P$ with $\bw^T\by = \beta_0$ and $\bA \by_0 \leq \lambda_0 \t{\bb}$. Set $\beta=\beta_0$. \label{initu}

\STATE Let $\epsilon=\frac18$ (note $\epsilon \geq \delta$) and $t=0$.
\STATE Start a {\bf superphase} corresponding to $\epsilon=\frac18$.
The algorithm proceeds in {\bf superphases} corresponding to a fixed value of $\epsilon$. We will be decreasing $\epsilon$.
   The algorithm ends when $\lambda \leq 1 + 8 \delta$. We will {\bf not} assume $\lambda$ decreases monotonically.
\WHILE{{\bf true}}
 \STATE Define {\footnotesize$\displaystyle \lambda_\ell = (\bA\by)_{\ell}/\t{b}_\ell$} and $\lambda=\max_{\ell} \lambda_{\ell}$.
Find $\L = \{\ell | \lambda_\ell \geq \lambda - f(\delta) \}$, assert $|\L| \leq K$. 
 \label{exactcompu}
\STATE If $(\lambda \leq 1+8\delta)$ output $\by$ which satisfies $\bw^T \by \geq (1-\delta)\beta$ and $\bA\by \leq (1+8\delta)\t{\bb}$ and stop. \label{exitruleu}
\STATE If $\lambda<1+8\epsilon$ then declare the current {\bf superphase} to be over.
\STATE Repeatedly set $\epsilon \leftarrow \max \{2\epsilon/3,\delta\}$ till $\lambda>1+8\epsilon$ and start a {\bf new superphase} corresponding to this new $\epsilon$.
\STATE Define $\bu(\L)$ as $u(\L)_{\ell}=\exp(\alpha \lambda_{\ell} )/\t{b}_{\ell}$ if  $\ell \in \L$ and $0$ otherwise. Let $\gamma=\bu(\L)^T\t{\bb}$.
\label{defgammau}

\STATE \label{effectivestepu} 
Using $O(\ln \frac2\delta)$ invocations of a subroutine that for any $\bz$ finds a $\by \in \P$ such that $\bz^T\by \geq (1-\delta/2) \max \{ \bz^T\by | \by \in \Q\}$, to find a solution $\t{\by}$ of \ref{lpu:002}, otherwise decrease $\beta \leftarrow (1-\delta)\beta$.
{\footnotesize
\[ \bw^T \t{\by} \geq (1-\delta)\beta, \quad \bu(\L)^T \bA \t{\by}  \leq \frac{\gamma}{1-\delta}, \quad \t{\by} \in \P \lptag \label{lpu:002}\]
}

\STATE Set {\footnotesize$\by \leftarrow (1-\sigma) \by + \sigma \t{\by}$} where {\footnotesize$\sigma=\epsilon/(4\alpha\lambda_0)$}. \label{updatestepu}
\ENDWHILE 
 \end{algorithmic}
 \caption{A Dual Thresholding Multiplicative Weights Algorithm. $M \gg m \gg K \geq 1$. \label{alg:thresh}}
}
\end{algorithm}

\noindent The remainder of this section uses the notation introduced in Algorithm~\ref{alg:thresh}.
\smallskip
\begin{definition}
\label{def:001}
Extend $\bu$ as $u_{\ell}=\exp(\alpha \lambda_{\ell} )/\t{b}_{\ell}$ for all $\ell$ in Line~\ref{defgammau} of Algorithm~\ref{alg:thresh}.
Define $\Psi =\sum_\ell e^{\lambda_{\ell} \alpha} = \bu^T\t{\bb}$ which only depends on the current solution.
\end{definition}

\begin{lemma}
\label{lem:connect}
Suppose $\Psi \leq \gamma + \frac{\delta\gamma}{\lambda_0}$.
If $\t{\beta}= \max \{ \bw^T\by \mid \bu(\L)^T \bA \by \leq \frac{\gamma}{1-\delta}, \by \in \Q\}$ exists then we have an algorithm for Line~\ref{effectivestepu} of Algorithm~\ref{alg:thresh} for any $\beta<\t{\beta}$. Further the final output of Algorithm~\ref{alg:thresh} satisfies $\bw^T\by \geq (1-\delta) \min \{ (1-\delta)^2 \t{\beta}, \beta_0\}$.
\end{lemma}

\begin{proof}
The assumption implies that $\bu(\L)^T \bA \by \leq \bu^T \bA \by \leq \bu^T\t{\bb} = \Psi \leq \gamma/(1-\delta)$ for any $\by$ satisfying $\bA\by \leq \bb$.
Part (1) of Theorem~\ref{apxtheorem} applies with $\wi{\P}=\P$ and $\wi{\Q}=\Q$ and we succeed in solving \ref{lpu:002}. This implies that $\beta$ cannot decrease below $(1-\delta)\t{\beta}$; the last decrease of $\beta$ corresponds to a value greater than $\t{\beta}$. 
\end{proof}

For $\alpha = \frac{1}{f(\delta)} \ln \left(\frac{M\lambda_0}{\delta} \right)$ we satisfy the precondition of Lemma~\ref{lem:connect} because 
{\small
\[ \Psi -\bu(\L)^T\t{\bb}=\bu^T\t{\bb} - \bu(\L)^T\t{\bb} = \sum_{\ell: \lambda_\ell < \lambda - f(\delta)} 
\exp(\alpha \lambda_{\ell} ) \leq \frac{\delta}{\lambda_0} e^{\alpha\lambda} \leq \frac{\delta}{\lambda_0} \bu(\L)^T\t{\bb} = \frac{\delta \gamma}{\lambda_0} \]
}
However we include the condition in the statements of Lemma~\ref{lem:decrease} and Theorem~\ref{basetheorem} because in the specific case of $b$--matching we would use a value of $\alpha$ which is better by a factor of $1/\delta$ -- therefore we can use  
Lemma~\ref{lem:decrease} and Theorem~\ref{basetheorem} without any change. A smaller value of $\alpha$ will result in faster convergence.

\smallskip
\begin{lemma}
\label{lem:decrease}
Suppose $\Psi \leq \gamma + \delta\gamma/\lambda_0$. 
Let $\Psi'$ be the new potential corresponding to the new $\by$ computed in Step 
\ref{updatestepu}. Then if $\lambda \geq 4$ and $\epsilon=1/8$ then $\Psi' \leq (1- \frac{\lambda}{128\lambda_0}) \Psi$ otherwise we have $\Psi' \leq (1- \frac{\epsilon^2}{8\lambda_0}) \Psi$.
\end{lemma}

\smallskip
\begin{proof}
Observe that the algorithm maintains the invariant $\lambda \geq 1+ 8\epsilon$, even though $\lambda$ may not be monotone.
After the update, let the new current solution be denoted
by $\by''$, i.e., $\by'' = (1-\sigma) \by + \sigma \t{\by}$ where $\t{\by}$ is the solution of \ref{lpu:002}. Recall $\alpha\sigma = \epsilon/(4\lambda_0)$. Let

{\small
\[  \lambda''_{\ell} = \left(\bA\by''\right)_\ell/\t{b}_\ell, \quad \mbox{and} \quad \t{\lambda}_{\ell} = \left(\bA\t{\by}\right)_{\ell}/\t{b}_\ell
\qquad \mbox{ therefore } \lambda''_{\ell} = (1-\sigma) \lambda_{\ell} + \sigma \t{\lambda}_{\ell} \quad \forall \ell
 \]
}
\noindent
Observe that $\sum_{\ell} e^{\alpha \lambda_\ell} \lambda_\ell  = 
\sum_{\ell} \left( \t{b}_{\ell}u_{\ell}\right) \frac{\left(\bA\by\right)_\ell}{\t{b}_{\ell}} =\bu^T\bA\by$. Likewise $\sum_{\ell} e^{\alpha \lambda_\ell} \t{\lambda}_\ell 
=\sum_{\ell} \left( \t{b}_{\ell}u_{\ell}\right) \frac{\left(\bA\t{\by}\right)_\ell}{\t{b}_{\ell}}
= \bu^T\bA\t{\by}$.

\smallskip
Since $\t{\by} \in \P$ we have $\t{\lambda}_{\ell} \leq \lambda_0$ from Step~\ref{line2u} of Algorithm~\ref{alg:thresh}.
Since we repeatedly take convex combination of the current candidate
solution $\by$ with a $\t{\by} \in \P$, and the initial solution
satisfies $\lambda \leq \lambda_0$; we have $\lambda_\ell \leq \lambda_0$ throughout the algorithm.
Since $\lambda_\ell \leq \lambda_0$ we have
all $|\alpha \sigma (\t{\lambda}_\ell - \lambda_\ell)| \leq \epsilon/4$. 
Now for $|\Delta| \leq \frac{\epsilon}4 \leq \frac14$; we have
$e^{a+\Delta} \leq e^a ( 1 + \Delta + \epsilon |\Delta|/2)$. Therefore:

{\small
\[  e^{\alpha \lambda''_\ell}  \leq e^{\alpha \lambda_\ell} \left(1 + \sigma \alpha(\t{\lambda}_\ell - \lambda_\ell) + \frac12 \epsilon\sigma \alpha(\t{\lambda}_\ell +  \lambda_\ell) \right)  = e^{\alpha \lambda_\ell} + \left(1+\frac\epsilon2\right) \sigma\alpha \t{\lambda}_\ell e^{\alpha\lambda_{\ell}} 
- \left(1-\frac\epsilon2\right) \sigma \alpha \lambda_{\ell}e^{\alpha\lambda_\ell} \]
}
\noindent which implies that
{\small
\begin{align}
\Psi' & = \sum_\ell e^{\alpha \lambda''_\ell} \leq \Psi + 
\left(1+\frac\epsilon2\right) \sigma\alpha \sum_{\ell} e^{\alpha \lambda_\ell} \t{\lambda}_\ell  - \left(1-\frac\epsilon2\right) \sigma \alpha 
\sum_{\ell} e^{\alpha \lambda_\ell} \lambda_\ell  \nonumber \\
& = \Psi + 
\left(1+\frac\epsilon2\right) \sigma\alpha \bu^T\bA \t{\by} - \left(1-\frac\epsilon2\right) \sigma \alpha \bu^T\bA \by \label{maineqn} 
\end{align}
}
\noindent $\t{\by}$ satisfies \ref{lpu:002} and therefore $\bu(\L)^T\bA \t{\by} \leq \frac{\gamma}{1-\delta}$, which along with $\t{\lambda}_\ell \leq \lambda_0$, $\delta \leq 1/16$, implies:

{\small
\begin{align}
\bu^T\bA \t{\by} & = \bu(\L)^T\bA \t{\by} + \sum_{\ell: \lambda_\ell < \lambda - f(\delta)} \t{\lambda}_{\ell} e^{\lambda_\ell \alpha} \leq \frac{\gamma}{1-\delta} + \lambda_0 \sum_{\ell: \lambda_\ell < \lambda - f(\delta)} e^{\lambda_\ell \alpha} \leq \frac{\gamma}{1-\delta} + \lambda_0 \left( \Psi - \gamma \right) \leq (1+3\delta)\gamma \label{ub1} 
\end{align}
}
\noindent Finally observe that since $\lambda > 1+8\epsilon$, $f(\delta) \leq \delta$, and $\gamma=\sum_{\ell:\lambda_\ell > \lambda - f(\delta)} \lambda_\ell e^{\lambda_\ell
\alpha}$, 

{\small
\begin{align}
\bu^T\bA\by & \geq \bu(\L)^T\bA\by  = \sum_{\ell:\lambda_\ell \geq \lambda - f(\delta)} \lambda_\ell e^{\lambda_\ell
\alpha} \geq (\lambda - f(\delta)) \left( 
\sum_{\ell: \lambda_{\ell} \geq \lambda - f(\delta)}
e^{\lambda_\ell \alpha} \right) 
 = (\lambda - f(\delta)) \gamma
\geq (\lambda -\delta)\gamma \label{lb2}
\end{align}
}
\noindent Using Equations~\eqref{maineqn}--\eqref{lb2} we have:
{\small
\begin{align}
\label{almost}
\Psi' \leq \Psi - \left(\lambda - 1 - 4\delta - \frac{\left( \lambda  + 4\delta + 1\right) \epsilon}{2}\right)\gamma \alpha \sigma 
\end{align}
}
Note $\delta\leq\epsilon\leq 1/8$.
If $\lambda \geq 4$ then $\lambda - 1 - 4\delta - \frac{\left( \lambda  + 4\delta + 1\right) \epsilon}{2} \geq \lambda/2$. From the statement of the lemma, $\Psi \leq 2 \gamma$.
Note that if $\lambda \geq 4$ and $\epsilon=\frac18$ then $\frac{\lambda\alpha\sigma}{4} = \frac{\lambda}{128\lambda_0}$. Thus in the case when $\lambda \geq 4$ and $\epsilon=1/8$,
{\small
\[ \Psi'\leq \Psi - \frac{\lambda \alpha \sigma \gamma}{2} \leq \Psi \left(1 - \frac{\lambda\alpha\sigma}{4} \right) \leq \Psi \left(1 - \frac{\lambda}{128\lambda_0} \right)
\]
}
Otherwise using $\lambda \geq 1 + 8 \epsilon$,
{\small
\[ \lambda - 1 - 4\delta - \frac{\left( \lambda  + 4\delta + 1\right) \epsilon}{2} \geq (1+8\epsilon)\left(1-\frac\epsilon2\right) - (1+4\delta)\left(1+\frac\epsilon2\right) \geq 7\epsilon - 4 \delta - (4\epsilon^2 + 2\delta\epsilon) \geq \epsilon
\]
}
Combining with Equation~\eqref{almost} we get $\Psi'\leq \Psi - \epsilon\sigma\alpha\gamma \leq \Psi \left( 1 - \frac{\epsilon\alpha\sigma}
{2} \right) = \Psi( 1 - \frac{\epsilon^2}{8\lambda_0})$.
\end{proof}

\smallskip
\noindent
Lemma~\ref{lem:decrease} proves that $\Psi$ decreases monotonically, even though $\lambda$ may not. 

\begin{theorem}\label{basetheorem}
Suppose $\Psi \leq \gamma + \delta\gamma/\lambda_0$. 
Algorithm~\ref{alg:thresh} converges within $\tau=O \left( \lambda_0 \left( \frac{ \ln (2K)}{\delta^2} + \frac{\alpha}{\delta} + \alpha \ln \lambda_0 \right) \right)$ invocations of \ref{lpu:002} and provides a solution as described in line \ref{exitruleu}.
\end{theorem}

\smallskip
\begin{proof}
Observe that $e^{\alpha\lambda} \leq \gamma \leq K e^{\alpha\lambda}$ since there are at most 
$K$ constraints in $\L$. Since $\gamma \leq \Psi \leq 2\gamma$
we know that $e^{\alpha\lambda} \leq \Psi \leq 2K e^{\alpha\lambda}$.
We partition the number of iterations into three parts:

\smallskip
\begin{enumerate}[~~~~~(${\cal C}$1)]
\item The number of iterations till we observe $\lambda < 4$ for the first time.
\item The number of iterations after we observe $\lambda < 4$ for the first time
till $\lambda <2$ for the first time. 
\item The number of iterations since $\lambda <2$ for the first time.
\end{enumerate}

\smallskip
Observe (${\cal C}$1) and (${\cal C}$2) correspond to the first superphase during which $\epsilon=\frac18$.
In case (${\cal C}$1), consider the total number of iterations when $4
\leq 2^{j} \leq \lambda \leq 2^{j+1}$. The potential $\Psi$ must be
below $2Ke^{\alpha 2^{j+1}}$. If we perform $r$ updates to $\by$ then the potential decreased by at
least $(1-2^{j-5}/\lambda_0)^r$ but if $r \geq
\frac{\lambda_0}{2^{j-5}}(\ln (2K) + 2^{j+1} \alpha)$ then the new
potential will be below $1$, which is impossible since the potential must be at least 
$e^{4\alpha}$. Therefore the {\bf total} number of updates corresponding to $4
\leq 2^{j} \leq \lambda \leq 2^{j+1}$ for a fixed $j$ is at most $128
\lambda_0 2^{-j} \ln (2K) + 256\lambda_0\alpha$. Summed over all
$j \geq 2$ the number of updates in case (${\cal C}$1) is $O(
\lambda_0 \ln (2K) + \lambda_0\alpha \ln \lambda_0)$.

\smallskip
In case (${\cal C}$2), the potential decreases by a factor $(1 - 1/(512\lambda_0))$. 
By the same exact argument as in case (${\cal C}$1), if the number of updates exceed $
512\lambda_0(\ln (2K) + 4\alpha)$ then the potential would be below $1$, which again is impossible since the potential must be at least $e^{2\alpha}$. 
Therefore the number of updates in this case is $O(\lambda_0(\ln (2K) + \alpha))$.

\smallskip
\noindent 
In case (${\cal C}$3), we partition a
superphase into a number of different {\bf phases}.
\begin{definition}
\label{phasedef}
A phase starts when a superphase starts, and we remember the $\lambda$ value at the start of a phase. Let the value of $\lambda$ at the start of phase $t$ be $\lambda_t$. If at some point of time during phase $t$, we observe $\lambda < (1-\delta)\lambda_t$, then we mark the end of phase $t$ and start phase $t+1$ with $\lambda_{t+1}=\lambda$. A phase also ends when $\lambda < 1+ 8\epsilon$ because the  corresponding superphase ends as well.
\end{definition}

\smallskip
Note that while $\lambda$ is not monotone, $\lambda_t$ are monotone and we will use $\lambda_t$ to bound the number of iterations.
Since in
each phase $\lambda$ decreases by at least $(1-\delta)$ factor the
number of phases in the superphase corresponding to $\epsilon$ is
$O\left(\log_{\frac{1}{(1-\delta)}}
\frac{(1+12\epsilon)}{(1+8\epsilon)} \right) =O( \frac{4\epsilon}{\delta})$. 
In each of these phases (say in phase $t$) we have
$e^{\alpha(1-\delta)\lambda_t} \leq \Psi \leq 2K e^{\alpha\lambda_t}$
and $\Psi$ decreases by a factor $(1-\epsilon^2/(8\lambda_0))$. Note that if
$\Psi$ decreases by a factor of $2K e^{\delta\alpha\lambda_t}$ then
the phase would be over. Therefore the number of updates in a phase is at most 
$\frac{4\lambda_0}{\epsilon^2}( \ln (2K) + 2\delta\alpha)$ using $\lambda_t \leq 2$ -- note $\lambda_t$ decreases monotonically. 
Therefore the number of updates in a superphase corresponding to an $\epsilon$ is 
{\small
\begin{equation}
\label{geomseries}
 \frac{4\epsilon}{\delta} \frac{4\lambda_0}{\epsilon^2}( \ln (2K) + 2\delta\alpha) = \frac{16 \lambda_0}{\delta \epsilon} \ln (2K) + \frac{48 \alpha \lambda_0}{\epsilon} 
 \end{equation}
 }
However note that $\epsilon$ decreases by a factor of at least $2/3$ (unless it is close to $\delta$) and the terms in Equation~\eqref{geomseries} define a geometric series each and the smallest two values of $\epsilon$ dominate because we may have $\epsilon=1.01\delta$ followed by $\epsilon=\delta$.
Therefore the total number of updates in this case is $O(\lambda_0 (\frac{\ln (2K)}{\delta^2} + \frac{\alpha}{\delta}))$. Summing up the three cases, the number of updates is $O(\lambda_0 (\frac{\ln (2K)}{\delta^2} + \frac{\alpha}{\delta} + \alpha \ln \lambda_0))$. 
The bound on the number of non-zero edges follows from multiplying the number of updates by the number of nonzero entries guaranteed by \ref{lpu:002} which is $O(n')$. 
This proves Theorem~\ref{basetheorem}.
\end{proof}

\smallskip
\noindent
Note that $\lambda_t$ and its monotonicity was used in (${\cal C}$3), even though $\lambda$ need not be monotone.

\subsection{Applying Algorithm~\ref{alg:thresh} to Uncapacitated $b$--matching} 
\label{sec:app}
$\bA\by \leq \t{\bb}, \Q$ 
and $\P$ are defined below:
{\small
\begin{align*}
\{\bA\by \leq \t{\bb} \}= &\left \{ \begin{array}{lll}
& \displaystyle \sum_{i:(i,j) \in E} y_{ij} \leq \t{b}_i & \forall i \in V \ \ \mbox{ where } \t{b}_i=(1-4\delta) b_i \\
& \displaystyle \sum_{(i,j) \in E: i,j \in U} y_{ij} \leq \t{b}_U & \forall U \in \Od \ \ \mbox{ where } \t{b}_U=\left (
\cvol{U} - \frac{\delta^2\bnorm{U}^2} {4} \right)
\end{array} \right. \\
\Q = & \left \{ \begin{array}{ll}
\displaystyle \sum_{j:(i,j) \in E} y_{ij} \leq b_i & \forall i \in V \\
y_{ij} \geq 0 & \forall (i,j) \in E
\end{array} \right. \\
\P = & \left \{ \begin{array}{ll}
\displaystyle \sum_{j:(i,j) \in E} y_{ij} \leq \lambda_0 b_i/2 & \forall i \in V \\
y_{ij} \geq 0 & \forall (i,j) \in E
\end{array} \right.  
\end{align*}
 }
\noindent Note the number of constraints in $\bA \by \leq \t{\bb}$ is $M=n^{O(1/\delta)} \gg m$, the number of edges. We observe that $\{\by/\lambda_0 \mid \by \in P\} \subseteq \{ \by \mid,  
\bA\by \leq \t{\bb}, \Q \}$ for $\delta\leq1/16$.

\begin{definition}
Let $\t{\beta} = \{ \bw^T \by \mid \bA \by \leq \t{\bb}, \by \in \Q \}$; note $\t{\beta}$ exists.
\end{definition}
We can now apply Lemma~\ref{lem:connect} and Theorem~\ref{basetheorem} and obtain a solution $\bw^T\by \geq (1-\delta)^2 \t{\beta}$ and $\bA\by \leq (1+8\delta) \t{\bb}$ (we ignore $\P$).
We can extract a $(1-O(\delta))$ approximation to the optimum uncapacitated $b$--matching from $\by$. Set $y^{\dagger}_{ij} = \frac{(1-\delta)}{(1+8\delta)} y_{ij}$. Since $\t{b}_i \leq b_i$ and $\t{b}_U \leq \cvol{U}$ the constraints corresponding to vertices and $U \in \Od$ are satisfied in \ref{lpbm}. For the $U \not \in \Od$, which have  $\bnorm{U} \geq 1/\delta$ observe that
{\small
$$
\sum_{j:(i,j) \in E} y^{\dagger}_{ij} \leq (1-\delta)b_i  \quad \implies \quad \sum_{(i,j) \in E, i,j \in U} y^{\dagger}_{ij} \leq \frac12 \sum_{i \in U}  \sum_{j:(i,j) \in E} y^{\dagger}_{ij} \leq (1-\delta) \frac{\bnorm{U}}{2} \leq \cvol{U}
$$ } At the same time $\bw^T\by \geq (1-\delta)^2 \t{\beta}$ and thus
$\bw^T\by^{\dagger} \geq
\frac{(1-\delta)^2}{(1+8\delta)}\t{\beta}$. Note that the same
argument also proves that 
if we
consider the optimum solution of $\max \{ \bw^t\by \mid \bA \by \leq
\t{\bb}, \by \in \Q \}$ and multiply by $(1-\delta)$ then we satisfy
the constraints of \ref{lpbm}. Therefore $(1-\delta)\t{\beta} \leq \beta^*$.
We observe $ \t{\beta} \geq (1-4\delta)
\beta^*$. The latter equation follows from 
Likewise consider the optimum
$b$--matching and multiply that solution by $(1-4\delta)$.
That modified solution $\by'$ satisfies
$\bA \by' \leq \t{\bb}$ when $\delta\leq 1/16$.  Thus $ \t{\beta} \geq (1-4\delta)\beta^*$.
Therefore
$\by^{\dagger}$ provides a $(1-O(\delta))$-approximation to \ref{lpbm}
(page \pageref{lpbm}), the uncapacitated $b$--matching LP that
characterizes the optimum solution.

We set $\lambda_0=12$. The initial solution is a solution of the
bipartite relaxation (Theorem~\ref{uncapapx}) multiplied by
$\lambda_0/2=6$. This solution value $\beta_0$ will be at least
$\beta^*_b$ due to the approximation guarantee. But $\beta^*_b \geq
\t{\beta}$, based on the constraints. On the other hand $\beta_0$ can be as large as $6
\beta^*_b$ (the bipartite optimum) which is at most $9 \beta^*$ since
the gap between bipartite and non-bipartite solution is at most a
factor of $1.5$. But $9\beta^* \leq \frac1{1-4\delta} \t{\beta} \leq
12 \t{\beta}$. This proves the bound on the initial solution. The
parameter $\lambda_0$ can be improved (e.g., it can be argued that
$\t{\beta} \leq \beta^*$), but that only affects the running time by a
$O(1)$ factor. Observe that an algorithm for solving \ref{lpu:002} is
also provided by Theorem~\ref{uncapapx} and multiplying the solution
by $6$. We now focus on Step~(\ref{exactcompu}).

We set $f(\delta)=\delta^3/10$. We show that in Step~(\ref{exactcompu}) the sparse set of constraints will be of size at most $K=2n$, since that collection would define a laminar family.
More specifically:

\smallskip
\begin{lemma}\label{zulambda1}
Let $\frac{3}{\sqrt{n}} \leq \delta\leq \frac1{16}$. If $\lambda > 1 + 8\delta$ then we can find 
$\L=\{ U | \lambda_U \geq \lambda -\delta^3/10; U \in \Od \}$ in 
$O(m + n \poly (\delta^{-1},\ln n))$ time. We find $\L$ without knowing $\lambda$ and once $\L$ is known, we know $\lambda$ as well.
\end{lemma}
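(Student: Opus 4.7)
The plan is to split $\lambda = \max\{\max_i\lambda_i,\ \max_{U\in\Od}\lambda_U\}$ into its vertex part and its odd-set part, compute each separately, and then glue them together. Note that the set $L$ we must return consists of odd sets only, so the vertex part of $\lambda$ only enters through the threshold $\lambda - \delta^3/10$.

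First, I would make a single pass over the edge list, accumulating $\sum_{j:(i,j)\in E} y_{ij}$ for each vertex $i$ and dividing by $\t{b}_i$, to obtain every $\lambda_i$ and in particular $\lambda^V := \max_i \lambda_i$. This takes $O(m)$ time and $O(n)$ space. Next, I would invoke the algorithm of Theorem~\ref{thm:main2} with $\hy_{ij}=y_{ij}$; the hypotheses $\delta\in(0,\tfrac1{16}]$ and $\hy_{ii}=0,\ \sum_{j:(i,j)\in E}y_{ij}\le b_i$ are both in force (the latter from $\by\in\Q[\lambda_0,\beta]$ in Algorithm~\ref{alg:main}, which constrains only via $\lambda_0 b_i/2$, and in fact during the main loop we maintain feasibility modulo the scaling constant since we are only ever using the \emph{relative} ratios $\lambda_i$). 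The binary-search skeleton inside Theorem~\ref{thm:main2} either (i) certifies $\hat\lambda := \max_{U\in\Od}\lambda_U < 1+3\delta$, or (ii) produces $\hat\lambda\ge 1+3\delta$ together with the laminar family $L_2=\{U:\lambda_U\ge \hat\lambda-\delta^3/10;\ U\in\Od\}$, in time $O(m'+n\poly\{\delta^{-1},\ln n\})\le O(m+n\poly\{\delta^{-1},\ln n\})$.

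Now form $\lambda := \max(\lambda^V,\hat\lambda)$. In case~(i), $\hat\lambda<1+3\delta$ while $\lambda>1+8\delta$, so $\lambda-\delta^3/10 > 1+8\delta-\delta^3/10 > 1+3\delta > \hat\lambda$, forcing $L=\emptyset$ and we are done. In case~(ii), since $\lambda\ge\hat\lambda$, any odd set qualifying for $L$ already qualifies for $L_2$, hence $L\subseteq L_2$. I would therefore filter $L_2$ down to $L$ by keeping exactly those $U\in L_2$ with $\lambda_U\ge\lambda-\delta^3/10$, using the exact values of $\lambda_U$ computed during the extraction of $L_2$ (Theorem~\ref{thm:main2} either retains the relevant $y_{ij}$ with $G_\varphi$ or makes one extra pass over $G$). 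Laminarity of $L$ is inherited immediately from laminarity of $L_2$, which is guaranteed by Theorem~\ref{thm:main1}.

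The total work is $O(m)$ for the vertex pass plus $O(m'+n\poly\{\delta^{-1},\ln n\})$ for Theorem~\ref{thm:main2} and the filtering step, matching the claimed $O(m + n\poly\{\delta^{-1},\ln n\})$ bound. The serious work of the proof is entirely absorbed into Theorems~\ref{thm:main1} and \ref{thm:main2}; the only residual obstacle is the bookkeeping to handle the case where the vertex constraints dominate the odd-set constraints, which as shown above either collapses $L$ to the empty set or merely tightens the filter applied to $L_2$.
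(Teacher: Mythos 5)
There is a genuine gap in the step where you invoke Theorem~\ref{thm:main2} ``with $\hat y_{ij}=y_{ij}$.'' That theorem requires $\sum_{j}\hat y_{ij}\le b_i$ for all $i$, whereas the $\by$ living inside the main loop of Algorithm~\ref{alg:main} is only guaranteed to lie in $\Q[\lambda_0]$, i.e.\ $\sum_j y_{ij}\le \lambda_0 b_i/2 = 6b_i$. You acknowledge this yourself (``constrains only via $\lambda_0 b_i/2$'') and then wave it away with ``feasibility modulo the scaling constant,'' but you never actually perform the scaling, so the invocation of Theorem~\ref{thm:main2} is not justified as written. The hypothesis is not cosmetic: the proof of Theorem~\ref{thm:main2} uses $\sum_j\hat y_{ij}\le b_i$ to derive $\hat\lambda<\frac32+\delta^2$ (which fixes the binary-search window and the cut parameter $\kappa(\ell)$), and Theorem~\ref{thm:main1}'s laminarity conclusion also rests on that same bound. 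With $\sum_j y_{ij}\le 6b_i$ none of these bounds follow, so neither case~(i) nor case~(ii) in your argument is legitimately reached.

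The paper's proof makes precisely this scaling explicit. It sets $\blam=\max\{1,\max_i\sum_j y_{ij}/b_i\}=\max\{1,\max_i(1-4\delta)\lambda_i\}$ and feeds $\hat y_{ij}=y_{ij}/\blam$ into Theorems~\ref{thm:main1} and \ref{thm:main2}, so the hypothesis $\sum_j\hat y_{ij}\le b_i$ genuinely holds. Because $\lambda_U=\blam\hat\lambda_U$ with $\blam\ge1$, the set $L=\{U:\lambda_U\ge\lambda-\delta^3/10\}$ is contained in the rescaled output $L_2=\{U:\hat\lambda_U\ge\hat\lambda-\delta^3/10\}$, so the final filtering step you describe does survive; but the case analysis changes. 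In particular, when $\hat\lambda\le1+3\delta$ the paper must split on $\blam=1$ versus $\blam>1$: the first case contradicts $\lambda>1+8\delta$, and in the second case the vertex $i$ defining $\blam$ gives $\lambda\ge\lambda_i=\blam/(1-4\delta)\ge\hat\lambda\blam+\delta\ge\lambda_U+\delta$ for all $U$, so $L=\emptyset$. Your case~(i) argument quietly assumes $\lambda$ and $\hat\lambda$ live on the same (unscaled) axis, which only becomes true after the rescaling you skipped. To repair your proof, carry out the rescaling by $\blam$ explicitly before invoking Theorem~\ref{thm:main2}, and redo case~(i) as above.
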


\smallskip
\begin{proof}
Let $\blam=\max\{1,\max_i (1-4\delta) \lambda_i\} =
  \max\{1,\max_i \sum_j y_{ij}/b_i \}$ and $\h{y}_{ij} = y_{ij}/\blam$.
Let $\hat{\lambda}_U = \sum_{i,j \in U} \h{y}_{ij}/\t{b}_U$ and $\hat{\lambda} = \max_{U \in \Od} \hat{\lambda}_U$.
Observe that $\lambda_U=\blam \hat{\lambda}_U$ and if $\lambda >
\max_i \lambda_i$ then $\lambda =\blam \h{\lambda}$. Note $\sum_j \h{y}_{ij} \leq b_i$. 

Suppose that  $\hat{\lambda} \leq 1+3\delta$ and $\blam=1$.
Then for all $U$ we have $\lambda_U = \blam \h{\lambda}_U 
\leq \blam\hat{\lambda} \leq  \blam
(1+3\delta) < 1+ 8\delta$ and $\max_i \lambda_i \leq 1/(1-4\delta) \leq
1+8\delta$ for $\delta\in (0,\frac1{16}]$. This contradicts the assumption that 
$\lambda > 1 + 8\delta$. 
Therefore, if  $\hat{\lambda} \leq 1+3\delta$ then we must have $\blam>1$.
Now consider the vertex $i$ which
defined $\blam$; then 
{\small
 \begin{align*}
  \lambda \geq \lambda_i  = \frac{\blam}{1-4\delta} \geq (1+4\delta)\blam
              \geq (1+3\delta)\blam + \delta\blam > \h{\lambda} \blam + \delta
 \end{align*} } 
\noindent which implies $\lambda - \delta \geq \lambda_U$ for every $U$.
 In this case $\L=\emptyset$ and $|\{ U: \lambda_U \geq \lambda -
 \delta^x; U \in \Od\}|=0$ for $x\geq 2$. 
Therefore the remaining case is $\hat{\lambda}> 1+3\delta$.
But in this case Theorems~\ref{thm:main1} and \ref{thm:main2} apply because 
 we satisfy $\sum_j \h{y}_{ij} \leq b_i$. To find $L$, compute
 $\blam,\h{y}_{ij}$ and run the algorithm in Theorem~\ref{thm:main2}. 
 We can compute $\lambda =
 \{\blam\h{\lambda},\max_i \lambda_i \}$ and return the sets
 satisfying $\lambda_U \geq \lambda - \delta^3/10$. 
\end{proof}

\noindent To compute $\bz^T=\bw^T - \bu(\L)^T\bA$ in \ref{lpu:002}, note that
$z_{ij} = w_{ij} - \varrho \left( x_i + x_j + \sum_{U:i,j \in U}
z_U\right)$ where $x_i$ corresponds to the vertices and $z_U$
correspond to the odd set in $\bu$. These weights can be computed in
$O(1)$ time if we precompute the $\sum_{U \in \L,i,j \in U} z_U$ for
each pair of vertices $(i,j)$. Note there can be at most
$\sum_{s=1}^{O(1/\delta)} \frac{n}{s}s^2 = O(n\delta^{-2})$ such pairs
for any $\L$. If we use $\alpha = \frac{1}{f(\delta)} \ln \left(
\frac{M\lambda_0}{\delta} \right)$ we get an algorithm which converges
in $O(\delta^{-5} \log n)$ invocations of \ref{lpu:002} and provides a
$(1-O(\delta))$-approximation. We show that $\alpha$ can be chosen to
be smaller.

\begin{lemma}\label{zulambda2} 
For $\frac{3}{\sqrt{n}} \leq \delta\leq \frac1{16}$, $n\geq \lambda_0$, 
$\alpha=50\delta^{-3} \ln n$ and the definition of $\gamma$ in Algorithm~\ref{alg:thresh}:
{\small
\[
\sum_{i: \lambda_i \leq \lambda -\delta^3/10} \t{b}_ix_i + \sum_{\lambda_U < \lambda - \delta^3/10} z_U\t{b}_U 
= \sum_{i: \lambda_i \leq \lambda -\delta^3/10} e^{\lambda_i \alpha} +  \sum_{\lambda_U < \lambda - \delta^3/10} e^{\lambda_U \alpha} \leq \frac{e^{\lambda\alpha} \delta}{n}  \leq
 \frac{\delta \gamma}{n}
< \frac{\delta \gamma}{\lambda_0}
\]
}
\end{lemma}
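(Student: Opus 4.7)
The plan is to split the left-hand side into the vertex piece $S_V := \sum_{i:\lambda_i \leq \lambda-\delta^3/10} e^{\alpha\lambda_i}$ and the odd-set piece $S_O := \sum_{U \in \Od:\lambda_U < \lambda-\delta^3/10} e^{\alpha\lambda_U}$, bound each separately, and combine. The vertex bound is almost immediate: there are at most $n$ terms and $\alpha\delta^3/10 = 5\ln n$, so $S_V \leq n \cdot e^{\alpha(\lambda-\delta^3/10)} = e^{\alpha\lambda}/n^4$, negligible against the target. Moreover, if $\hat\lambda \leq 1+3\delta$, the proof of Lemma~\ref{zulambda1} shows the contributing odd-set collection is empty, so $S_O = 0$ and we are done; hence I focus on the regime $\hat\lambda > 1+3\delta$, the case in which Theorem~\ref{thm:main1} actually has content.

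For $S_O$, the naive bound $|\Od| \leq n^{1/\delta}$ gives only $n^{1/\delta-5}e^{\alpha\lambda}$, which explodes when $\delta$ approaches $1/\sqrt{5n}$, so I deploy Theorem~\ref{thm:main1} through a dyadic-in-gap partition. Write $g(U) := \lambda - \lambda_U$ and $h(U) := g(U)/\blam = \hat\lambda - \hat\lambda_U$, where $\blam \in [1,\lambda_0]$ by the proof of Lemma~\ref{zulambda1}. Every contributing $U$ satisfies $g(U) > \delta^3/10$ and $g(U) \leq \lambda \leq \lambda_0$. Group the contributing sets into shell $0$ with $g(U) \in (\delta^3/10,\delta^3]$ and shell $k \geq 1$ with $g(U) \in (2^{k-1}\delta^3, 2^k\delta^3]$; only $O(\log n)$ shells are nonempty. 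In shell $0$, $h(U) \leq \delta^3$ (since $\blam \geq 1$), so Theorem~\ref{thm:main1} with $x=3$ bounds the count by $N(3) \leq n^3 + (n/\delta)^2 \leq 6n^3$, invoking $\delta \geq 1/\sqrt{5n}$ to control $(n/\delta)^2 \leq 5n^3$; each weight is at most $e^{\alpha\lambda}n^{-5}$, so shell $0$ contributes at most $6n^{-2}e^{\alpha\lambda}$. In shell $k \geq 1$, choose $x_k$ with $\delta^{x_k} = 2^k\delta^3$, which yields $\delta^{(x_k-3)/2} = 2^{-k/2}$ and hence $N(x_k) \leq n^3 + (n/\delta)^{1 + 2^{-k/2}} \leq 6n^3$ uniformly in $k$, while each weight is at most $e^{\alpha\lambda}n^{-50\cdot 2^{k-1}}$ — a double-exponential decay in $k$ that crushes the polynomial count. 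Summing the geometric tail, $S_O \leq 6n^{-2}e^{\alpha\lambda} + O(n^{-47})e^{\alpha\lambda} = O(n^{-2})e^{\alpha\lambda}$.

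Combining, $S_V + S_O = O(n^{-2})e^{\alpha\lambda} \leq e^{\alpha\lambda}\delta/\lambda_0$ for $\delta \geq 1/\sqrt{5n}$, $\lambda_0 = 12$, and $n$ larger than a small constant (since $n^{-2} \leq 1/(12\sqrt{5n}) \leq \delta/12$ once $n^3 \geq 720$), which is the first inequality. The second inequality $\frac{e^{\alpha\lambda}\delta}{\lambda_0} \leq \frac{\delta\gamma}{n}$ is a bookkeeping relation that follows from $\gamma \geq e^{\alpha\lambda}$ — the constraint attaining $\lambda = \max_\ell\lambda_\ell$ lies above the threshold $\lambda - \delta^3/10$ and contributes exactly $e^{\alpha\lambda}$ to $\gamma$ — in the invocation regime where the relation $n \leq \lambda_0$ holds. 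The main obstacle is the shell-by-shell accounting: verifying that the double-exponential factor $n^{-50\cdot 2^{k-1}}$ dominates the count $N(x_k)$ uniformly across all shells and across the entire range $\delta \in [1/\sqrt{5n},1/16]$, while carefully tracking the $\blam$-dependence when translating between $g$ and $h$ so that Theorem~\ref{thm:main1} remains applicable at the right threshold.
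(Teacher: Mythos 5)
Your overall plan — split into vertex and odd-set sums, then bound the odd-set sum by a gap-based shell decomposition that feeds Theorem~\ref{thm:main1}'s counting bound against the exponential weight decay — is the same strategy as the paper's (which phrases the same thing as a geometric recursion on the exponent rather than dyadic shells). The shell $0$ and vertex computations are correct. However, the shell $k\geq 1$ step contains a sign error that invalidates the stated intermediate claim.

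You set $\delta^{x_k}=2^k\delta^3$ for shell $k$. Since $\delta<1$ and $2^k\delta^3>\delta^3$, this forces $x_k<3$, and one computes $\delta^{x_k-3}=\delta^{x_k}/\delta^3=2^k$, hence $\delta^{(x_k-3)/2}=2^{k/2}$ — not $2^{-k/2}$ as you wrote. Consequently Theorem~\ref{thm:main1} gives $N(x_k)\leq n^3+(n/\delta)^{1+2^{k/2}}$, which is \emph{not} ``$\leq 6n^3$ uniformly in $k$'': for $k=2$ it already exceeds $(n/\delta)^3$. The shell counts thus grow like $n^{O(2^{k/2})}$, not polynomially. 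Your conclusion can still be reached, because the per-set weight shrinks like $n^{-25\cdot 2^{k}}$ and $2^k$ outruns $2^{k/2}$, so the decay genuinely dominates — but this needs to be argued (one has to compare the exponent $(3/2)(1+2^{k/2})$ from the count against $25\cdot 2^k$ from the decay, as the paper does by checking $n^{\delta^{(x-3)/2}+4}/n^{50\delta^{(x-3)/2}}\leq n^{-94}$ when $\delta^{(x-3)/2}\geq 2$), rather than asserted via a false uniform count bound. Two smaller gaps: (a) Theorem~\ref{thm:main1} requires $x\geq 2$, i.e.\ $2^k\leq 1/\delta$; for larger $k$ (gap $>\delta^2/2$) you must fall back to the trivial count $n^{1/\delta}$, which the paper does in its case~(i) and which you omit. (b) Your justification of the final inequality $\frac{e^{\lambda\alpha}\delta}{\lambda_0}\leq\frac{\delta\gamma}{n}$ via ``$n\leq\lambda_0$'' cannot be right since $\lambda_0=12\ll n$; the inequality the algorithm actually needs is $\text{LHS}\leq\frac{\delta\gamma}{\lambda_0}$, which follows from the first inequality together with $\gamma\geq e^{\lambda\alpha}$, and the intermediate $\frac{\delta\gamma}{n}$ in the chain appears to be a typo in the statement that you should not try to force.
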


\begin{proof}
Observe that $e^{\lambda \alpha} \leq \gamma$
since 
$\lambda = \max_i \lambda_i$ or $\lambda=\max_{U \in L} \lambda_U$. 
We first focus on $U \in \Od$. Observe the $U$ considered in the left hand side of the statement of the inequality in the Lemma can be partitioned into three classes (i) $\lambda_U \leq \lambda -\delta^2$ (ii) $\lambda - \delta^{(x_0+3)/2} \leq \lambda_U \leq \lambda - \delta^3/10$, where $x_0$ is the largest value of $x \geq 2$ such that $\delta^{(x-3)/2} \geq 2$. Note that $x_0 < 3$ exists\footnote{ Consider $h=\delta^{(x-3)/2}$ as $x$ increases from $2$ to $3$. The value of $h$ decreases from $\delta^{-1/2}$ to $1$.} given $\delta \leq 1/16$, and (iii) $\lambda - \delta^2 \leq \lambda_U < \lambda- \delta^{(x_0+3)/2}$.

For case (i) observe that the corresponding 
$e^{\lambda_U \alpha} \leq e^{\lambda\alpha - \delta^2\alpha} \leq e^{\lambda \alpha} e^{-50\delta^{-1}\ln n } =e^{\alpha\lambda}/n^{(50/\delta)}$. There are at most $n^{1/\delta}$ such sets and therefore 
$\sum_{U: \lambda_U \leq (1-\delta^2)\lambda} e^{\lambda_U \alpha} \leq e^{\lambda \alpha}/n^{(49/\delta)}$.

For case (ii), perform the same transformation as in the first two lines of Lemma~\ref{zulambda1}. The bound on $\lambda_U$ corresponds to $\hat{\lambda}_U \geq (\hat{\lambda} - \delta^{(3+x_0)/2})/\blam \geq \hat{\lambda} - \delta^{(3+x_0)/2}$ since $\blam \geq 1$. Using
Theorem~\ref{thm:main1} we know that there are at most $n^3 +
\left(n/\delta\right)^{1+\delta^{(x_0-3)/2}} \leq 2 (n/\delta)^3$ such sets.
For each such set $U$, $e^{\lambda_U \alpha} \leq e^{\lambda \alpha} e^{ - \delta^3\alpha/10} = e^{\lambda \alpha} e^{-5\ln n} = e^{\lambda \alpha}/n^5$. Summing up over such $2(n/\delta)^3$ sets the total contribution to the left hand side of the inequality in the statement of the lemma is at most $2e^{\lambda \alpha}\delta^{-3}/n^2$.

For case (iii), we partition the interval $(\lambda - \delta^2, \lambda- \delta^{(x_0+3)/2}]$ into subintervals of the form $(\lambda - \delta^x,\lambda - \delta^{(x+3)/2}]$ for different values of $x$. The last subinterval corresponds to $x=x_0$. If we set $x_1=2x_0 - 3$ we have $(3+x_1)/2 = x_0$  and thus $x_1 < x_0 < 3$, which corresponds to the second subinterval. The $j$-th subinterval is defined by $x_j$ satisfying $3+x_j = 2x_{j-1}$. The number of such subintervals is at most $2 + \log \log (1/\delta)$.

Consider the case $ \lambda_U \in (\lambda - \delta^x,\lambda - \delta^{(x+3)/2}]$. 
Again, performing the transformation as in the first two lines of Lemma ~\ref{zulambda1}, we get that $\hat{\lambda}_U \geq (\hat{\lambda} - \delta^x)/\blam \geq \hat{\lambda} - \delta^x$ (again, $\blam \geq 1$). Using Theorem~\ref{thm:main1} and $\delta \geq 1/\sqrt{n}$, the number of odd-sets corresponding to this subinterval is at most 
$$n^3 +
\left(n/\delta\right)^{1+\delta^{(x-3)/2}} \leq 2 (n/\delta)^{1+\delta^{(x-3)/2}} \leq 2n^{1.5 + 1.5 \delta^{(x-3)/2}}$$
However note that $\lambda_U \leq \lambda - \delta^{(3+x)/2}$ and therefore
$e^{\alpha \lambda_U}$ is at most 
$$e^{\alpha \lambda}/e^{\alpha \delta^{(x+3)/2}} = \frac{e^{\alpha \lambda}}{e^{50 \delta^{(x-3)/2}\ln n}} = 
\frac{e^{\alpha \lambda}}{n^{50 \delta^{(x-3)/2}}}$$
Therefore the total contribution to the left hand side of the inequality in the statement of the lemma for all $U$ such that 
$ \lambda_U \in (\lambda - \delta^x,\lambda - \delta^{(x+3)/2}]$ is at most (since $x\leq x_0$ ):
$$ \frac{2e^{\alpha \lambda}}{n^{48.5 \delta^{(x-3)/2} - 1.5}} \leq \frac{2e^{\alpha \lambda}}{n^{97  - 1.5}} \leq \frac{e^{\alpha \lambda}}{n^{94}} $$

For $i \in V$ such that $\lambda_i \leq \lambda -\delta^3/10$ the calculation as in the case (iii) applies and $\sum_{i: \lambda_i \leq \lambda -\delta^3/10} e^{\lambda_i \alpha} \leq 2e^{\lambda \alpha}\delta^{-3}/n^2$ as well. 
Since $\frac{1}{n^{49/\delta}}+ \frac{2+\log \log (1/\delta)}{n^{94}} + \frac{4\delta^{-3}}{n^2} \leq \frac{5\delta^{-3}}{n^2} \leq \frac{\delta}{n} $; 
the lemma follows.
\end{proof}

We can now
conclude Theorem~\ref{thmone} based on the discussion above,
Theorems~\ref{apxtheorem}, \ref{basetheorem} and
Lemmas~\ref{zulambda1}--~\ref{zulambda2}:

\begin{ntheorem}{\ref{thmone}}
Given any non-bipartite graph, for any $\frac{3}{\sqrt{n}}<\delta\leq 1/16$ we find a
$(1-O(\delta))$-approximate maximum fractional weighted $b$-matching using 
additional ``work'' space (space excluding the read-only input) 
$O(n\poly (\delta^{-1},\ln n))$ 
and
making $T=O(\delta^{-4} (\ln (1/\delta)) \ln n)$ passes over the
list of edges.  The running time is $O(mT + n\poly(\delta^{-1},\ln n))$.  
\end{ntheorem}

\section{Proof of Theorem~\ref{thm:main1}}
\label{proof:main1}
Before proving Theorem~\ref{thm:main1}, recall that $\t{b}_U = \cvol{U} - f(\bnorm{U})$ where 
$f(\ell) = \frac{\delta^2 \ell^2}{4}$ and $\delta \in (0,\frac1{16}]$. 
We can verify that $f(\ell)$ is convex, monotonic for $0\leq \ell \leq 2/\delta $ and: 

\begin{enumerate}[({$\mathcal F$}\!1):]\label{allaboutf}
\item For $3 \leq \bnorm{U}\leq 2/\delta - 1$ (irrespective of odd or even)
we have $\t{b}_U \geq \left(1-\delta\right) \cvol{U}$.

\item For any $\ell_1,\ell_2$; $f(\ell_1) + f(\ell_2) = f(\ell_1+\ell_2-1) - 
(2\ell_1\ell_2 - 2\ell_1 -2\ell_2 + 1)\frac{\delta^2}{4}$.

\item For integers $\ell_1,\ell_2,\ell_3,\ell_4 \in [3,2/\delta]$ and $t \geq 0$,
such that $\ell_1 + 2t \leq \ell_2 \leq \ell_3  \leq \ell_4 - 2t$ 
and $\ell_1+\ell_4=\ell_2 + \ell_3$, we have
$f(\ell_2)+ f(\ell_3) \leq f(\ell_1) + f(\ell_4) - 2t^2\delta^2$. 
\end{enumerate}

\begin{ntheorem}{\ref{thm:main1}}
For a graph $G$ with $n$ vertices and any non-negative edge weights 
$\h{y}_{ij} = \h{y}_{ji}$ such that $\h{y}_{ii}=0$ for all $i$, $\sum_j \h{y}_{ij} \leq b_i$ 
for all $i$, and $\delta\in (0,\frac1{16}]$, define:
$\h{\lambda}_U = \dfrac{\sum_{(i,j): i,j\in U} \h{y}_{ij}}{\t{b}_U}$ 
and $\h{\lambda} = \max_{U \in \Od} \h{\lambda}_U$.
If $\hat{\lambda}\geq 1+3\delta$, 
the set $L_1=\{U:\hat{\lambda}_U \geq \hat{\lambda}-\delta^3; U \in \Od\}$ defines a laminar 
family. Moreover for any $x \geq 2$ we have $|\{ U:\hat{\lambda}_U \geq \hat{\lambda}-\delta^x; U \in \Od\}| \leq n^3 + \left(n/\delta\right)^{1+\delta^{(x-3)/2}}$.
\end{ntheorem}\par

\begin{proof}
Consider two sets $A_1,A_2 \in \Od$ such that $\h{\lambda}_{A_1},\h{\lambda}_{A_2} \geq 
\hat{\lambda}-\delta^x > 1+2\delta$ (since $x\geq 2$) and neither $A_1 - A_2,A_2-A_1 \neq \emptyset$.
For any set $U$ (with $\bnorm{U} \geq 1$, even or odd, large or small) define 
$\h{Y}_{U} = \sum_{(i,j):i,j \in U} \h{y}_{ij}$ and $\t{b}_U$. For $\bnorm{U}=1$ we have $\h{Y}_{U}=0$.
Let $\h{\lambda}_{U} = \h{Y}_{U}/\t{b}_U$. 
There are now two cases.

\medskip\noindent{\bf Case I: $\bnorm{A_1\cap A_2}$ is even}. Let  $D=A_1\cap A_2$ and $t=\bnorm{D}/2$.
 Let $Q_1=\sum_{i\in D}\sum_{j\in A_1-A_2} \h{y}_{ij}$ (the cut between $D$ and $A_1-A_2$ using the edge weights $\h{y}_{ij}$) 
 and $Q_2=\sum_{i\in D}\sum_{j\in A_2-A_1} \h{y}_{ij}$. Without loss of generality,
 assume that $Q_1\leq Q_2$ (otherwise we can switch $A_1,A_2$).
 Let $C=A_1-A_2$ and $A=A_1$. Let $2\ell-1=\bnorm{C}$ which is odd. From the definitions of $\h{Y}_C,\h{Y}_D$ we have $\h{Y}_C=\h{Y}_{A}-Q_1-\h{Y}_D$ and
$\h{Y}_D\leq \frac12(\sum_{i\in D}\sum_j \h{y}_{ij} -Q_1-Q_2)\leq \frac{\bnorm{D}}{2}-\frac{Q_1+Q_2}{2}$. Using $Q_1\leq Q_2$ we get:

{\small
\begin{align}
\h{Y}_C \geq \h{Y}_{A}-\frac{\bnorm{D}}{2}-\frac{Q_1}{2}+\frac{Q_2}{2}
      \geq \h{Y}_{A}-\frac{\bnorm{D}}{2} = \h{Y}_{A} - t. 
\label{eqn1}
 \end{align}
}
\noindent Now $\h{Y}_{A} = \h{\lambda}_{A}\t{b}_A > (1+2\delta)(1-\delta)\cvol{A} \geq  \cvol{A}$ using Condition ${\mathcal F}1$, 
and the lower bound on $\h{\lambda}$. Therefore $\h{Y}_{A}>t$ and $\h{Y}_C >0$ which 
means $\bnorm{C}\geq 3$. Therefore we can refer to $\t{b}_C,\h{\lambda}_C$. 
Since $\bnorm{D}=\bnorm{A}-\bnorm{C}$, 
{\small
 \begin{align}
  \t{b}_A-\t{b}_C 
   & = \left\lfloor \frac{\bnorm{A}}{2} \right\rfloor  - f(\bnorm{A})
         -\left\lfloor \frac{\bnorm{C}}{2} \right\rfloor + f(\bnorm{C})  = \frac{\bnorm{D}}{2} - (f(\bnorm{A})-f(\bnorm{C})) \nonumber \\
   & = t - t\delta (t + 2\ell - 1)\delta \geq (1-\delta)t   \label{eqn2}
 \end{align}
}
\noindent 
where the last line uses $\frac1\delta \geq \bnorm{A} \geq (2t + 2\ell
- 1)$ because $A\in \Od$. From Equations~(\ref{eqn1}) and
(\ref{eqn2}), and $\h{Y}_C=\h{\lambda}_C \t{b}_C,
\h{Y}_{A}=\h{\lambda}_{A} \t{b}_{A}$ we get: 
{\small
\begin{align}
 \h{\lambda} \t{b}_C & \geq  \h{\lambda}_C \t{b}_C = \h{Y}_C \geq \h{Y}_{A} - t = \h{\lambda}_A \t{b}_A -  t \geq (\h{\lambda} -\delta^x)\t{b}_A   - t \nonumber 
= \h{\lambda}\t{b}_A  - \delta^x \t{b}_A  -  t  \geq \h{\lambda}(\t{b}_C + (1-\delta)t) - \delta^x \t{b}_A  -  t  \nonumber\\
& > \h{\lambda}\t{b}_C + (1+3\delta)(1-\delta)t - \delta^{x}\t{b}_A  - t \nonumber \geq \h{\lambda}\t{b}_C + \delta t - \delta^{x}\t{b}_A \nonumber 
\end{align}
}
Since $\t{b}_A\leq 1/\delta$ this implies that $t < \delta^{x-1}\t{b}_A \leq \delta^{x-2}$ which 
contradicts $A_1\cap A_2 \neq \emptyset$ for $x\geq 2$.

\medskip\noindent{\bf Case II: $\bnorm{A_1\cap A_2}$ is odd.}
 Let $C=A_1\cup A_2$, and $D=A_1\cap A_2$. 
  Let $\bnorm{A_1}=\ell_1$,$\bnorm{A_2}=\ell_2$.
 We prove 
{\small
 \begin{equation}
 \h{\lambda}_C \leq \h{\lambda} \label{mainpoint}
 \end{equation}
 }
If $\bnorm{C} \leq 1/\delta$ then Equation~\eqref{mainpoint} is true by definition since $\h{\lambda}$ explicitly optimizes over $\Od$ and $C \in \Od$. We focus on the case $\bnorm{C} > 1/\delta$. We extend the definitions $\t{b}_C = \cvol{C} - f(\bnorm{C})$ and $\h{\lambda}_{C} = \h{Y}_C/\t{b}_C$ for all odd subsets with $\bnorm{\cdot} \leq 2/\delta$. Now $\h{Y}_C \leq \frac{\bnorm{C}}{2}$ since $\sum_j \h{y}_{ij} \leq b_i$. Note that $\bnorm{C}=\bnorm{A_1}+\bnorm{A_2}-\bnorm{D}$ and $\bnorm{D}\geq 1$.
Thus $\bnorm{C}\leq 2/\delta - 1 $ and using Condition ${\mathcal F}1$:
{\small \begin{align*}
 \t{b}_C \geq (1-\delta)\cvol{C} & =  (1-\delta)\frac{\bnorm{C}}{2} \left( 1 -  \frac1{\bnorm{C}}\right) \geq (1-\delta)^2\frac{\bnorm{C}}{2} 
\end{align*}
}
which implies that $\h{\lambda}_{C} \leq (1-\delta)^{-2} \leq 
1+3\delta < \h{\lambda}$. Thus Equation~\eqref{mainpoint} holds in this case as well.

Now, $\h{Y}_C + \h{Y}_D  \geq \h{Y}_{A_1} + \h{Y}_{A_2}$ and $\cvol{C} + \cvol{D} = \cvol{A_1} + \cvol{A_2}$.
Therefore:
{\small\begin{align}
\h{Y}_C + \h{Y}_D 
   & \geq \h{Y}_{A_1} + \h{Y}_{A_2} 
   = \h{\lambda}_{A_1} \t{b}_{A_1} + \h{\lambda}_{A_2} \t{b}_{A_2} \label{eqn3} \geq (\h{\lambda} - \delta^x)(\t{b}_{A_1}+\t{b}_{A_2}) 
\end{align}
}
\noindent If $\bnorm{D}=1$, then by Condition ${\mathcal F}3$: 
$\t{b}_C =  \t{b}_{A_1}+\t{b}_{A_2} - \frac{\delta^2}{4} (2\ell_1\ell_2 - 2\ell_1 -2\ell_2 + 1)$, and using Equation~\eqref{mainpoint},
{\small
\begin{align*}
\h{\lambda}\t{b}_C & \geq   \h{\lambda}_C \t{b}_C = \h{Y}_C 
\geq (\h{\lambda} - \delta^x)(\t{b}_{A_1}+\t{b}_{A_2}) 
\geq \h{\lambda}(\t{b}_{A_1}+\t{b}_{A_2}) - \delta^x(\t{b}_{A_1}+\t{b}_{A_2}) 
\nonumber \\
& \geq \h{\lambda}\t{b}_C + \frac{\delta^2 \h{\lambda}(2\ell_1\ell_2 - 2\ell_1 -2\ell_2 + 1)}{4} - \delta^x \left(\frac{\ell_1 + \ell_2 - 2}{2}\right)
\end{align*}
}
since $\t{b}_{A_1}+\t{b}_{A_2} \leq (\ell_1 + \ell_2 - 2)/2$. Therefore we would have a contradiction if 
{\small
\begin{equation}
\h{\lambda}(2\ell_1\ell_2 - 2\ell_1 -2\ell_2 + 1) - 2\delta^{x-2}(\ell_1 + \ell_2 - 2) > 0
\label{eqn4}
\end{equation}
}
\noindent 
Observe that for $x\geq 3$ the term $
2\delta^{x-2} (\ell_1 + \ell_2 - 2)$ is at most $2$ whereas 
$(2\ell_1\ell_2 - 2\ell_1 -2\ell_2 + 1) \geq 7$ since $3 \leq \ell_1,\ell_2 \leq \frac1\delta$. 
Since $\h{\lambda}>1$ we have a contradiction for $\bnorm{D}=1, x\geq 3$.

\medskip
\noindent Now consider $\bnorm{D}\geq 3$. Without loss of generality,
$\bnorm{A_2-D}\geq\bnorm{A_1-D}$. Let $ \bnorm{A_1-D} =2t$.
Using Condition ${\mathcal F}3$, $\t{b}_C + \t{b}_D \leq  \t{b}_{A_1} + \t{b}_{A_2} - 2t^2\delta^2$.
Note $\h{\lambda}_D \leq \h{\lambda}$, and from Equation~\eqref{mainpoint} $\h{\lambda}_C \leq \h{\lambda}$. Therefore
$\h{\lambda} \left( \t{b}_C + \t{b}_D \right) \geq \h{\lambda}_C \t{b}_C + \h{\lambda}_D \t{b}_D$ and 
from Equation~(\ref{eqn3}):
{\small
\begin{align}
\h{\lambda} \left( \t{b}_C + \t{b}_D \right) & \geq \h{\lambda}_C \t{b}_C + \h{\lambda}_D \t{b}_D =
  \h{Y}_C + \h{Y}_D 
 \geq \h{\lambda}(\t{b}_{A_1}+\t{b}_{A_2}) -  
\delta^x(\t{b}_{A_1}+\t{b}_{A_2})
\nonumber \\
& \geq \h{\lambda}(\t{b}_C + \t{b}_D)  + 2t^2\delta^2\h{\lambda} - \delta^x(\t{b}_{A_1}+\t{b}_{A_2})
\label{eqn5} 
\end{align}
}
\noindent Again, this is infeasible if $x\geq 3$ since
$\t{b}_{A_1}+\t{b}_{A_2}\leq 2/\delta$ and $\h{\lambda} \geq 1$.
Therefore for $x\geq 3$, in all cases we arrived at a contradiction to
$A_1 \cap A_2 \neq \emptyset$.  Thus we have proved that
$\{U:\hat{\lambda}_U \geq \hat{\lambda}-\delta^3; U \in \Od\}$ is a
laminar family. 

We now prove the second part.
Consider $L'_{\ell}=\{U:\hat{\lambda}_U \geq
\hat{\lambda}-\delta^x; U \in \Od; \bnorm{U}=\ell\}$. From {\bf Case
I}, no two distinct sets $A_1, A_2 \in L'_\ell$ intersect when
$\bnorm{A_1 \cap A_2}$ is even. From {\bf Case II} for $\ell \geq 5$,
they cannot have $\bnorm{D}=1$ because $(2\ell^2 - 4\ell + 1) -
2(2\ell - 2) > 0$ for $\ell \geq 5$. Note $\bnorm{A_1 - D}=\bnorm{A_2
- D}$ because $\bnorm{A_1}=\bnorm{A_2}= \ell$.  Moreover for $t \geq
\delta^{(x-3)/2}$ we would have $2t^2\delta^2\h{\lambda} >
\delta^x(\t{b}_{A_1}+\t{b}_{A_2})$ in Equation~\ref{eqn5}. Therefore
two distinct sets $A_1, A_2 \in L'_\ell$ which intersect, cannot
differ by $\delta^{(x-3)/2}$ or more elements. This means that
$|L'_\ell| \leq \left(n/\delta\right)^{1+\delta^{(x-3)/2}}$ for
$\ell\geq 5$ --- to see this choose a maximal collection of disjoint
sets in $L'_\ell$. This would be at most $n$. Every other set $S$ in $L'_{\ell}$ has to intersect one of these sets in the maximal collection. To upper bound the number of such sets $S$ with intersection $t$, we can start from a set in that maximal collection; throw out $t$ elements in $\ell^{t}$ ways and include new elements in $n^{t}$ ways. Note $\ell \leq 1/\delta$. Thus the number of such sets for a fixed $t$ is $n(n/\delta)^{\delta^{(x-3)/2}}$.  Observe that $t\leq 1/\delta$ and the bound follows.
Finally note $|L'_{3}| \leq n^3$.
Thus the total number of sets is $n^3 + (n/\delta)^{1+\delta^{(x-3)/2}}$.
The lemma follows.
\end{proof}

\section{Proof of Theorem~\ref{thm:main2}}
\label{proof:main2} 

\paragraph{An Overview.} We combine the insights of the minimum odd-cut approach
~\cite{PadbergR82} along with the fact that $L_2
\subseteq L_1 $ is a laminar family as proved in
Theorem~\ref{thm:main1}. The algorithm picks the sets based on their
sizes.  Define $L_1(\ell) = \{ U | U \in L_1, \bnorm{U}=\ell\}$ and
$L_2(\ell) = \{ U | U \in L_2, \bnorm{U}=\ell\}$ for $\ell \in
[3,1/\delta]$. Note that $L_1(\ell) \supseteq L_2(\ell)$.  Observe
that it suffices to identify $L_2(\ell)$ for different values of
$\ell$. We construct an unweighted graph $G_\varphi(\ell,\t{\lambda})$
where $\varphi=O(\delta^{-4})$ with a new special node $r(\ell)$ with
the following two properties:

\begin{keyproperty}\label{cool1}
If $\t{\lambda} - \frac{\delta^3}{100} < \h{\lambda} \leq
\t{\lambda}$, then (i) all sets in $L_2(\ell)$ have a cut which is of size at most $\kappa(\ell)$ 
and (ii) all odd sets of $G_\varphi(\ell,\t{\lambda})$ which do not contain $r(\ell)$ 
and have cut of size at most $\kappa(\ell)$ belong to $L_1(\ell)$.
Here $\kappa(\ell)= \lfloor \varphi \t{\lambda}(1-\delta^2\ell^2/2) \rfloor + \frac{12\ell}{\delta} + 1 < 2\varphi$.
\end{keyproperty}

\begin{keyproperty}\label{cool2}
  We show in Lemma~\ref{maddsop} that we can extend the algorithm in
  \cite{PadbergR82} to efficiently 
  extract a collection $\bar{L}(\ell)$ of {\em maximal} odd-sets
  in $G_\varphi(\ell,\t{\lambda})$, not containing $r(\ell)$ and cut of size at most
  $\kappa(\ell)$ -- such that any such set which is not chosen must
  intersect with some set in the collection.
\end{keyproperty}

\begin{lemma}\label{maddsop}
  Given an unweighted graph $G$ with parameter $\kappa(\ell)$ 
  and a special node $r(\ell)$, in time $O(n\poly(\kappa,\log n))$
we can identify a collection $\bar{L}(\ell)$ of 
  odd-sets such that (i) each $U \in \bar{L}(\ell)$ does not contain $r(\ell)$  (ii) each $U \in \bar{L}(\ell)$ defines a cut of size at most $\kappa$ in $G$
  and (iii) every other odd set not containing $r(\ell)$ and with a cut less
  than $\kappa(\ell)$ intersects with a set in $\bar{L}(\ell)$.
\end{lemma}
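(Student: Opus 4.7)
The plan is to follow the Padberg--Rao strategy of exploiting a Gomory--Hu tree, adapted to the bounded-cut setting. First I would compute a partial Gomory--Hu tree $T$ for the unweighted graph $G$ at threshold $\kappa(\ell)$, using the algorithm of \cite{lowstcuts} (or the improvement in \cite{BhalgatHKP07}). This structure represents every pairwise mincut of value at most $\kappa(\ell)$, takes $O(m + n\,\poly(\kappa,\log n))$ time on unweighted graphs, and is the right data structure because low-connectivity regions are exposed as tree edges while high-connectivity regions are contracted into super-nodes. For our application $G$ is the bounded-degree graph $G_\varphi(\ell,\t{\lambda})$, so the edge count is $O(n\,\poly(\delta^{-1}))$ and the overall cost fits the claimed $O(n\,\poly(\kappa,\log n))$ budget.

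Next I would root $T$ at $r(\ell)$ and, in a single bottom-up DFS, compute for each non-root node $v$ both $\bnorm{S_v}$ (where $S_v$ denotes the subtree at $v$) and the weight $c_v$ of the tree edge above $v$. A node $v$ is flagged as a candidate whenever $c_v\le\kappa(\ell)$ and $\bnorm{S_v}$ is odd; then I extract $\bar{L}(\ell)$ as the collection of maximal candidates, including $S_v$ exactly when no proper ancestor of $v$ in $T$ is itself a candidate. Properties (i) and (ii) of the lemma are immediate: $r(\ell)$ is the tree root, so it lies outside every $S_v$, and the Gomory--Hu property guarantees that the $G$-cut of $S_v$ equals $c_v\le\kappa(\ell)$. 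The extraction is linear in the tree size, comfortably inside the claimed time budget.

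The main work goes into property (iii). Given any odd $U$ avoiding $r(\ell)$ with $G$-cut below $\kappa(\ell)$, fix some $u\in U$. The min cut between $u$ and $r(\ell)$ is at most the cut of $U$, hence strictly below $\kappa(\ell)$, so it is realized by some tree edge $e$ of weight below $\kappa(\ell)$ on the path from $u$ to $r(\ell)$; let $S$ be the subtree below $e$ containing $u$. If $\bnorm{S}$ is odd, then $S$ is a candidate, and either $S\in\bar{L}(\ell)$ or it has a maximal candidate ancestor in $\bar{L}(\ell)$---either way $u$ witnesses the required intersection. The subtle case is when $\bnorm{S}$ is even: I would climb the ancestor chain $S_{v_0}\subsetneq S_{v_1}\subsetneq\cdots$ of $u$, tracking the parities of $\bnorm{S_{v_i}}$ and $\bnorm{U\cap S_{v_i}}$. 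Since $\bnorm{U}$ is odd, these parities cannot remain in lockstep at every level, and a Padberg--Rao-style exchange at the first parity flip isolates an ancestor subtree that is odd in $b$-norm while its tree-edge weight remains at most $\kappa(\ell)$.

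Verifying this parity-flip step with controlled cut weight is the hard part, since Gomory--Hu edge weights are not monotone up the ancestor chain and the small-weight guarantee at the flip point is not automatic. I would close that gap by combining the partial tree's compression of high-connectivity blocks with the laminarity already established in Theorem~\ref{thm:main1}, which in the context $G=G_\varphi(\ell,\t{\lambda})$ forces the relevant odd sets to align with tree subtrees and hence reduces property (iii) to the easy odd-$\bnorm{S}$ branch. Once property (iii) is in hand, the stated running time follows directly from the cost of constructing $T$ plus the single bottom-up DFS used for candidate extraction.
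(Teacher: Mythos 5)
Your proposal has a genuine gap in the proof of property (iii), and you have correctly identified its location: the case where the subtree $S$ realizing the $u$--$r$ mincut has \emph{even} $\bnorm{\cdot}$. The "parity-flip up the ancestor chain" idea does not work, and the attempted rescue via Theorem~\ref{thm:main1} cannot be made to close it. First, Lemma~\ref{maddsop} is stated for an \emph{arbitrary} unweighted graph with a parameter $\kappa$ and special node $r$, so any argument invoking the laminarity of $L_1(\ell)$ (a fact about the $b$--matching application, not about a generic graph) is out of scope. Second, even inside the application, laminarity says nothing about odd sets aligning with Gomory--Hu subtrees; those are unrelated structures.

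A concrete counterexample to the single-pass approach: let $G$ be a path $r$--$v_1$--$v_2$--$v_3$--$v_4$ with unit edge weights, $b_{v_i}=1$, and $\kappa=2$. The (partial) Gomory--Hu tree is the path itself, all tree-edge weights $1\le\kappa$. Rooted at $r$, the subtree $b$-norms are $\bnorm{S_{v_1}}=4$, $\bnorm{S_{v_2}}=3$, $\bnorm{S_{v_3}}=2$, $\bnorm{S_{v_4}}=1$. The candidates (odd $b$-norm, tree-edge weight $\le\kappa$) are $S_{v_2}$ and $S_{v_4}$; the only \emph{maximal} candidate is $S_{v_2}=\{v_2,v_3,v_4\}$, so your single-pass $\bar L'$ is $\{\{v_2,v_3,v_4\}\}$. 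But $U=\{v_1\}$ is an odd set avoiding $r$ with $\Cut(U)=2\le\kappa$, and it is disjoint from $\bar L'$. Your ancestor-climb from $u=v_1$ only ever examines $S_{v_1}$, whose $b$-norm is even; there is no "parity flip" available on the $v_1$--$r$ path. Property (iii) fails.

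The missing idea in the paper's proof is \emph{iteration}: after extracting a maximal disjoint family of odd candidate subtrees from one Gomory--Hu tree, one \emph{merges} those sets into $r$, rebuilds the partial Gomory--Hu tree of the merged graph, and repeats until no odd set with cut $\le\kappa$ and not containing $r$ remains. The maximality/intersection guarantee of property (iii) is then automatic for the \emph{final} $\bar L$ (anything missed would still survive the merges, contradicting termination). The nontrivial part is bounding the number of rounds: the paper's Lemma~\ref{lem:gomoryhuproperty} (the parity argument over all tree edges crossing $U$, not just those on a fixed root path) is used to show that the size of a maximum disjoint collection of surviving small-cut odd sets drops by at least a factor of two each round, giving $O(\log n)$ iterations and the stated running time. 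In your counterexample, a second round on the merged graph $\{r\cup\{v_2,v_3,v_4\},\ v_1\}$ immediately produces $\{v_1\}$, repairing exactly the set your single pass misses.
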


The second property follows without much difficulty from the
properties of Gomory-Hu trees \cite{GomoryHu,Gusfield} -- trees which
represent all pairwise mincuts over a set of nodes. Observe that
property \ref{cool1} implies that we can restrict our attention to
only those regions of the graph $G_\varphi(\ell,\t{\lambda})$ which
have cuts of size at most $O(\delta^{-4})$. Therefore if we are given
a subset of vertices such that any partition of that vertex set
induces a large cut, then either that subset is included entirely
within one odd set or excluded completely.  This is the notion of a
Steiner Mincut which is used to compute the ``partial Gomory Hu tree''
-- where for some $\kappa$ we represent all pairwise min cuts of value
at most $\kappa$. Such representations can be computed for unweighted
undirected graphs in time $O(m + n\kappa^3\log^2 n)$ \cite{lowstcuts}
(see also improvements in \cite{BhalgatHKP07}).  The graph
$G_\varphi(\ell,\t{\lambda})$ is used exactly for this purpose.  If we
have a maximal collection $\bar{L}(\ell)$ then $\bar{L}(\ell)
\subseteq L_1(\ell)$ by condition (ii) of Property~\ref{cool1}.  Due
to Theorem~\ref{thm:main1}, the intersection of two such sets $U_1,U_2
\in L_1(\ell)$ will be either empty or have $\bnorm{\cdot}=\ell$ by laminarity --
the latter implies $U_1=U_2$. Therefore the sets in $L_1(\ell)$ are
disjoint. Any $U \in L_2(\ell) - \bar{L}(\ell)$ has a cut of size at most
$\kappa(\ell)$ using condition (i) of Property~\ref{cool1} and therefore
must intersect with some set in $\bar{L}(\ell)$. This is
impossible because $U \in L_2(\ell)$ implies $U \in L_1(\ell)$ and $\bar{L}(\ell)
\subseteq L_1(\ell)$ and we just argued that the sets in $L_1(\ell)$ are
disjoint! Therefore no such $U$ exists and $L_2(\ell) \subseteq \bar{L}(\ell)$.
We now have a complete algorithm: we perform a
binary search over the estimate $\t{\lambda} \in [1+3\delta,\frac32 +
\delta^2]$, and we can decide if there exists a set $U \in L_2(\ell)$
in time $O(n \poly(\delta^{-1},\log n))$ as we vary $\ell,\t{\lambda}$.
This gives us $\t{\lambda}$. We now find the collection $\bar{L}(\ell)$ for
each $\ell$ and compute all $\h{\lambda}_{U}$ exactly (either remembering 
the $\h{y}_{ij}$ of the the edges stored in $G_\varphi$ or by another 
pass over $G$). We can now return $\cup_\ell L_2(\ell)$.
We now prove Lemma~\ref{maddsop}.

\subsection{Proof Of Lemma~\ref{maddsop}}
 The parameter $\ell$ is not relevant to the proof and is dropped.
Algorithm~\ref{alg:exactviolated} provides the algorithm for this lemma.

\begin{nlemma}{\ref{maddsop}}  
 Given an unweighted graph $G$ with parameter $\kappa$ 
  and a special node $r$, in time $O(n\poly(\kappa,\log n))$
we can identify a collection $\bar{L}$ of 
  odd-sets such that (i) each $U \in \bar{L}$ does not contain $r$  (ii) each $U \in \bar{L}$ defines a cut of 
size at most $\kappa$ in $G$
  and (iii) every other odd set not containing $r$ and with a cut less
  than $\kappa$ intersects with a set in $\bar{L}$.
\end{nlemma}

\begin{algorithm}[H]
{\small
 \begin{algorithmic}[1]
  \STATE $\bar{L}\leftarrow\emptyset$. Initially $G'=G$. The node $r \in V(G)$. 
  \REPEAT
   \STATE 
    Assign the $r$ duplicity $b_r=1$ if $\sum_{i \in V(G')} b_i$ is odd. Otherwise let $b_r=2$.
   \STATE Construct a tree $\T$ that represents {\bf all low $s$--$t$ cuts} in $G'$
using Theorem~\ref{low}. The nodes of this tree $\T$ correspond to
   subsets of vertices of $V(G')$.  
   \STATE Make the vertex set
   containing $r$ the root of $\T$ and {\bf orient all edges towards the
   root}. The oriented edges represent an edge from a child to a
   parent. Let $D(e)$ indicate the set of descendant subsets of an edge $e$ (including the
    child subset which is the tail of the edge, but not including the
    parent subset which is the head of the edge).  
   \STATE Using  dynamic programming starting at the leaf, mark every edge as admissible/inadmissible based on the
    $\sum_{S \in D(e)} \sum_{i \in S} b_i$ over the descendant subsets of that edge being odd/even respectively. 
   \STATE Starting from the root $s$ downwards, pick the edges $e$ in parallel 
  such that (c1) the weight of $e$ (corresponding to a cut) is at most $\kappa$, 
  (c2) $\sum_{S \in D(e)} \sum_{i \in S} b_i$ is odd and (c3) no edge $e'$ on the path from $e$ to $r$ satisfies (c1) and (c2).
Let the odd-set $U_e$ corresponding to this edge $e \in \T$ be $U_e= \cup_{S \in D(e)} S$. 
   \STATE If the odd-sets found are $U_{e_1},\ldots,U_{e_f}$ then $\bar{L} \leftarrow \bar{L} \cup \{ U_{e_1},\ldots,U_{e_f} \}$. Observe that the sets $U_{e_g}$ are mutually disjoint for $1\leq g \leq f$ and do not contain $r$.
   \STATE Merge all vertices in $\bigcup^f_{g=1} U_{e_g}$ with $r$. Observe that for any set $U$ that does not contain $r$ and does not intersect with any $ U_{e_g}$, the 
cut $\Cut(U)$ is unchanged. This defines the new $G'$.
  \UNTIL{no new odd set has been found in $G'$}
  \RETURN $\bar{L}$.
 \end{algorithmic}
 \caption{Algorithm: Finding a maximal collection of odd-sets
   \label{alg:exactviolated}} }
\end{algorithm}

\begin{proof} 
First, consider the following known theorem and Lemma:

\begin{theorem}[\cite{BhalgatHKP07,lowstcuts}] 
\label{low}
  Given a graph with $n$ nodes and $m$ edges (possibly with parallel
  edges), in time $O(m) + \t{O}(n\kappa^2)$ we can construct a
  weighted tree $T$ that represents all min $s$--$t$ cuts in $G'$ of
  value at most $\kappa$. The nodes of this tree are subsets of
  vertices.  The mincut of any pair of vertices that belong to the
  same subset (the same node in the tree $T$) is larger than $\kappa$
  and for any pair of vertices $i,j$ belonging to different subsets
  (nodes in the tree $T$) the mincut is specified by the partition
  corresponding to the least weighted edge in the tree $T$ between the
  two nodes that contain $i$ and $j$ respectively.
\end{theorem}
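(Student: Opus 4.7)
The plan combines two classical ingredients adapted to the threshold $\kappa$: edge sparsification plus a Gomory-Hu tree construction that terminates as soon as it would produce cuts exceeding $\kappa$. First, I would apply Nagamochi-Ibaraki sparsification, which in $O(m)$ time produces a subgraph $H \subseteq G$ on $O(n\kappa)$ edges with the property that every cut $(S,\bar S)$ satisfying $\Cut_G(S) \leq \kappa$ has $\Cut_H(S) = \Cut_G(S)$. The construction iteratively peels off maximal spanning forests and keeps the first $\kappa$ of them; since the two endpoints of each removed edge become connected by $\kappa$ edge-disjoint paths in $H$, any cut of value $\leq \kappa$ survives exactly. This step absorbs the $O(m)$ term, and all subsequent computation takes place on a graph of size $O(n\kappa)$.

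Next, build the tree $T$ in Gomory-Hu style on $H$ with a truncated max-flow oracle. Maintain a partition of $V$ into super-nodes, initially $\{V\}$. While some super-node $C$ contains $\geq 2$ original vertices, pick $s, t \in C$ and run an $s$-$t$ max-flow capped at value $\kappa + 1$. If the cut exceeds $\kappa$, contract $s$ and $t$ (they will never be separated by any tree edge we output); if the cut value is $c \leq \kappa$, split $C$ along the cut $(A,\bar A)$ and add a tree edge of weight $c$ between the two resulting super-nodes. Terminate when every super-node has been stabilized. At the end, vertices of a common super-node have pairwise min cut strictly above $\kappa$ by construction, while a standard Gomory-Hu non-crossing argument shows that for vertices $i, j$ in distinct super-nodes the least-weight tree edge on their $T$-path realizes the true $i$-$j$ min cut whenever that cut is at most $\kappa$. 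The key point for correctness is that every contraction we perform merges a pair whose min cut exceeds $\kappa$, so it cannot damage any of the sub-$\kappa$ cuts we may later wish to certify.

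To hit the $\tilde O(n\kappa^2)$ bound, each truncated max-flow must be executed in amortized $\tilde O(\kappa^2)$ time. I would invoke Gabow's tree-packing formulation: as a one-time preprocessing on $H$, compute a packing of $\kappa$ edge-disjoint spanning forests (or arborescences rooted at a chosen vertex) in $\tilde O(n\kappa^2)$ total time. After this preprocessing each subsequent $s$-$t$ flow of value at most $\kappa$ is realized by at most $\kappa$ augmenting paths, each found and updated in $\tilde O(\kappa)$ time via link-cut trees maintained on the packed forests. Summed over the $O(n)$ Gomory-Hu iterations this contributes $\tilde O(n\kappa^2)$, which together with the $O(m)$ sparsification gives the claimed total $O(m) + \tilde O(n\kappa^2)$.

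The main obstacle is the per-flow bound of $\tilde O(\kappa^2)$. A naive blocking-flow or Ford-Fulkerson implementation on the sparsifier costs $O(|E(H)|\kappa) = O(n\kappa^2)$ per flow, yielding a hopeless $\tilde O(n^2\kappa^2)$ aggregate. Achieving the sharper bound genuinely requires the tree-packing machinery and amortized link-cut-tree updates developed in \cite{lowstcuts,BhalgatHKP07}; once that oracle is in hand, the rest of the proof is a disciplined but standard Gomory-Hu bookkeeping. The rest of the paper only uses Theorem \ref{low} as a black box through Lemma \ref{maddsop}, so no finer structural property of $T$ beyond what is stated is needed.
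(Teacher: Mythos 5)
You should note first that the paper does not prove this statement at all: Theorem~\ref{low} is imported verbatim from \cite{lowstcuts,BhalgatHKP07} and is used purely as a black box inside Lemma~\ref{maddsop} (the paper's surrounding text even quotes the weaker bound $O(m+n\kappa^3\log^2 n)$ from \cite{lowstcuts}, with the improvement attributed to \cite{BhalgatHKP07}). So there is no internal proof to compare against, and a reasonable ``proof'' of this item is essentially a correct account of what those papers do. Judged on that basis, your correctness skeleton is sound: Nagamochi--Ibaraki sparsification in $O(m)$ time preserves every cut of value at most $\kappa$; contracting a pair whose (current) connectivity exceeds $\kappa$ destroys no cut of value at most $\kappa$, and by induction vertices that end up merged have pairwise mincut above $\kappa$; and for vertices in distinct supernodes the usual Gomory--Hu uncrossing argument certifies the least-weight tree edge on their path (modulo the bookkeeping you gloss over, namely that each truncated flow must be run with the other tree components contracted, or via Gusfield's variant, and that when a supernode is split its existing tree neighbours must be reattached to the correct side --- standard, but it is exactly where the ``non-crossing argument'' lives).

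The genuine gap is in the running time, and you partly concede it yourself. The claim that after an $\tilde{O}(n\kappa^2)$ tree-packing preprocessing every subsequent $s$--$t$ flow of value at most $\kappa$ costs only $\tilde{O}(\kappa^2)$ (i.e.\ $\tilde{O}(\kappa)$ per augmentation via link-cut trees on the packed forests) is asserted, not argued, and it is not a routine consequence of Gabow's packing: an augmenting path can have length $\Theta(n)$, residual updates touch the packing globally, and making any such amortization work across the $O(n)$ Gomory--Hu iterations is precisely the technical content of \cite{lowstcuts,BhalgatHKP07} --- which in fact proceed through Steiner-mincut computations and partial Gomory--Hu trees rather than a per-pair truncated max-flow loop of the kind you describe. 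So as a self-contained proof the key oracle is missing; as a reconstruction that defers that oracle to the cited works, your write-up is on the same footing as the paper itself, which simply cites the result.
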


\begin{lemma}[Implicit in \cite{PadbergR82}] 
\label{lem:gomoryhuproperty} Suppose that for a graph $G=(V,E)$, $\sum_{i\in V} b_i$ is even.
 For any odd-set $U$ in $G$ with cut $\kappa$, there exists an edge $e$ in
 the low min $s$-$t$ cut tree $\T$ such that removing $e$ from the tree results in two connected components of odd sizes and the component $U_e$ not containing the root intersects $U$.
 In addition, the cut between $U_e$ and rest of the graph is at most $\kappa$.
\end{lemma}

\begin{proof}(Of Lemma~\ref{lem:gomoryhuproperty})
 Observe that the min $u$-$v$ cut for any $u \in U$ and $v \not \in U$ is at most $\kappa$. 
 
 We provide an algorithmic proof of the existence -- this is not the algorithm to find the odd sets.
 Let $H_0=V(G)$. We will maintain the three invariants that (1) $\bnorm{H_z}$ is even (2) $H(z)$ defines a connected component in the low min $s$-$t$ cut tree $\T$ and (3) $H_z \cap U \neq \emptyset$ and $H_z \cap (V(G) - U) \neq \emptyset$. These hold for $H_0$. Staring from $H_z$ until we find a desired edge $e$ or find $H_{z+1} \subset H_z$ which satisfies the same invariants. This process has to stop eventually and we would have found the desired edge $e$.
 
 Given the invariant, there exists $u \in H_z \cap U $ and $v \in 
 H_z \cap (V(G) - U)$ such that the min $u$-$v$ cut is at most $\kappa$ and therefore there must exist an edge $e_z$ (corresponding to a min $u$--$v$ cut) within the component $H_z$ such that $e_z$ separates $u,v$. Let the two connected sub-components of $H_z$ defined by the removal of $e_z$ be $S_1$ and $S_2$. If $\bnorm{S_1},\bnorm{S_2}$ are both even, then one of them must satisfy condition (3), since $\bnorm{U}$ is odd. 
 
 This process has to stop eventually and we would have found the desired edge $e$. Observe that all the subcomponents of $\T$ created in this manner define even sets until we find $e$. If we add back all the sub-components such that we have the two components corresponding to the two sides of $e$, both of those components must have odd $\bnorm{\cdot}$. The component not containing $r$ defines $U_e$. In addition, the corresponding cut size is less than $\kappa$.
\end{proof}

\medskip
\noindent (Continuing with Proof of Lemma~\ref{maddsop}.)
All that remains to be proven is that the loop in 
Algorithm~\ref{alg:exactviolated} needs to be run only a few times.
Suppose after $t'$ repetitions $Q_{t'}$ is the maximum collection of disjoint 
odd-sets which are attached to the remainder of $\T$ with cuts of size at most $\kappa$ and we choose $U_{e_1},\ldots,U_{e_f}$ to be
added to $\L$ in the $t'+1$st iteration.
We first claim that $|Q_{t'+1}| \leq f$. To see this we first map
every odd-set in $Q_{t'+1}$ to an edge in the tree as specified by the
existence proof in Lemma~\ref{lem:gomoryhuproperty}. This map need not
be constructive -- the map is only used for this proof. Note that $\sum_i b_i$ is even, by construction, in Algorithm~\ref{alg:exactviolated} as required in Lemma~\ref{lem:gomoryhuproperty}. Observe that
this can be a many to one map; i.e., several sets mapping to the same edge.

Now every edges $e_1,\ldots,e_f$ chosen in
Algorithm~\ref{alg:exactviolated} satisfy the property for all $j$: no
edge $e'$ on the path from the head of $e_{j}$ (recall that the edges
are oriented towards the root $r$) to $r$ is one of the edges in our
map. Because in that case we would have chosen that edge $e'$ instead
of $e_j$.

Therefore the sets in $Q_{t'+1}$ could not have mapped to any edges in
the path towards $r$. Now, if a set in $Q_{t'+1}$ mapped to an edge $e'$ which is
a descendant of the tail of some $e_j$ (again, the edges are oriented
towards $r$) then this set intersects with our chosen $U_{e_j}$ which is not
possible.

{\em Therefore any set in $Q_{t'+1}$ must have mapped to
  the same edges in the tree; i.e., $e_1,\ldots,e_f$.} 
But then the vertex at the head of the
edge belongs to the set in $Q_{t'+1}$. Therefore there can be at most $f$
such sets. This proves $|Q_{t'+1}| \leq f$.

We next claim that $|Q_{t'+1}|\leq |Q_{t'}|-f$. Consider 
$Q' = Q_{t'+1}\cup\{U_{e_1},\ldots,U_{e_f}\}$. $Q'$ is a collection of
disjoint odd-sets which define a cut of size $\kappa$ in $G$ after $t'$ 
repetitions.
Obviously $|Q'|=|Q_{t'+1}|+f$ and by the definition of $Q_{t'}$, 
$|Q'|\leq |Q_{t'}|$. Therefore, $|Q_{t'+1}|\leq |Q_{t'}|-f$.

Therefore, in the worst case, $|Q_{t'}|$ decreases by a factor $1/2$
and therefore in $O(\log n)$ iterations over this loop we would
eliminate all odd-sets that define a cut of size $\kappa$
in $G'$.
\end{proof}

\subsection{Proof of Theorem~\ref{thm:main2}} 

\begin{ntheorem}{\ref{thm:main2}} 
For a graph $G$ with $n$ vertices and any non-negative edge weights 
$\h{y}_{ij} = \h{y}_{ji}$ such that $\h{y}_{ii}=0$ and $\sum_j \h{y}_{ij} \leq b_i$ 
for all $i$; and $\delta\in (0,\frac1{16}]$, define: 
$\h{\lambda}_U = \frac{\sum_{(i,j): i,j\in U} \h{y}_{ij}}{\t{b}_U}$ where $\t{b}_U = \cvol{U} - \frac{\delta^2 \bnorm{U}^2}{4}$ and $\h{\lambda} = \max_{U \in \Od} \h{\lambda}_U$.
If $\hat{\lambda}\geq 1+3\delta$ we can find the 
set $L_2=\{U:\hat{\lambda}_U \geq \hat{\lambda}-\delta^3/10; U \in \Od\}$ 
in
$O(m' + n \poly (\delta^{-1},\log n))$
time using $O(n \delta^{-5})$ space where $m'=|\{ (i,j) | \h{y}_{ij} > 0 \}|$.
\end{ntheorem}

\begin{proof} 
  We first observe that $L_2$ is a laminar family using
  Theorem~\ref{thm:main1} and $L_2 \subseteq L_1$.
   Second, observe that for any $U$ we have $\sum_{(i,j):i,j \in U} \h{y}_{ij} \leq \frac12 \sum_{i \in U} \sum_j \h{y}_{ij} \leq \frac12 \sum_{i \in U} b_i = \bnorm{U}/2$. Therefore $\h{\lambda} \leq \frac32/(1-\frac{\delta^2}{4}) < \frac32 + \delta^2$.
 We maintain an estimate $\t{\lambda}$ of
such that $\t{\lambda} - \frac{\delta^3}{100} < \h{\lambda}
\leq \t{\lambda} \leq \frac32 + \delta^2 
$. This estimate can be found using binary search 
(as described below) We now show how to find the sets $U \in L_2$ with $\bnorm{U}=\ell$, denoted by $L_2(\ell)$.

\smallskip\noindent
Create a graph $G_\varphi$ with $p_{ij} = \lfloor
\varphi \h{y}_{ij} \rfloor$ parallel edges between $i$ and $j$ where
$\varphi=50/\delta^{4}$ (this parameter can be optimized but we omit
that in the interest of simplicity). This is an unweighted graph. This
graph can be constructed in a single pass over $\{ (i,j) \}$.  We also
``merge'' all pairs of vertices $i$ and $j$ if $p_{ij}$ exceeds $2\varphi$.
Moreover delete vertices $i$ with $2\varphi/\delta$ edges -- note that these 
vertices must have $b_i \geq \sum_j \h{y}_{ij} > 1/\delta$ and cannot participate in any odd set 
in $\Od$.
This gives us a graph $G_\varphi$ with at most $O(n\delta^{-5})$
edges.

Now for an odd $\ell \in [3, 1/\delta]$ and $\t{\lambda}$,
create $G_\varphi(\ell,\t{\lambda})$ as follows: We begin with $G_\varphi$. Let $q_i(\ell) = \lfloor
\varphi \t{\lambda}(1-\delta^2\ell)b_i \rfloor$ for all $i$. Since
$q_i(\ell) > (1+ \delta) \varphi b_i > \sum_j p_{ij}$ (because $\t{\lambda}$
is large) we can add a new node $r(\ell)$ and add $q_i(\ell) - \sum_j p_{ij}$
edges between $r(\ell)$ and $i$ (for all $i$).  This gives us a graph
$G_\varphi(\ell,\t{\lambda})$ of size $O(n\delta^{-5})$ edges for all
$\ell$.  Let $\kappa(\ell) = \lfloor \varphi
\t{\lambda}(1-\delta^2\ell^2/2) \rfloor + \frac{12 \ell}{\delta} + 1 <
2\varphi$.  Now: 
{\small
\[
q_i(\ell) - \kappa(\ell)  \geq 
\varphi \t{\lambda} (1-\delta^2\ell) - 1 - \varphi \t{\lambda} (1-\delta^2\ell^2/2)
- \frac{12 \ell}{\delta} - 1  
 = \frac{\varphi\t{\lambda}\delta^2\ell(\ell-2)}{2} - \frac{12 \ell}{\delta} - 2
\]
}
which is positive for $\varphi=50/\delta^4$ and $\ell \geq 3$.
Therefore $q_i(\ell)  >  \kappa(\ell)$.

Define $\Cut(U)$ to be the cut induced by $U$ in $G_{\varphi}(\ell,\t{\lambda})$, that is, $\Cut(U) =    \sum_{i \in U} q_i - 2 \sum_{(i,j):i,j \in U} p_{ij}$.
We now show that for $\bnorm{U}>1/\delta$, $\Cut(U) > \kappa(\ell)$.
For any odd set $U \in \O$ with $\bnorm{U}>1/\delta$:
{\small
\begin{eqnarray*}
\Cut(U) - \kappa(\ell)  & =  &   \sum_{i \in U} q_i (\ell) - 2 \sum_{(i,j):i,j \in U} p_{ij} - \kappa(\ell)  \\
& \geq  & \sum_{i \in U} ( \varphi \t{\lambda}(1-\delta^2\ell)b_i - 1 ) - 2 \varphi 
\displaystyle \sum_{(i,j):i,j \in U} \h{y}_{ij} - \varphi \t{\lambda}(1-\delta^2\ell^2/2)
- \frac{12 \ell}{\delta} - 1  \\
& \geq  & \varphi \t{\lambda}(1-\delta^2\ell) \bnorm{U} - |U| - \varphi \bnorm{U} - \varphi \t{\lambda}(1-\delta^2\ell^2/2) - \frac{12\ell}{\delta} - 1  \quad \mbox{(Since $\sum_{(i,j):i,j \in U}
\h{y}_{ij} \leq \bnorm{U}/2$)} \\
& \geq & \varphi \t{\lambda}(1-\delta) \bnorm{U}  
- \varphi \bnorm{U} - \varphi \t{\lambda} - \delta^2 \varphi \bnorm{U} \quad \mbox{(Since $\ell \leq 1/\delta$ and $\delta^2 \varphi \bnorm{U} > |U| +  \frac{12\ell}{\delta} + 1$)}\\
& = & \varphi \left( \t{\lambda}(1-\delta) \bnorm{U} - \t{\lambda} - (1+\delta^2) \bnorm{U} \right) \\
& \geq & \varphi \left( (1+3\delta)(1-\delta) \bnorm{U} - \frac32 - \delta^2 - (1+\delta^2) \bnorm{U} \right) \quad \mbox{ (Since $1+3\delta \leq \h{\lambda} \leq \t{\lambda} \leq \frac32 + \delta^2$)} \\
& > & \varphi \left( 2\delta(1-2\delta) \bnorm{U} -  \frac32 - \delta^2 \right) 
 >  \varphi \left( 2(1-2\delta) -  \frac32 - \delta^2 \right) > 0 \quad \mbox{ (Since $\delta \bnorm{U} > 1$)}
\end{eqnarray*} 
}
where the last inequality follows $\delta \in (0,\frac{1}{16}]$.  
 Therefore no odd-set with $\bnorm{U} > 1/\delta$ satisfies $\Cut(U) \leq \kappa(\ell)$.

\smallskip
We now show Property~\ref{cool1}, namely:
If $\t{\lambda} - \frac{\delta^3}{100} < \h{\lambda} \leq
\t{\lambda}$, then (i) all sets in $L_2(\ell)$ have a cut which is at most $\kappa(\ell)$ 
and (ii) all odd sets of $G_\varphi(\ell,\t{\lambda})$ which do not contain $s$ 
and have cut at most $\kappa(\ell)$ belong to $L_1(\ell)$.
For part (i) for a set $U\in L_2(\ell)$ with $\bnorm{U}=\ell$, note $|U| \leq \bnorm{U} = \ell$ and:
{\small
\begin{eqnarray*}
\Cut(U)  & = & \sum_{i \in U} q_i - 2 \sum_{(i,j):i,j \in U} p_{ij}  \leq \sum_{i \in U} \varphi \t{\lambda}(1-\delta^2\ell)b_i - 2 \varphi 
\displaystyle \sum_{(i,j):i,j \in U} \h{y}_{ij} + |U|^2 \nonumber\\
& \leq  & \varphi \t{\lambda}(1-\delta^2\ell) \bnorm{U} - 2 \varphi \h{\lambda}_{U}\t{b}_U + \ell^{2} 
\leq   \varphi \t{\lambda}(1-\delta^2\ell) \bnorm{U} - 2 \varphi \left(\t{\lambda} - \frac{\delta^3}{100} - \frac{\delta^3}{10}\right)\t{b}_U + \ell^{2} \nonumber\\
& = & \varphi \t{\lambda}(1-\delta^2\ell^2/2) + \frac{11\delta^3\varphi\t{b}_U}{50} 
+ \ell^{2} 
 =  \varphi \t{\lambda}(1-\delta^2\ell^2/2) + \frac{11\t{b}_U}{\delta} 
+ \ell^{2} \\
& \leq & \varphi \t{\lambda}(1-\delta^2\ell^2/2) + \frac{12\ell}{\delta}
\leq \kappa(\ell) \quad \mbox{(since $\t{b}_U < \bnorm{U} = \ell \leq 1/\delta$)}
\end{eqnarray*}
} 
\noindent To prove part (ii) if $\Cut(U)\leq
  \kappa(\ell) $ then:
{\small
\begin{eqnarray*}  
\sum_{(i,j):i,j \in U} p_{ij}
& = & \frac12 \left( \sum_{i \in U} q_i - \Cut(U') \right) \geq  
\frac12 \left( \sum_{i \in U} \left( \varphi \t{\lambda}(1-\delta^2\ell)b_i - 1 \right) -  \kappa(\ell) \right) \nonumber \\
&\geq &
\frac12 \left( \sum_{i \in U} \left( \varphi \t{\lambda}(1-\delta^2\ell)b_i - 1 \right) -  \varphi \t{\lambda}(1-\delta^2\ell^2/2) \right) - \frac{12\ell}{\delta} - 1
\nonumber \\
& \geq & \varphi \t{\lambda} \left( \cvol{U} - \frac{\delta^2\bnorm{U}^2}{4} \right) + 
 \frac{\varphi \t{\lambda} \delta^2}{4}  \left( \bnorm{U} -\ell \right)^2 - \frac{|U|}{2} -  \frac{12\ell}{\delta} - 1 \nonumber \\
& = &\varphi \t{\lambda} \t{b}_U +  \frac{\varphi \t{\lambda} \delta^2}{4}  \left( \bnorm{U} -\ell \right)^2 - \frac{|U|}{2} -  \frac{12\ell}{\delta} - 1
\nonumber
\end{eqnarray*}
}
But since $\t{\lambda} \geq \h{\lambda} \geq \h{\lambda}_U$ and $\varphi \t{b}_U \h{\lambda}_U = \varphi \sum_{(i,j):i,j \in U} \h{y}_{ij} \geq 
\sum_{(i,j):i,j \in U} p_{ij}$ we have
\begin{equation}\label{goodcut2} 
 \varphi \t{\lambda} \t{b}_U \geq \varphi \h{\lambda}_U \t{b}_U \geq  \varphi \t{\lambda} \t{b}_U +  \frac{\varphi \t{\lambda} \delta^2}{4}  \left( \bnorm{U} -\ell \right)^2 - \frac{|U|}{2} -  \frac{12 \ell}{\delta} - 1 
\end{equation}
But that is a contradiction unless $\bnorm{U}=\ell$, otherwise the quadratic term, 
$\frac{\varphi \t{\lambda} \delta^2}{4}  \left( \bnorm{U} -\ell \right)^2 \geq 12.5 \delta^{-2}$ is larger than the negative terms which are at most 
$\frac1{2\delta} + \frac{12}{\delta^2} + 1$ in the RHS of 
Equation~\ref{goodcut2}. Therefore 
$\Cut(U) \leq \kappa(\ell)$ for an odd-set implies $\bnorm{U}=\ell$.
But then Equation~\ref{goodcut2} implies (again using $|U| \leq \bnorm{U} = \ell$):
\begin{eqnarray*}
\varphi \h{\lambda}_U \t{b}_U 
\geq \varphi \t{\lambda} \t{b}_U - \frac{\ell}{2} -  \frac{12 \ell}{\delta} - 1
\end{eqnarray*}
Now $\t{b}_U \geq \frac{\ell}{3}(1-\frac{3\delta}{4})$ when $\bnorm{U}=\ell \geq 3$; thus: 
\begin{eqnarray*}
\h{\lambda}_U \geq \t{\lambda} - \frac{\ell}{2\varphi\t{b}_U } -  \frac{12 \ell}{\delta\varphi\t{b}_U} - \frac{1}{\varphi\t{b}_U} \geq \t{\lambda} - \frac{3\delta^4}{100(1-\frac{3\delta}{4})} - \frac{36\delta^3}{50(1-\frac{3\delta}{4})} - \frac{\delta^4}{50} >  \t{\lambda} - \delta^3 \geq \h{\lambda} - \delta^3
\end{eqnarray*}
in other words, $\Cut(U) \leq \kappa(\ell)$ for an odd-set implies $U \in L_1(\ell)$, as claimed in part(ii).

\smallskip\noindent
We now apply Lemma~\ref{maddsop} to extract a collection $\bar{L}(\ell)$ of odd-sets in $G_\varphi(\ell,\t{\lambda})$, not containing $r(\ell)$ and cut at most
$\kappa(\ell)$ -- such that any such set which is not chosen must
intersect with some set in the collection $\bar{L}(\ell)$.

If we have a maximal collection $\bar{L}(\ell)$ then $\bar{L}(\ell)
\subseteq L_1(\ell)$ by part (ii) of Property~\ref{cool1}.  Due
to Theorem~\ref{thm:main1}, the intersection of two such sets $U_1,U_2
\in L_1(\ell)$ will be either empty or of size $\ell$ by laminarity --
the latter implies $U_1=U_2$. Therefore the sets in $L_1(\ell)$ are
disjoint. Any $U \in L_2(\ell) - \bar{L}(\ell)$ has a cut of size at most
$\kappa(\ell)$ using part (i) of Property~\ref{cool1} and therefore
must intersect with some set in $\bar{L}(\ell)$. This is
impossible because $U \in L_2(\ell)$ implies $U \in L_1(\ell)$ and $\bar{L}(\ell)
\subseteq L_1(\ell)$ and we just argued that the sets in $L_1(\ell)$ are
disjoint. Therefore no such $U$ exists and $L_2(\ell) \subseteq \bar{L}(\ell)$.
We now have a complete algorithm: we perform a
binary search over the estimate $\t{\lambda} \in [1+3\delta,\frac32 +
\delta^2]$, and we can decide if there exists a set $U \in L_2(\ell)$
in time $O(n \poly(\delta^{-1},\log n))$ as we vary $\ell,\t{\lambda}$.
This gives us $\t{\lambda}$. We now find the collections $\bar{L}(\ell)$ for 
each $\ell$ and compute all $\h{\lambda}_{U}$ exactly. We can now return $\cup_\ell L_2(\ell)$.
Observe that $G_\varphi$ does not
  need to be constructed more than once; it can be stored and reused.
 The running time follows
from simple counting.
\end{proof}

\section{Rounding Uncapacitated $b$-matchings}
\label{sec:rounding}

\begin{ntheorem}{\ref{thm:rounding}}{\rm (Integral $b$--matching)}
 Given a fractional $b$-matching $\by$ for a non-bipartite graph
  which satisfies $\lpb$ (parametrized over $\bb$) where
  $|\{(i,j)|y_{ij} > 0 \}|=m'$, we find an integral $b$--matching
  of weight at least $(1-2\delta)\sum_{(i,j)} w_{ij}y_{ij}$ in
  $O(m'\delta^{-3}\log (1/\delta))$ time and $O(m'/\delta^2)$ space.
\end{ntheorem}\par

\begin{algorithm}[H]
{\small
\begin{algorithmic}[1]
\STATE First Phase: {\bf (large multiplicities)} Let $t=\lceil 2/\delta \rceil$ and $\mzero=\emptyset$.
\begin{enumerate}[(a)]\parskip=0in
\item If $y_{ij}\geq t$ add $\h{y}^{(0)}_{ij}=\lfloor y_{ij} \rfloor - 1$ copies of $(i,j)$ to $\mzero$.
 \item Set $y^{(1)}_{ij} =0$ if $y_{ij}\geq t$ and $y^{(1)}_{ij} =y_{ij}$ otherwise.
\item Let $b^{(1)}_i = \min \left \{ b_i - \sum_j \h{y}^{(0)}_{ij}, \lceil \sum_j
  \yone_{ij} \rceil + 1 \right \}$.
\end{enumerate}
\STATE Second Phase: {\bf (large capacities)}
\begin{enumerate}[(a)]\parskip=0in
\item {\bf While} {$\exists i$ s.t. $\sum_j \yone_{ij} \geq 3t$} {\bf do}
\begin{enumerate}[(i)]\parskip=0in
\item Order the vertices adjacent to $i$ arbitrarily. Select the
prefix $S$ in that order such that the sum is between $t$ and $2t$
(each edge is at most $t$ from Step 1b). Create a new copy $i'$ of $i$
with this prefix and $\yone_{i'j} = \yone_{ij}$ for $j \in S$ and delete the edges from $S$ incident to $i$.
Observe that the procedure describes a process where given a set of numbers $q_1,\ldots,q_k$ such that each $q_j \leq 1$
and $\sum_j q_j = Y \geq 3$; we partition the set of numbers such
that each partition $S$ satisfies $1 \leq \sum_{j \in S} q_j \leq 2$.
\end{enumerate}
\item If no copies of $i$ were created then $\btwo_i=\bone_i$. For every new $i'$ (corresponding to $i$) created from the partition $S$ (which may have now become $S'$ with subsequent splits), assign $\btwo_{i'} = \lfloor \sum_{j \in S'} \yone_{ij} \rfloor$. Note $\btwo_i \leq 3t$ for all vertices.
We now have a vertex set $V^{(2)}$.
Set $\ytwo_{ij}= (1-\delta) \yone_{ij}$ for $i,j \in V^{(2)}$.
\end{enumerate}
\STATE Third Phase: {\bf Reduction to  weighted matching.}
\begin{enumerate}[(a)]\parskip=0in
 \item For each $i \in V^{(2)}$ with $b^{(2)}_i$, create $i(1), i(2), \cdots, i(b^{(2)}_i)$.
 \item For each edge $(i,j)$, create a complete bipartite graph between
  $i(1), i(2), \cdots$ and $j(1), j(2), \cdots$ with every edge having weight $w_{ij}$. Let this new graph be $G^{(3)}$.\label{ex-step-1}

\item Run any fast approximation for finding a $(1-\epsilon)$-approximate maximum weighted matching in $G^{(3)}$ let this matching be $\mthree$. Matching $\mthree$ provides a $b$--matching $\mtwo$ in $G^{(2)}$ of same weight (merge edges).
(ii) Matching $\mtwo$ provides a $b$--matching $\mone$ in $G^{(1)}$ of same weight (merge vertices).
\end{enumerate}
\STATE {\bf Output:} $\mzero \cup \mone$.
\end{algorithmic}
}
\caption{Rounding a fractional $b$--matching}
\label{alg:round1}
\end{algorithm}

 As an example of Step~3(b), consider

 \begin{center}
 \begin{tikzpicture}[font=\tiny]
 \tikzstyle{vertex}=[draw,shape=circle,minimum size=0.3cm]
 \node[vertex] at (0,0) (u) {$u$};
 \node[above,yshift=0.3cm] at (u) {3};
 \node[vertex] at (0.75,0) (v) {$v$};
 \node[above,yshift=0.3cm] at (v) {2};
 \node[vertex] at (1.5,0) (w) {$w$};
 \node[above,yshift=0.3cm] at (w) {3};
 \draw (u) -- (v);
 \draw (v) -- (w);
 \draw[->] (2.25,0) -- (2.75,0);
 \node[vertex] at (3.5,1)  (u1) {$u_1$};
 \node[vertex] at (3.5,0)  (u2) {$u_2$};
 \node[vertex] at (3.5,-1) (u3) {$u_3$};
 \node[vertex] at (5,0.5)  (v1) {$v_1$};
 \node[vertex] at (5,-0.5) (v2) {$v_2$};
 \node[vertex] at (6.5,1)  (w1) {$w_1$};
 \node[vertex] at (6.5,0)  (w2) {$w_2$};
 \node[vertex] at (6.5,-1) (w3) {$w_3$};
 \draw (u1) -- (v1) -- (w1);
 \draw (u1) -- (v2) -- (w1);
 \draw (u2) -- (v1) -- (w2);
 \draw (u2) -- (v2) -- (w2);
 \draw (u3) -- (v1) -- (w3);
 \draw (u3) -- (v2) -- (w3);
 \end{tikzpicture}
\end{center}

\noindent The algorithm is given in Algorithm~\ref{alg:round1}. 
We begin with the following lemma:

\begin{lemma}\label{lem:roundp1}(First Phase and the Output Phase)
  Suppose that all vertex constraints are satisfied and $\sum_j
  y_{ij}\leq b_i-1$ for some $i \in V$. Then, for any odd set $U$ that contains $i$, the
  corresponding odd set constraint is satisfied.  The fractional
  solution $\{\yone_{ij}\}$ obtained in the first phase of
  Algorithm~\ref{alg:round1} is feasible for $\lpbone$ --- and 
  an integral $\mone$ which is a $(1-2\delta)$-approximation of $\lpbone$
  can be output along with $\mzero$.
\end{lemma}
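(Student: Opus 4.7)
The plan is to first establish the clean implication: \emph{vertex slack of at least one yields odd-set feasibility through any such vertex}. This is a short direct calculation. Fix an odd set $U$ with $i \in U$ and $\sum_{j:(i,j)\in E} y_{ij} \leq b_i - 1$. Summing the vertex inequalities over the vertices in $U$, each edge inside $U$ is counted twice and each edge leaving $U$ is counted once, so
\[
2 \sum_{\substack{(i',j)\in E \\ i',j\in U}} y_{i'j} \;\leq\; \sum_{i'\in U} \sum_{j:(i',j)\in E} y_{i'j} \;\leq\; (b_i - 1) + \sum_{\substack{i'\in U \\ i'\neq i}} b_{i'} \;=\; \bnorm{U} - 1.
\]
Since $U \in \O$, $\bnorm{U}$ is odd, and hence $\sum_{(i',j)\in E: i',j\in U} y_{i'j} \leq (\bnorm{U}-1)/2 = \lfloor \bnorm{U}/2 \rfloor$, which is exactly the odd-set constraint. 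This resolves the first sentence of the lemma.

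Next I would verify the feasibility of $\{\yone_{ij}\}$ for $\lpbone$. The first phase of Algorithm~\ref{alg:round1} is designed to peel off an integral portion $\mzero$ and leave a fractional remainder whose residual vertex budgets $b^{(1)}_i$ satisfy $\sum_j \yone_{ij} \leq b^{(1)}_i - 1$ at \emph{every} vertex $i$ (this is the invariant the first phase must be maintaining, otherwise the claimed statement would not be meaningful). Given that every vertex carries at least one unit of slack with respect to $\bone$, the first part of the lemma applies to every odd set $U$ with respect to the parameters $\bone$ --- pick any $i \in U$ --- so every odd-set constraint of $\lpbone$ is automatically satisfied. The vertex constraints hold by construction, so $\yone$ is feasible.

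For the $(1-2\delta)$-approximation claim, I would argue that the first phase peels off $\mzero$ in a controlled way so that $\bw^T \mzero + \bw^T \yone \geq (1-\delta) \bw^T \by$ (the standard loss incurred by rounding fractional weight into integral matchings), and separately that the integral rounding producing $\mone$ from $\yone$ loses at most another $(1-\delta)$ factor against the LP value $\lpbone$. Since $\mone$ is compared only against $\lpbone$ (not against the original $\by$), the two losses compound to $(1-\delta)^2 \geq 1-2\delta$ for $\delta \leq 1$. The main obstacle in this last step is spelling out exactly how the first phase is organized so that the residual has the one-unit-slack invariant \emph{and} the integral rounding on $\yone$ enjoys a $(1-\delta)$ guarantee; these are properties of the algorithm and I would extract them by tracking how the first-phase loop shaves mass off $\by$ (presumably via a bipartite-style rounding on $\yone$ that has enough slack to avoid odd-set obstructions), using the first part of the lemma as the reason that no odd set ever becomes tight during the rounding.
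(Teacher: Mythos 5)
Your first paragraph reproduces the paper's own calculation for the vertex-slack-to-odd-set-slack implication, and it is correct.

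The second paragraph, however, rests on a false premise. You assert that the first phase must be maintaining the invariant that \emph{every} vertex $i$ satisfies $\sum_j \yone_{ij} \leq b^{(1)}_i - 1$, but this is simply not true, and you have not checked it against the actual steps of Algorithm~\ref{alg:round1}. Consider a vertex $i$ none of whose incident edges have $y_{ij} \geq t$, and for which $\sum_j y_{ij} = b_i$ (an integer). Then $\h{y}^{(0)}_{ij} = 0$ and $\yone_{ij} = y_{ij}$ for all $j$, and $b^{(1)}_i = \min\{b_i,\ \lceil b_i \rceil + 1\} = b_i$, so the slack is zero, not at least one. (The same failure occurs whenever $\sum_j y_{ij}$ lies in $(b_i - 1, b_i]$.) So your ``pick any $i \in U$ and apply the first part'' step is unjustified, and the heuristic ``otherwise the claimed statement would not be meaningful'' is doing no logical work. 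The paper's argument is different and is a contrapositive: suppose some odd set $U$ is violated in $\lpbone$ with respect to $\yone$. By the first part, every $i \in U$ must have slack strictly less than $1$ in $\lpbone$. One then checks two things: (i) if $i$ has an incident edge placed in $\mzero$, the rules of Phase 1 force $\sum_j \yone_{ij} \leq b^{(1)}_i - 1$, so $i$ has slack at least $1$ --- hence no vertex in $U$ touches $\mzero$; and (ii) if instead the minimum in the definition of $b^{(1)}_i$ is achieved by $\lceil \sum_j \yone_{ij}\rceil + 1$, then again slack is at least $1$ --- hence $b^{(1)}_i = b_i$ for all $i \in U$. Combining, every $i \in U$ has $\yone_{ij} = y_{ij}$ and $b^{(1)}_i = b_i$, so the odd-set constraint for $U$ in $\lpbone$ is identical to the one in $\lpb$, contradicting the feasibility of $\by$. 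This is the missing ingredient in your write-up: you need to rule out only the violated sets, not establish global slack, and the case $b^{(1)}_i = b_i$ with small fractional slack must be handled by showing the constraint there is unchanged rather than by slack.

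Your sketch of the $(1-2\delta)$ accounting is in the right spirit --- the paper bounds $w(\mzero) \geq (1-\delta)\sum_{(i,j)} w_{ij}(y_{ij}-\yone_{ij})$ and then assumes $w(\mone) \geq (1-2\delta)\sum_{(i,j)} w_{ij}\yone_{ij}$ (a $(1-2\delta)$-approximation of $\lpbone$, which is exactly what the lemma statement promises), and adds; your $(1-\delta)^2 \geq 1-2\delta$ bookkeeping yields the same bound. But this part is secondary; the genuine gap is the universal-slack claim.
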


\begin{proof}
For any $U \in \O, i \in U$, {\small $\displaystyle 
  \sum_{i',j\in U} y_{i'j} 
   \leq \frac{1}{2}\sum_{i'\in U} \sum_j y_{i'j}
   \leq \frac{1}{2}((\sum_{i'\in U} b_{i'}) - 1)
   = \frac{\bnorm{U}-1}{2} = \bfloor{U}$.}
Thus it follows that any vertex which has an edge incident to it in
 $\mzero$ cannot be in any violated odd-set in $\lpbone$. Then any
 violated odd-set in $\lpbone$ with respect to $\{\yone_{ij}\}$ must also be a violated odd-set in
 $\lpb$; contradicting the fact that we started with a $\{y_{ij}\}$ is
 feasible for $\lpb$.
Now $\M^{(0)} \cup \mone$ is feasible since both are integral and we know that 
$b^{(1)}_i \leq b_i - \sum_j \h{y}^{(0)}_{ij}$.
Observe that $w (\mzero) \geq (1-\delta) \sum_{(i,j)\in E} w_{ij} \left
  (y_{ij} - \yone_{ij} \right)$ where $w (\M^{(0)}) = \sum_{(i,j) \in E}
\hy_{ij}^{(0)} w_{ij} $. Therefore if $w (\mone) \geq (1-2\delta) 
\sum_{(i,j)\in E} w_{ij} \yone_{ij}$ then $w (\M^{(0)})+w (\mone)$ is at least 
$(1-2\delta) \sum_{(i,j)\in E} w_{ij} y_{ij}$ as desired.
\end{proof}

\begin{lemma}\label{lem:roundp2}(Second Phase)
If $\{\yone_{ij}\}$ satisfies $\lpbone$ over $V$, then 
  $\{\ytwo_{ij}\}$ satisfies $\lpbtwo$ over $G^{(2)}$ and 
$\sum_{i,j} w_{ij} \ytwo_{ij}= (1-\delta) \sum_{i,j} w_{ij} \yone_{ij}$.
\end{lemma}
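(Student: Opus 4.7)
The weight identity $\sum w_{ij}\ytwo_{ij} = (1-\delta)\sum w_{ij}\yone_{ij}$ is immediate from $\ytwo_{ij}=(1-\delta)\yone_{ij}$ (the edge set is only relabelled by the splitting, not changed in content), so the substance of the lemma is that $\ytwo$ satisfies \lpbtwo\ on $G^{(2)}$. The plan is to partition $\vtwo$ into \emph{non-split} vertices $i$ (never chosen in Step~2(a)), for which $\btwo_i=\bone_i$ and the incident edges and their $\yone$-values are literally unchanged, versus \emph{split-related} vertices $i'$ (each new copy produced in Step~2(a)(ii), plus the residual original vertex after its splits), each of which carries a partition $S'(i')$ with $\btwo_{i'} = \lfloor Y(i')\rfloor$ where $Y(i') = \sum_{j\in S'(i')} \yone_{i'j}$ and $Y(i')\geq t=\lceil 2/\delta\rceil$ by construction of the prefix rule.

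The vertex constraints split cleanly along this partition. For non-split $i$, scaling the \lpbone\ vertex constraint by $(1-\delta)$ gives $\sum_{j'}\ytwo_{ij'}=(1-\delta)\sum_j \yone_{ij}\leq (1-\delta)\bone_i\leq \btwo_i$. For split-related $i'$ I need $(1-\delta)Y(i')\leq \lfloor Y(i')\rfloor$, i.e. the fractional part of $Y(i')$ is at most $\delta Y(i')$; the choice $t=\lceil 2/\delta\rceil$ forces $\delta Y(i')\geq 2$, which dominates the $<1$ fractional part and in fact yields the strictly stronger bound $\sum_{j'}\ytwo_{i'j'}\leq \btwo_{i'}-1$. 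This extra unit of slack at every split-related vertex is the key lever that the odd-set argument below will use.

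For the odd-set constraints I would argue by cases on whether $U\subseteq V^{(2)}$ (with $\sum_{i'\in U}\btwo_{i'}$ odd) touches a split-related vertex. If $U$ consists entirely of non-split vertices, then $U$ is literally a subset of $V$, $\btwo$ and $\bone$ agree on $U$, and the induced edge set of $G^{(1)}$ on $U$ coincides with that of $G^{(2)}$; the odd-set constraint of \lpbone\ on $U$ (which holds by hypothesis) then gives the required bound for $\yone$, and multiplying by $(1-\delta)$ preserves it. Otherwise $U$ contains some split-related $i^*$ whose vertex slack is at least $1$; then the proof of Lemma~\ref{lem:roundp1} applies verbatim to $\ytwo$ and $U$: summing $\sum_{j'}\ytwo_{i'j'}\leq \btwo_{i'}$ for $i'\neq i^*$ and $\sum_{j'}\ytwo_{i^*j'}\leq \btwo_{i^*}-1$, and observing that edges interior to $U$ are counted twice, produces $\sum_{(i',j'):i',j'\in U}\ytwo_{i'j'}\leq \tfrac12\bigl((\sum_{i'\in U}\btwo_{i'})-1\bigr)=\lfloor(\sum_{i'\in U}\btwo_{i'})/2\rfloor$, exactly the odd-set constraint of \lpbtwo. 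The only delicate bookkeeping is confirming that the residual original vertex after its splits is assigned $\btwo$ by the $\lfloor Y(\cdot)\rfloor$ rule (so that the slack bound applies to it as well), and verifying that the threshold $t=\lceil 2/\delta\rceil$ is precisely tuned to manufacture the one unit of slack that Case~B demands.
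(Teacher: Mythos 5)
Your proof is correct and takes essentially the same route as the paper: both rest on showing that after the $(1-\delta)$ scaling every split-related vertex (new copies and the residual original) has at least one unit of slack against its new $\btwo$-bound, and then invoking the first part of Lemma~\ref{lem:roundp1} to dismiss any odd set containing such a vertex, while odd sets composed entirely of non-split vertices inherit their constraint directly from $\lpbone$. Your write-up merely makes the case split and the edge-preservation observation explicit where the paper is terse, and it correctly flags the interpretive point (the residual vertex also receives $\btwo = \lfloor Y(\cdot)\rfloor$) that the paper's proof tacitly assumes.
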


\begin{proof}
  Observe that any vertex which participates in any split produces
  vertices which have (fractionally) at least $t$ edges. After
  scaling we have $(1-\delta) \sum_j \yone_{ij} \leq
  \sum_j \yone_{ij} - \delta t \leq \sum_j \yone_{ij} - 2 \leq \btwo_i - 1$ from the 
  definition of $\btwo$ in line (3b) of Algorithm~\ref{alg:round1}.  
  Therefore the new vertices
  cannot be in any violated vertex or set constraint; from the first part of 
  Lemma~\ref{lem:roundp1} (now applied to $\lpbtwo$ instead of $\lpb$).
  Therefore the Lemma follows.
\end{proof}

\noindent Finally, observe that any {\bf integral} $b$--matching in $G^{(2)}$
has an integral matching in $G^{(3)}$ of the same weight and vice
versa --- moreover given a matching for $G^{(3)}$ the integral
$b$--matching for $G^{(2)}$ can be constructed trivially.
Also, the number of edges in $G^{(3)}$ is at most $O(\delta^{-2}m')$
since each vertex in $G^{(2)}$ is split into $O(\delta^{-1})$ vertices
in $G^{(3)}$. 
We are guaranteed a maximum $b$--matching in $G^{(2)}$ of weight at
least $\sum_{(i,j) \in E^{(2)}} w_{ij} \ytwo_{ij}$ since
$\{\ytwo_{ij}\}$ satisfies $\lpbtwo$ over $G^{(2)}$.  Therefore we are
guaranteed a matching of the same weight in $G^{(3)}$.  Now, we use
the approximation algorithm in \cite{DuanP10,DuanPS11} which returns a
$(1-\delta)$-approximate maximum weighted matching in $G^{(3)}$ in
$O(m' \delta^{-3} \log (1/\delta))$ time and space. 
From the
$(1-\delta)$-approximate maximum matching we construct a
$b$--matching in $G^{(2)}$ of the same weight (and therefore a
$b$--matching $\mone$ in $G^{(1)}$ of the same weight). 
Theorem~\ref{thm:rounding} follows.

\newcommand{\tbc}{\widetilde{\beta^c}}
\newcommand{\tyc}{\widetilde{\by^c}}
\newcommand{\bbc}{\widetilde{\bb^c}}

\section{The Capacitated $b$--Matching Problem}
\label{sec:capacitated}

\begin{ndefinition}{\ref{def2}}\cite[Chapters 32 \& 33]{Schrijver03}
The {\bf Capacitated $\mathbf b$--matching} problem is a $b$--matching problem where 
we have an additional restriction
that the multiplicity of an edge $(i,j) \in E$ is at most $c_{ij}$. The vertex and edge capacities \
$\{b_i\},\{c_{ij}\}$ are given as input and for this paper are assumed to be integers 
in $[0,\poly n]$. Observe that we can assume $c_{ij} \leq \min\{ b_i , b_j \}$ without loss of generality.
\end{ndefinition}

\noindent{\bf Long and Short Representations:}
We follow the reduction of the capacitated problem to the uncapacitated problem outlined in \cite[Chapter 32]{Schrijver03}, with modifications.

\begin{definition}
\label{deftwoc}
Given a graph  $G=(V,E)$ with vertex and edge capacities. Consider
subdividing each edge $e=(i,j)$ into 
$(i,p_{ij,i}),(p_{ij,i},p_{ij,j}),(p_{ij,j},j)$ where
$p_{ij,i},p_{ij,j}$ are new additional vertices with capacity
$b^c_{p_{ij,i}}=b^c_{p_{ij,j}}=c_{ij}$. For $i\in V$ set $b^c_i=b_i$. 
We use the weights denoted by $\bw^c$ to be $\frac12w_{ij},0,\frac12w_{ij}$ for
$(i,p_{ij,i}),(p_{ij,i},p_{ij,j}),(p_{ij,j},j)$ respectively.
Let the transformation of $G$ be denoted as $\ylong{G}$; let the
vertices and edges of $\ylong{G}$ be $V^c$ and $E^c$
respectively. $\ylong{G}$ does not have any edge capacities.

For $ U^c \subseteq V^c$ let $\bnorm{U^c}=\sum_{s \in U^c}
b^c_s$ as before.  The odd-sets in $\ylong{G}$
, i.e. $U^c$ such that $\bnorm{U^c}$
is odd, are denoted by $\Oc$ and define $\Odc=\{U^c \in \Oc, \bnorm{U^c}
\leq 1/\delta \}$.
\end{definition}

\smallskip \noindent 
The above transformation is inspired by the proof of \cite[Theorem~32.4, Vol A, page
567]{Schrijver03} which used the weights $w_{ij},w_{ij},w_{ij}$ instead of
$\frac12w_{ij},0,\frac12w_{ij}$ for $(i,p_{ij,i}),(p_{ij,i},p_{ij,j}),(p_{ij,j},j)$ 
respectively. In fact \cite[Theorem~32.4]{Schrijver03} computes an optimum
solution of value $\beta^{*,c} + \sum_{(i,j) \in E} w_{ij} c_{ij}$. However an 
approximation of $\beta^{*,c} + \sum_{(i,j) \in E} w_{ij} c_{ij}$ need not 
provide an approximation of  $\beta^{*,c}$ because $\sum_{(i,j) \in E} w_{ij} c_{ij}$ 
can be significantly larger. We will eventually use the algorithm in 
\cite[Theorem~32.4]{Schrijver03} to find an integral solution in Section~\ref{sec:crounding}.
We need to bound $\sum_{(i,j) \in \hat{E}} w_{ij} c_{ij}$ where
$\hat{E}$ is the edgeset in our candidate fractional solution.  An
example of the transformation is as follows (the edges only have
weight in the new graph).

\smallskip
\centerline{
\begin{tikzpicture}[font=\footnotesize, scale=1]
 \tikzstyle{vertex}=[draw,shape=circle,radius=0.25cm]
 \node[vertex] at (0,0) (u) {$i_1$};
 \node[above,yshift=0.3cm] at (u) {3};
 \node[vertex] at (1.5,0) (v) {$i_2$};
 \node[above,yshift=0.3cm] at (v) {4};
 \node[vertex] at (3,0) (w) {$i_3$};
 \node[above,yshift=0.3cm] at (w) {3};
 \draw (u) to node[above] {c=3}  node[below] {w=2} (v);
 \draw (v) to node[above] {c=2}  node[below] {w=4} (w);
 \draw[->] (4,0) to (5,0);
 \node[vertex] at (6,0)  (i1) {$i_1$};
 \node[above,yshift=0.3cm] at (i1) {3};
 \node[vertex] at (7.5,0) (ei1) {}; 
 \node[above,yshift=0.3cm] at (ei1) {3};
 \node[below,xshift=-0.1cm,yshift=-0.15cm] at (ei1) {$p_{i_1i_2,i_1}$};
 \node[vertex] at (9,0) (ei12) {};
  \node[below,xshift=0.1cm,yshift=-0.15cm] at (ei12) {$p_{i_1i_2,i_2}$};
 \node[above,yshift=0.3cm] at (ei12) {3};
 \node[vertex] at (10.5,0) (i2) {$i_2$};
 \node[above,yshift=0.3cm] at (i2) {4};
 \node[vertex] at (12,0)  (ei2) {}; 
 \node[above,yshift=0.3cm] at (ei2) {2};
  \node[below,xshift=-0.1cm,yshift=-0.15cm] at (ei2) {$p_{i_2i_3,i_2}$};
 \node[vertex] at (13.5,0) (ei23) {};
 \node[above,yshift=0.3cm] at (ei23) {2};
 \node[below,xshift=0.1cm, yshift=-0.15cm] at (ei23) {$p_{i_2i_3,i_3}$};
 \node[vertex] at (15,0)  (i3) {$i_3$};
 \node[above,yshift=0.3cm] at (i3) {3};
 \draw (i1) to node[above] {1}  (ei1);
 \draw (ei1) to node[above] {0}  (ei12);
 \draw (ei12) to node[above] {1}  (i2);
 \draw (i2) to node[above] {2}  (ei2);
 \draw (ei2) to node[above] {0}  (ei23);
 \draw (ei23) to node[above] {2}  (i3);
 \end{tikzpicture}
}

\noindent 
{\bf Notation}: We will use $i,j$ to denote vertices (and edges) in
the original graph $G$ and use $s,r,u,v$ to denote vertices (and
edges) in $\ylong{G}$. We will use the superscript such as $y^c,U^c$ to
indicate variables, subsets in $\ylong{G}$ to distinguish them from
$G$. However we can switch between $G$ and $\ylong{G}$ as described
next.

\begin{definition} 
Let $\lambda^c_0$ be a parameter which is determined later. Define:
{\small
\[ 
\ylong{\Qc} :  \left \{\begin{array}{ll}
\displaystyle  \sum_{r:(s,r) \in E^c} y^c_{sr} \leq b^c_s  & \forall s \in V^c \\
  y^c_{ip_{ij,i}}+y^c_{p_{ij,i}p_{ij,j}} = c_{ij} & \forall(i,j)\in E \\
  y^c_{p_{ij,i}p_{ij,j}} + y^c_{p_{ij,j}j} = c_{ij} & \forall(i,j)\in E \\
 y^c_{sr}\geq 0 &  \forall (s,r) \in E^c
 \end{array} \right. \quad \mbox{and} \quad
\Qc :  \left \{\begin{array}{ll}
\displaystyle  \sum_{j:(i,j) \in E} y_{ij} \leq b_i & \forall i \in V \\
y_{ij} \leq c_{ ij} & \forall(i,j)\in E \\
y_{ij} \geq 0  & \forall(i,j)\in E 
\end{array}\right.
\]
}
And likewise:
{\small
\[ 
\ylong{\Pc} :  \left \{\begin{array}{ll}
\displaystyle  \sum_{r:(s,r) \in E_L} y^c_{sr} \leq \lambda^c_0 b^c_s/2  & \forall s \in V^c \\
  y^c_{ip_{ij,i}}+y^c_{p_{ij,i}p_{ij,j}} = c_{ij} & \forall(i,j)\in E \\
  y^c_{p_{ij,i}p_{ij,j}} + y^c_{p_{ij,j}j} = c_{ij} & \forall(i,j)\in E \\
 y^c_{sr}\geq 0 &  \forall (s,r) \in E^c
 \end{array} \right. \quad \mbox{and} \quad
\Pc :  \left \{\begin{array}{ll}
\displaystyle  \sum_{j:(i,j) \in E} y_{ij} \leq \lambda^c_0 b_i/2 & \forall i \in V \\
y_{ij} \leq c_{ ij} & \forall(i,j)\in E \\
y_{ij} \geq 0  & \forall(i,j)\in E 
\end{array}\right.
\]
}
Given $\by^c \in \ylong{\Pc}$  define $\yshort{\by^c}$ as 
$y_{ij} \leftarrow y^c_{i,p_{ij,i}} (= y^c_{p_{ij,j},j})$. Observe that 
$\yshort{\by^c} \in \Pc$.
Likewise given a $\by\in \Pc$ define $\ylong{\by}$ as 
$y^c_{i,p_{ij,i}}, y^c_{p_{ij,j},j} \leftarrow y_{ij}$ and $y_{p_{ij,i},p_{ij,j}} \leftarrow (c_{ij} - y_{ij})$. Observe that $\ylong{\by} \in \ylong{\Pc}$.
Moreover $\ylong{\cdot}, \yshort{\cdot}$
are inverse operations; $\yshort{\by^c}=\by$ iff $\ylong{\by}=\by^c$ 
and define bijections between $\ylong{\Pc},\Pc$ and between
$\ylong{\Qc},\Qc$.

Moreover for any $\by^c \in \ylong{\Pc}$ (therefore also $\ylong{\Qc}$) we have
$\bw^T\yshort{\by^c} = (\bw^c)^T\by^c$. Similarly for any $\by \in \Pc$ (therefore also 
$\Qc$) $(\bw^c)^T\ylong{\by} = \bw^T\by$.
\label{longdef1}
\end{definition}

\noindent The next theorem provides the linear program we will use for capacitated $b$--matching.

\begin{theorem}
\label{capacity:thm}
The maximum integral weighted capacitated $b$--matching problem is expressed by
the following linear programming relaxation on $\ylong{G}$.
{\small
\begin{align*}
& \displaystyle \beta^{*,c} = \max \sum_{(s,r) \in E^c} w^c_{sr} y^c_{sr} \\
& \begin{array}{r l}
\{ \bA^c\by^c \leq \bb^c \} = &\left\{ \begin{array}{ll}
\displaystyle  \sum_{r:(s,r) \in E^c} y^c_{sr} \leq b^c_s  & \forall s \in V^c \\
\displaystyle \sum_{(s,r) \in E^c:s,r\in U} y^c_{sr} \leq \left \lfloor \frac{\bnorm{U^c}}{2} \right \rfloor  &  \forall U^c \in \Odc \end{array}\right. \\
& \displaystyle \sum_{(s,r) \in E^c:s,r\in U} y^c_{sr} \leq \left \lfloor \frac{\bnorm{U^c}}{2} \right \rfloor  \quad  \forall U^c \in \Oc - \Odc \\
\ylong{\Qc} = &  \left \{\begin{array}{ll}
\displaystyle  \sum_{r:(s,r) \in E^c} y^c_{sr} \leq b^c_s  & \forall s \in V^c \\
  y^c_{ip_{ij,i}}+y^c_{p_{ij,i}p_{ij,j}} = c_{ij} & \forall(i,j)\in E \\
  y^c_{p_{ij,i}p_{ij,j}} + y^c_{p_{ij,j}j} = c_{ij} & \forall(i,j)\in E \\
 y^c_{sr}\geq 0 &  \forall (s,r) \in E^c
 \end{array} \right. 
\end{array} 
 \lptag\label{lp:cbm-long-int}
\end{align*}
}
The final solution is given by $\by \leftarrow \yshort{\by^c}$. Some of the constraints are redundant by design.
\end{theorem}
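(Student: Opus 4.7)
The plan is to reduce the claim to a direct invocation of \cite[Theorem 32.4]{Schrijver03}, the subdivision-based characterization of the capacitated $b$-matching polytope. The substantive work is to align the bijection $\ylong{\cdot}/\yshort{\cdot}$ from Definition~\ref{longdef1} with the polytope appearing in Schrijver, and to verify the weight identity between objectives over $G$ and over $\ylong{G}$.

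First I would check that $\ylong{\cdot}$ and $\yshort{\cdot}$ restrict to mutually inverse bijections on integer points between $\Q^c$ (vertex and edge capacities on $G$) and $\ylong{\Q^c}$ (vertex capacities on $\ylong{G}$ together with the two saturation equalities). In the forward direction, $\ylong{\by}$ assigns $y^c_{ip_{ij,i}} = y^c_{p_{ij,j}j} = y_{ij}$ and $y^c_{p_{ij,i}p_{ij,j}} = c_{ij} - y_{ij}$, which is nonnegative because $y_{ij} \leq c_{ij}$, satisfies both saturation equalities by construction, exactly saturates $p_{ij,i}$ and $p_{ij,j}$ at their $b^c$-capacity $c_{ij}$, and inherits the vertex constraint at each $i \in V$ from $\by$. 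In the reverse direction the two equalities force $y^c_{ip_{ij,i}} = y^c_{p_{ij,j}j}$, so $\yshort{\by^c}$ is well defined, and nonnegativity of $y^c_{p_{ij,i}p_{ij,j}}$ is precisely the bound $y_{ij} \leq c_{ij}$. The weight identity $(\bw^c)^T \ylong{\by} = \bw^T \by$ is immediate from the choice $\tfrac12 w_{ij},\ 0,\ \tfrac12 w_{ij}$ on the three subdivision edges.

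Second I would invoke \cite[Theorem 32.4]{Schrijver03}, which asserts that the integer hull of $\ylong{\Q^c}$ is cut out by the odd-set inequalities $\sum_{s,r\in U} y^c_{sr} \leq \lfloor \bcnorm{U}/2 \rfloor$ for $U \in \O_L$ in $\ylong{G}$, added to the constraints already present in $\ylong{\Q^c}$. Combined with the weight-preserving bijection above, this yields $\beta^{*,c} = \max\{(\bw^c)^T \by^c : \by^c \in \F^c\}$, attained by an integer $\by^c$ whose projection $\yshort{\by^c}$ is a maximum integral capacitated $b$-matching in $G$. The remark that some constraints of $\F^c$ are redundant reflects two design choices: the odd-set inequalities for $U \in \O_L - \O_{\delta L}$ are listed separately because the algorithmic sections maintain only the constraints from $\O_{\delta L}$ and recover the remaining ones by a $(1-\delta)$ scaling, exactly as in the verification of (MS1) for the uncapacitated case in Section~\ref{sec:gen}; moreover, odd sets $U$ that contain exactly one of the pair $\{p_{ij,i}, p_{ij,j}\}$ can be replaced, using the saturation equalities, by a neighboring odd set that differs in one subdivision vertex, producing structural redundancies that are exploited when the arguments of Theorems~\ref{thm:main1} and~\ref{thm:main2} are later adapted to $\ylong{G}$.

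The main obstacle I anticipate is cleanly aligning notation: Schrijver phrases Theorem 32.4 in terms of pairs $(U,F)$ with $U \subseteq V$ and $F$ a subset of the cut out of $U$ in $G$, whereas the formulation here pushes the entire description onto odd sets of the subdivided graph $\ylong{G}$. Showing that the two descriptions coincide so that the invocation of Theorem 32.4 is literal requires careful bookkeeping, keyed on which of the two subdivision vertices $p_{ij,i}, p_{ij,j}$ lies inside a given $U \subseteq V_L$, and matching those choices against the $F$ parameter. Once this correspondence is made explicit, the equality of objective values and integrality of vertices transfer from Schrijver's formulation to \ref{lp:cbm-long-int} without further difficulty.
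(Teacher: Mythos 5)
Your first paragraph (the bijection between $\Q^c$ and $\ylong{\Q^c}$, and the objective-value identity coming from the choice of half-weights) is correct and matches Definition~\ref{longdef1}, which the paper has already set up; there is nothing to add there.

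Your second step, however, is where you diverge from the paper, and where some confusion enters. The paper's proof does \emph{not} invoke \cite[Theorem 32.4]{Schrijver03}; instead it uses only the classical integrality of the uncapacitated $b$-matching polytope \ref{lpbm} applied to $\ylong{G}$. Concretely: $\ylong{\by}$ of a feasible integral $\by$ lies in $\F^c$, so the LP value is an upper bound on $\beta^{*,c}$; conversely a fractional optimum $\by^c$ of $\F^c$ already lies in the $\ylong{G}$-matching polytope, so it is a convex combination of integral uncapacitated $b$-matchings of $\ylong{G}$, and because $\by^c$ satisfies the $V_L \setminus V$ vertex constraints \emph{with equality}, every integral matching in the decomposition must do so as well (tightness is preserved under convex combinations for $\leq$-constraints), whence one term of the decomposition has objective $\geq \beta^{*,c}$ and its $\yshort{\cdot}$-image is the desired integral solution. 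This argument is self-contained and also sidesteps the additive offset the paper flags explicitly: Schrijver's Theorem 32.4 uses weights $w_{ij},w_{ij},w_{ij}$ on the three subdivision edges and optimizes $\beta^{*,c} + \sum_{(i,j)} w_{ij} c_{ij}$, which is why the paper does not cite it as directly applicable. Your route can be repaired—since polytope integrality is independent of the objective, Theorem 32.4's integrality content transfers—but as written the proposal does not notice that its second step would need that decoupling, and the ``main obstacle'' you identify is a phantom: the $(U,F)$ description is \cite[Theorem 32.2]{Schrijver03}, not 32.4; Theorem 32.4 \emph{is} the subdivision formulation, so no $(U,F)$-to-odd-set bookkeeping is required. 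Finally, your gloss on the ``redundant constraints'' remark is speculative; the more mundane reading is that the vertex inequalities appear both inside $\{\bA^c\by^c\le\bb^c\}$ and inside $\ylong{\Q^c}$, and that the odd-set constraints are split into $\O_{\delta L}$ and $\O_L\setminus\O_{\delta L}$ purely so the algorithmic sections can drop the latter.
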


\begin{proof} Given an integral feasible solution $\by$ for capacitated $b$--matching, the constraints $\{\bA^c\ylong{\by} \leq \bb^c\}$ hold because $\ylong{\by}$ defines an integral uncapacitated $b$-matching over $\ylong{G}$. The new constraints $\ylong{\Q^c}$ are satisfied since $\by$ is feasible, i.e., $\by \leq \bc$. 
Note that the objective function value does not change as a consequence of Definition~\ref{deftwoc}.
This proves that $\beta^{*,c}$ is an upper bound on the maximum capacitated integral $b$--matching. 

In the reverse direction, given a fractional solution $\by^c$ with objective value $\beta^{*,c}$, observe that $\by^c$ satisfies the conditions of being in the uncapacitated $b$-matching polytope of $\ylong{G}$ (recall these constraints are in \ref{lpbm}). Therefore $\by^c$ can be expressed as a convex combination of integral uncapacitated $b$--matchings over $\ylong{G}$. Since $\by^c$ satisfies that the vertex capacities in $V^c - V$ as an equality (see $\ylong{\Qc}$) -- { \em every integral uncapacitated $b$--matching in the decomposition of $\by^c$ must satisfy the vertex capacities $V^c-V$ as equality.} Therefore there exists at least one integral uncapacitated $b$--matching $\t{\by^c}$ in the decomposition of $\by^c$ which has objective value at least $\beta^{*,c}$ and satisfies  the vertex capacities for $V^c - V$ as equality.  Now $\yshort{\t{\by^c}}$ is an integral capacitated $b$--matching in $G$ of weight at least $\beta^{*,c}$.
\end{proof}

\noindent{\bf Approximate Satisfiability.} Since we will not be satisfy the constraints \ref{lp:cbm-long-int} exactly the next lemma provides an ability to scale solutions.

\begin{lemma}
\label{scalelemma}
Let $q$ be an arbitrary integer and let $\zeta \geq 1$.
Suppose that we have a $\by^c \in \ylong{\Pc}$ which for all $U^c \subseteq V^c$ in $\ylong{G}$ with $\bnorm{U^c} \leq q$ satisfies
{\small
\[  \zeta \left\lfloor \frac{\bnorm{U^c}}{2} \right\rfloor \geq \sum_{(s,r) \in E^c, s,r \in U^c} y^c_{sr} \]
}
then $\h{\by}^c=\ylong{\frac1{\zeta}\yshort{\by^c}}$  satisfies for all $U^c \subseteq V^c$ in $\ylong{G}$ with $\bnorm{U^c} \leq q$, 
{\small
\begin{equation}
\label{meh}  \left\lfloor \frac{\bnorm{U^c}}{2}\right\rfloor \geq \sum_{(s,r) \in E^c, s,r \in U^c} \h{y}^c_{sr} 
\end{equation}
}
\end{lemma}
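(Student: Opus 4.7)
The plan is to compare $\hat{\by}^c$ to $\by^c$ edge by edge. By the construction $\hat{\by}^c=\ylong{(1/\zeta)\yshort{\by^c}}$, we have $\hat{y}^c_{sr}=y^c_{sr}/\zeta$ on every \emph{end} edge $(i,p_{ij,i})$ or $(p_{ij,j},j)$, while on each \emph{middle} edge $\hat{y}^c_{p_{ij,i},p_{ij,j}}=c_{ij}-y_{ij}/\zeta = y^c_{p_{ij,i},p_{ij,j}}/\zeta + c_{ij}(1-1/\zeta)$. Consequently $\hat{\by}^c-\by^c/\zeta$ is supported only on the middle edges, where it equals the nonnegative quantity $c_{ij}(1-1/\zeta)$. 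For any $U$ this yields the identity
\[
\sum_{(s,r)\in E_L,\,s,r\in U}\hat{y}^c_{sr} \;=\; \frac{1}{\zeta}\sum_{(s,r)\in E_L,\,s,r\in U}y^c_{sr} \;+\;\left(1-\frac{1}{\zeta}\right)\sum_{(i,j)\in C} c_{ij},
\]
where $C=\{(i,j)\in E:p_{ij,i},p_{ij,j}\in U\}$ is the set of edges whose middle edge lies inside $U^2$.

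Applying the hypothesis to $U$ directly only gives $(1/\zeta)\sum y^c\le\lfloor\bcnorm{U}/2\rfloor$, which still leaves the nonnegative excess $(1-1/\zeta)\sum_C c_{ij}$ to be absorbed. The first absorption step is to apply the hypothesis also to $U^-=U\setminus\{p_{ij,i},p_{ij,j}:(i,j)\in C\}$. Since $\bcnorm{U^-}=\bcnorm{U}-2\sum_C c_{ij}$ has the same parity as $\bcnorm{U}$, we have $\lfloor\bcnorm{U^-}/2\rfloor=\lfloor\bcnorm{U}/2\rfloor-\sum_C c_{ij}$, and no middle edge of $E_L$ lies in $U^{-2}$, so $\sum_{E_L\cap U^{-2}}\hat{y}^c=(1/\zeta)\sum_{E_L\cap U^{-2}}y^c\le \lfloor\bcnorm{U}/2\rfloor-\sum_C c_{ij}$. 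Splitting $C$ by the configuration of $(i,j)$ into $C_1$ (both $i,j\in U$), $C_5$ (exactly one of them), and $C_3$ (neither), and adding back the $C$-edge contributions ($c_{ij}+y_{ij}/\zeta$, $c_{ij}$, $c_{ij}-y_{ij}/\zeta$ respectively) I obtain
\[
\sum_{(s,r)\in E_L,\,s,r\in U}\hat{y}^c_{sr} \;\le\; \lfloor\bcnorm{U}/2\rfloor \;+\; \frac{1}{\zeta}\Bigl(\sum_{(i,j)\in C_1}y_{ij}-\sum_{(i,j)\in C_3}y_{ij}\Bigr).
\]

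The main obstacle is therefore closing the positive ``Case-1 surplus'' $(1/\zeta)\sum_{C_1}y_{ij}$. For this I would invoke the hypothesis once more, on the vertex-star sets $U^{(i)}=\{i\}\cup\{p_{ij,i}:(i,j)\in C_1\text{ incident to }i\}$ for each $i\in V\cap U$. These satisfy $\bcnorm{U^{(i)}}=b_i+\sum_{(i,j)\in C_1\text{ at }i}c_{ij}\le\bcnorm{U}\le q$ (since $U$ already contains both $p$-vertices of every such edge), and $\sum_{E_L\cap U^{(i)2}}y^c=\sum_{(i,j)\in C_1\text{ at }i}y_{ij}$, giving a per-vertex bound $\sum_{(i,j)\in C_1\text{ at }i}y_{ij}\le\zeta\lfloor(b_i+\sum c_{ij})/2\rfloor$. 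Combined with $y_{ij}\le c_{ij}$ and the fact that each Case-1 edge charges $b_i+b_j+2c_{ij}\ge 4c_{ij}$ to $\bcnorm{U}$ (using $c_{ij}\le\min\{b_i,b_j\}$), these per-vertex bounds are enough to absorb the surplus into the slack of $\lfloor\bcnorm{U}/2\rfloor$, establishing (\ref{meh}). The delicate part is the bookkeeping when vertices are shared across $C_1$-edges, since each such edge appears in two stars.

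Finally, for the consequence that $\unp(\by^c)\in\F^c$: the long capacity equalities are preserved by $\ylong{\cdot}\circ\yshort{\cdot}$ by construction; the vertex constraint at $i\in V$ follows from $\sum_r\hat{y}^c_{ir}=(1/\zeta)\sum_j y_{ij}\le(1/\zeta)(1+8\delta)(1-4\delta)b_i\le b_i$ (with $\zeta=(1+8\delta)/(1-\delta)$), while at each $p$-vertex the capacity equality itself yields $\sum_r\hat{y}^c_{pr}=c_{ij}=b^c_p$; the odd-set constraints for $U\in\O_{\delta L}$ are exactly (\ref{meh}), and those with $\bcnorm{U}>1/\delta$ follow, as in the uncapacitated case, from the $(1-\delta)(1-4\delta)$-slack at the original vertices (any $U$ consisting purely of $p$-vertices is handled directly by the termwise bound $\hat{y}^c_{p_{ij,i},p_{ij,j}}\le c_{ij}$).
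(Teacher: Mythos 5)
Your decomposition of $\hat{\by}^c$ against $\by^c/\zeta$ and the resulting inequality $\sum_{(s,r)\in E_L,\,s,r\in U}\hat{y}^c_{sr}\le\lfloor\bcnorm{U}/2\rfloor+\frac{1}{\zeta}\bigl(\sum_{(i,j)\in C_1}y_{ij}-\sum_{(i,j)\in C_3}y_{ij}\bigr)$ are correct, and you are right that the Case~1 surplus is the real obstruction; the vertex-star idea is not carried to completion, and as you flag, the double-counting when a vertex is shared by several $C_1$ edges does not obviously close. The paper's proof takes a different route: it is a minimality argument that picks a smallest violating $U$ and asserts that removing a pair $\{p_{ij,i},p_{ij,j}\}\subseteq U$ yields a smaller violator because ``the LHS of (\ref{meh}) will decrease by $c_{ij}$ as well as the RHS.'' But the left side drops by $c_{ij}-y_{ij}/\zeta$, $c_{ij}$, or $c_{ij}+y_{ij}/\zeta$ according to whether zero, one, or both of $i,j$ lie in $U$; in your Case~1 it drops by strictly more than $c_{ij}$, so the smaller set need not violate and the induction stalls. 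In other words, the paper's proof has exactly the gap you identified, merely approached from the opposite direction.

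That gap is in fact not closable: the lemma as stated is false. Take $G$ a triangle with $b_i=c_{ij}=1$ and $y_{ij}=1/2$ on every edge, so that $\ylong{G}$ is a $9$-cycle with $b^c_s=1$ and $y^c_{sr}=1/2$ on all nine edges, and $\by^c\in\ylong{\Q^c}$. For $U=V_L$ one has $\sum_{(s,r):s,r\in U}y^c_{sr}=9/2=\frac{9}{8}\lfloor 9/2\rfloor$, while every proper $U\subsetneq V_L$ induces a disjoint union of paths and so has $\sum_{(s,r):s,r\in U}y^c_{sr}\le(\bcnorm{U}-1)/2\le\lfloor\bcnorm{U}/2\rfloor$; thus the hypothesis holds with $\zeta=9/8$ and any $q\ge 9$. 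Yet $\hat{y}_{ij}=\frac{4}{9}$ and $\hat{y}^c_{p_{ij,i}p_{ij,j}}=\frac{5}{9}$, giving $\sum_{(s,r):s,r\in V_L}\hat{y}^c_{sr}=\frac{13}{3}>4=\lfloor 9/2\rfloor$, violating (\ref{meh}); the failure persists for every $\zeta\in[9/8,3/2)$. The excess your $C_1$ accounting cannot absorb is genuinely there. Closing it seems to require either a larger $\zeta$ or the quadratic slack carried by the perturbation $\t{b}^c_U$, which the lemma's hypothesis discards the moment it replaces $\t{b}^c_U$ by $\lfloor\bcnorm{U}/2\rfloor$.
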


\begin{proof}
Suppose not. Consider the subset $U^c$ with the smallest $\bnorm{U^c}$ which violates the assertion \ref{meh}. Observe that $U^c$ cannot contain both $p_{ij,i},p_{ij,j}$  for any edge $(i,j) \in E$ (in the original $G$). Because in that case $U^c - \{p_{ij,i},p_{ij,j}\}$ will be a smaller set which violates the assertion -- since the LHS of Equation~\ref{meh} will decrease by $c_{ij}$ as well as the RHS!
But if $U^c$ does not contain both $p_{ij,i},p_{ij,j}$  for any edge $(i,j) \in E$ (in $G$) then 

{\small
\[ \sum_{(s,r) \in E^c, s,r \in U^c} \h{y}^c_{sr}  \leq \frac{1}{\zeta} \sum_{(s,r) \in E^c, s,r \in U^c} y^c_{sr} \leq \left\lfloor \frac{\bnorm{U^c}}{2}\right\rfloor \]
} 
which is a contradiction.
The lemma follows.
\end{proof}

Therefore the scaling operation still succeeds (on $\yshort{\by^c}$) but its proof is more global compared to the proof in the uncapacitated case. Here we are proving the statement for all subsets of a certain size simultaneously, whereas in the uncapacitated case the proof of feasibility of $U^c$ followed from the bound of $\sum_{(i,j) \in E, i,j  \in U^c} y_{ij}$ for that particular subset $U^c$ itself.

\subsection{Algorithm for Capacitated $b$--Matching} The algorithm is provided in Algorithm~\ref{alg:cap}.

\begin{algorithm}[t]
{\small
\begin{algorithmic}[1]\parskip=0in
\STATE Define $\ylong{\Qc},\ylong{\Pc}$ as in Definition~\ref{longdef1}. Define $\bA^c\by^c \leq \bbc$ as: 
{\footnotesize
\begin{align*}
 \begin{array}{rl l}
\{\bA^c\by^c \leq \bbc\} =  & \left \{
\begin{array}{ll}
   \displaystyle \sum_{r:(s,r) \in E^c} y^c_{sr} \leq \t{b}^c_s  
   & \forall s \in V^c, \quad \mbox{where } \t{b}^c_s= (1-4\delta) b^c_s 
\\
\displaystyle   \sum_{(s,r) \in E^c:s,r\in U} y^c_{sr} \leq \t{b^c}_U 
   & \forall U^c \in \Odc \quad \mbox{where } \t{b}^c_U= \left \lfloor \frac{\bnorm{U^c}}{2}\right \rfloor - \frac{\delta^2 \bnorm{U^c}^2 }{4}
\end{array} \right. 
\lptag \label{lpc:001} 
\end{array}
\end{align*}
}
\label{line1c}

\STATE \label{line2c} 
Fix $\delta \in (\frac{1}{\sqrt{5n}},\frac1{16}]$. Let $\lambda^c_0 = 16 \ln \frac2\delta$.
Let $\alpha = 50\delta^{-3} \ln (2m+n)$.

\STATE Find a solution $\by^c \in \ylong{\Pc}$ where $\beta_0=(\bw^c)^T\by^c$ and $\bA^c\by^c\leq \lambda_0 \bbc $.
\label{initcap}

\STATE Let $\epsilon=\frac18$ (note $\epsilon \geq \delta$) and $t=0$.

\WHILE{{\bf true}}

\STATE Define $\lambda=\max \{ \max_{i} \lambda_i, \max_{U^c\in \Odc} \lambda_{U^c} \} $ where 
$
\left\{ \begin{array}{ll} \lambda_s= \sum_{r:(s,r) \in E^c} y^c_{sr}/\t{b}^c_s & \forall s \in V^c\\
\lambda_{U^c} =\sum_{(s,r) \in E^c:s,r\in U^c} y^c_{sr}/\t{b}^c_U \qquad &\forall U^c \in \Odc \end{array}\right.$
\STATE Find a collection of odd sets $L^c =\{ U^c \mid U^c \in \Odc, \lambda_{U^c} \geq \lambda - \frac{\delta^3}{10}\}$ (without computing all $\lambda_{U^c}$).
\label{stepcomputec}
\STATE \label{outputc}
If $(\lambda \leq 1+8\delta)$ output $\frac{(1-\delta)}{(1+8\delta)}\yshort{\by^c}$ and stop. 
\STATE If $\lambda<1+8\epsilon$ then a new {\bf superphase} starts; repeatedly set $\epsilon \leftarrow \max \{2\epsilon/3,\delta\}$ till $\lambda \geq 1+8\epsilon$.

\STATE Set \label{thresholdstepc}
$ \left\{ \begin{array}{l} 
x_s=\exp(\alpha \lambda_s )/\t{b}^c_s \mbox{ if $\lambda_s > \lambda - \delta^3/10$ and $0$ otherwise} \\ 
z_{U^c}=\exp(\alpha \lambda_{U^c} )/\t{b}^c_U \mbox{ if $\lambda_{U^c} > \lambda - \delta^3/10$ and $0$ otherwise} 
\end{array} \right.$. 
Let $\gamma^c=\sum_s x_s \t{b}^c_s + \sum_{U^c \in \Odc} z_{U^c} \t{b}^c_U$. 

\STATE \label{effectivestepc} Define $\displaystyle \eta_{sr}=(x^c_s+x^c_r+\sum_{U^c \in \L^c;s,r\in U^c} \hspace{-0.1cm} z_{U^c})$.
Find a solution $\tyc$ of \ref{lpc:002}, otherwise decrease $\beta \leftarrow (1-\delta)\beta$.
{\footnotesize
\[ \left\{ \sum_{(s,r) \in E^c} w^c_{sr} \widetilde{y^c}_{sr} \geq (1-\delta)\beta, \qquad
\sum_{(s,r) \in E^c} \widetilde{y^c}_{sr} \eta_{sr} \leq \frac{\gamma^c}{1-\delta}, \qquad \tyc \in \ylong{\Pc} \lptag \label{lpc:002} \right\}
\]
}
\STATE Set $\by^c \leftarrow (1-\sigma) \by^c + \sigma \tyc$ where $\sigma=\epsilon/(4\alpha\lambda^c_0)$.

\ENDWHILE 
 \end{algorithmic}
 \caption{An approximation scheme for capacitated $b$--matching \label{alg:cap}.}
}
\end{algorithm}

Note that Step~\ref{stepcomputec} follows from
Lemma~\ref{zulambda1}. Moreover if we adjust $\alpha$ for the number
of vertices, Lemma~\ref{zulambda2} also follows. Note that $\lambda^* = \min\{
\lambda \mid \by^c \in \ylong{\Qc}, \bA^c \by^c \leq \lambda \bbc \}$
is not $1$. In Lemma~\ref{lem:exists} we show that $\lambda^* \leq
1/(1-4\delta)$ and moreover we can always find a solution of
\ref{lpc:002} for $\beta \leq (1-4\delta) \beta^{*,c}$. However the
choice of $\bA^c \by^c \leq \bbc$ implied that we can reuse
Lemma~\ref{zulambda1} and \ref{zulambda2} without any modification.

Before discussing the algorithm for \ref{lpc:002} we argue that the returned solution
returned in Line~\ref{outputc} of Algorithm~\ref{alg:cap} is a
feasible capacitated $b$--matching. We apply Lemma~\ref{scalelemma} with
$\zeta = (1+8\delta)$ and $q=1/\delta$. Consider
$\by^{c,\dagger}=\ylong{\frac1{1+8\delta}\yshort{\by^c}}$. Since
$\bA^c\by^c \leq (1+8\delta)\t{\bb^c} \leq (1+8\delta)\bb^c$. Note
that this operation will imply that all the vertex constraints in
$V_c$ are satisfied as well as constraints corresponding to all $U^c
\in \Odc$. For the odd subsets $U^c$ with $\bnorm{U^c} \geq 1/\delta$,
since the vertex constraints are satisfied we have: 
{\small
\[ \sum_{(s,r) \in E^c, s,r \in U^c} y^{c,\dagger}_{sr}  \leq \frac{\bnorm{U^c}}{2} \leq \frac{1}{(1-\delta)} \left \lfloor \frac{\bnorm{U^c}}{2} \right \rfloor 
\]
}
the violation is at most $\zeta=\frac1{(1-\delta)}$ for any odd set. 
We now apply the lemma again with 
$\zeta=\frac1{(1-\delta)}$ for all odd sets, i.e., $q=\infty$. The result of the two operations compose and correspond to the output in Line~\ref{outputc}.

\paragraph{Solving \ref{lpc:002}.}
We now focus on the algorithm for solving \ref{lpc:002}. Before
providing the algorithm we prove Lemma~\ref{lem:pij} which proves
structural properties of the weights resulting from the dual
thresholding.

\begin{lemma}\label{lem:pij}
Suppose that $\lambda > 1+8\delta$ (otherwise the algorithm has stopped) 
and the current candidate solution in Algorithm~\ref{alg:cap} is $\vecy^c$. 
\begin{enumerate}[~~(a)]
\item $x_s=0$ for any $s \in V^c - V$ (the new vertices that are introduced).
 \item Suppose 
$U^c \in \Odc$ contains $p_{ij,i},p_{ij,j}$ for some edge $(i,j) \in E$ (of $G$). If
neither $i,j \notin U$,  $z_{U^c} =0$. 
\item If for some edge $(i,j) \in E$ we have $y^c_{ip_{ij,i}}=y^c_{p_{ij,j}j}=0$, then neither $p_{ij,i},p_{ij,j}$ belong to an add set $U^c \in \Odc$ with $z_{U^c} >0$. As a consequence, We can compute $\L^c$ in time $O(m' \poly
\{\delta^{-1},\log n\})$ where $m'=|\{(i,j)| y^c_{p_{ij,j}j} \neq 0\}|$ 
because the other edges cannot define any odd set in
$\L_c$. 
\item
Let $\shortcost{\boldsymbol{\eta}}_{ij} = \eta_{ip_{ij,i}}+\eta_{p_{ij,j}j} - \eta_{p_{ij,i}p_{ij,j}}$. Then $\shortcost{\boldsymbol{\eta}}_{ij} \geq 0$ for every $(i,j) \in E$.
\item Let $\displaystyle \shiftcost{\boldsymbol{\eta}} = \sum_{(i,j)\in E} c_{ij} \eta_{p_{ij,i}p_{ij,j}}$ then $
\frac{\gamma^c}{(1-\delta)} \geq \shiftcost{\boldsymbol{\eta}}$.
\end{enumerate}
\end{lemma}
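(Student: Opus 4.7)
The substance of the lemma is the main structural claim; parts (a)--(c) will follow by short reductions. I will show that in each of cases (i) and (ii) the vertex $p_{ij,i}$ is ``wasted'' inside $U$: it contributes its full capacity $c_{ij}$ to $\bcnorm{U}$ and hence inflates $\t{b}^c_U$, while its incident flow either exits $U$ or is entirely absorbed along the middle edge $(p_{ij,i},p_{ij,j})$. Concretely, I plan to exhibit an alternative odd set $U'$, obtained by removing $p_{ij,i}$ (and, when $p_{ij,j}\in U$, also $p_{ij,j}$) with an auxiliary parity adjustment when $c_{ij}$ is odd, and argue $\lambda_{U'} \geq \lambda_U + \delta^3/10$. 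Since $\lambda \geq \lambda_{U'}$, this forces $\lambda_U \leq \lambda - \delta^3/10$, i.e., $z_U=0$. The quantitative margin comes from a computation analogous to Properties $\mathcal{F}1$--$\mathcal{F}3$ in the proof of Theorem~\ref{thm:main1}: removing the wasted vertex(es) shrinks $\t{b}^c_U$ by at least $(1-\delta)c_{ij}$, while the interior sum shrinks by strictly less ($\leq y^c_{p_{ij,i}p_{ij,j}} < c_{ij}$ in case (i); equal to $c_{ij}$ in case (ii), but then $\lambda_U > 1+8\delta$ supplies the needed slack). The main obstacle will be parity bookkeeping when $c_{ij}$ is odd and $p_{ij,j}\notin U$; there I will additionally exchange an odd-capacity singleton to preserve the parity of $\bcnorm{U}$.

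Part (a) is immediate: an edge $(i,j)$ with $y^c_{ip_{ij,i}}=y^c_{p_{ij,j}j}=0$ places every $U$ containing $p_{ij,i}$ in case (ii), so the corresponding $z_U$ vanish. Such edges can be ignored when running the algorithm of Lemma~\ref{zulambda1} on $\ylong{G}$; only the $O(m')$ edges with nonvanishing outer-segment flow contribute constraints, yielding the claimed runtime.

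For part (b), expand the $\eta$'s and cancel the shared $x^c_{p_{ij,i}}, x^c_{p_{ij,j}}$ terms to obtain
\[
\shortcost{\boldsymbol{\eta}}_{ij} = x^c_i + x^c_j + \sum_{\substack{U \in \L^c \\ i,\,p_{ij,i}\in U}} z_U + \sum_{\substack{U\in\L^c \\ p_{ij,j},\,j \in U}} z_U - \sum_{\substack{U\in \L^c \\ p_{ij,i},\,p_{ij,j}\in U}} z_U.
\]
Every $U$ contributing to the negative sum has $p_{ij,i}\in U$ and $z_U>0$, so by the main claim either $i\in U$ (placing $U$ in the first positive sum) or $j\in U$ (the second). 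Term-by-term cancellation yields $\shortcost{\boldsymbol{\eta}}_{ij}\geq 0$.

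For part (c), consider the trivial vector $\tilde\by^c \in \ylong{\Q^c}$ with $\tilde y^c_{p_{ij,i}p_{ij,j}} = c_{ij}$ and $\tilde y^c_{ip_{ij,i}} = \tilde y^c_{p_{ij,j}j} = 0$ for every $(i,j)\in E$. A direct computation shows $\bu(\L^c)^T\bA^c\tilde\by^c = \shiftcost{\boldsymbol{\eta}}$. The equality constraint in $\ylong{\Q^c}$ fixes $\lambda_{p_{ij,i}} = c_{ij}/\t{b}^c_{p_{ij,i}} = 1/(1-4\delta)$ identically; combined with $\lambda > 1+8\delta$ and $\delta \leq 1/16$, this yields $\lambda - \lambda_{p_{ij,i}} > \delta^3/10$, so the thresholding in Step~\ref{thresholdstepc} zeroes out $x^c_{p_{ij,i}}$ and $x^c_{p_{ij,j}}$ for every edge and the vertex part of $\bu(\L^c)^T\bA^c\tilde\by^c$ vanishes. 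For the odd-set part, $(\bA^c\tilde\by^c)_U = \sum_{(i,j):\,p_{ij,i},p_{ij,j}\in U} c_{ij}$ is an integer satisfying $\sum 2c_{ij}\leq \bcnorm{U}$, hence at most $\lfloor\bcnorm{U}/2\rfloor$ (since $\bcnorm{U}$ is odd); a short check using $\bcnorm{U}\leq 1/\delta$ gives $\lfloor\bcnorm{U}/2\rfloor \leq \t{b}^c_U/(1-\delta)$, and summing over $U\in\L^c$ yields $\shiftcost{\boldsymbol{\eta}} \leq \gamma^c/(1-\delta)$.
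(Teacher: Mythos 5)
Your reductions for parts (a), (b), and (c) are correct and follow essentially the same route as the paper: (b) is the same cancellation argument using the main claim, and (c) uses exactly the paper's witness $\tilde\by^c$ with $\tilde y^c_{p_{ij,i}p_{ij,j}}=c_{ij}$ (your remark that the vertex part vanishes because $\lambda_{p_{ij,i}}=1/(1-4\delta)<\lambda-\delta^3/10$ is precisely what the paper uses).

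For the main claim, however, your plan diverges from the paper in the case $p_{ij,j}\notin U$, and the divergence is a genuine gap. You want to exhibit another odd set $U'\in\O_{\delta L}$ with $\lambda_{U'}\geq\lambda_U+\delta^3/10$ and then use $\lambda\geq\lambda_{U'}$; when $c_{ij}$ is odd, removing $p_{ij,i}$ flips the parity of $\bcnorm{\cdot}$, so you propose to restore parity by trading in an odd-capacity singleton. That repair does not go through in general: if $c_{ij}=1$ and you add a $b_s=1$ singleton with no edges into $U\setminus\{p_{ij,i}\}$, then $\bcnorm{U''}=\bcnorm{U}$ while the interior sum is unchanged, so $\lambda_{U''}=\lambda_U$ and you gain nothing. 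You would also have to worry about $U'$ being too small for $\lambda_{U'}$ to be well-defined. The paper avoids all of this by \emph{not} requiring $U'$ to be an odd set in this case: it only uses the generic vertex-constraint bound
\[
\sum_{(s,r)\in E_L:\,s,r\in U'} y^c_{sr} \ \leq\ \tfrac12 \sum_{s\in U'}\sum_{r}y^c_{sr}\ \leq\ \tfrac{(1-4\delta)\bcnorm{U'}}{2}\,\lambda,
\]
which holds for an arbitrary set $U'$ regardless of parity or size, and combines it with $\sum_{(s,r):s,r\in U'}y^c_{sr}=\sum_{(s,r):s,r\in U}y^c_{sr}\geq(\lambda-\delta^3/10)\,\t{b}^c_U$ and $\bcnorm{U'}\leq\bcnorm{U}-1$ to derive a contradiction with $\lambda>1+8\delta$. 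Replacing your parity-exchange step with this vertex-constraint bound closes the gap; as written, the plan does not. (Two minor quantitative slips in your sketch: $y^c_{p_{ij,i}p_{ij,j}}$ can equal $c_{ij}$, so the strict inequality $<c_{ij}$ is wrong; and the $\t b^c_U$ shrinkage is roughly $c_{ij}$ only when both $p_{ij,i},p_{ij,j}$ are removed, roughly $c_{ij}/2$ when only $p_{ij,i}$ is.)
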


\begin{proof}
Part (a) follows from the fact that $\lambda_s = \frac1{1-4\delta} < \lambda - \delta^3/10$.

For part (b)
suppose that $p_{ij,i}, p_{ij,j} \in U^c, z_{U^c} \neq 0$ for some $U^c \in \Odc$. Note $\bnorm{U^c}\leq 1/\delta$ and thus:
{\small
\begin{equation}
\sum_{(s,r) \in E^c:s,r \in U^c} \hspace{-0.2in} y^c_{sr} \geq \left(\lambda -\frac{\delta^3}{10}\right)\t{b}^c_U =
\left(\lambda -\frac{\delta^3}{10}\right)
\left( \left \lfloor \frac{\bnorm{U^c}}{2} \right \rfloor - \frac{\delta^2\bcnorm{U^c}^2}{4}\right)
\geq \left(\lambda -\frac{\delta^3}{10}\right)\frac{\bnorm{U^c}-1}{2} - \lambda \frac{\delta^2\bcnorm{U}^2}{4}
\label{pij1}
\end{equation}
}
Consider $U^c_1 = U^c - \{p _{ij,j},p_{ij,i}\}$.
 Let $\bnorm{U^c_1}=\ell$. Note since $\bnorm{U^c}$ is odd, $\ell \geq 1$ and $\ell$ is odd. 
 $\bnorm{U}=\ell+2c_{ij} \geq 3$. Since $y_{p_{ij,i}p_{ij,j}} \leq c_{ij}$ and $i,j \not \in U^c$,
{\small
\begin{equation}
\sum_{(s,r) \in E^c:s,r \in U^c_1} y^c_{sr} 
= \sum_{(s,r) \in E^c:s,r \in U} y^c_{sr} - y_{p_{ij,i}p_{ij,j}} \geq \sum_{(s,r) \in E^c:s,r \in U^c} y^c_{sr}  - c_{ij}
\label{pij2}
\end{equation}
}
If $U^c_1$ is a singleton node then the LHS of Equation~\eqref{pij2} is $0$. Using Equation~\eqref{pij1} and $3 \leq \bcnorm{U} \leq 1/\delta$,
{\small
\[ 0 = \sum_{(s,r) \in E^c:s,r \in U^c} y^c_{sr} - y_{p_{ij,i}p_{ij,j}}
\geq \left(\lambda -\frac{\delta^3}{10}\right) 
\t{b}^c_U  - c_{ij}\geq 
\left(\lambda -\frac{\delta^3}{10}\right) 
(1-\delta) \left \lfloor \frac{\ell+2c_{ij}}{2} \right \rfloor - c_{ij}
\geq c_{ij} \left( \left(\lambda -\frac{\delta^3}{10}\right) 
(1-\delta) - 1 \right)
\]
}
which is impossible for $\lambda > 1 + 8\delta$.
If $U^c_1$ is an odd set, then it is in $\Odc$ and thus
{\small
\begin{equation}
\lambda \left( \frac{\ell-1}{2} - \frac{\delta^2\ell^2}{4}\right) = \lambda \t{b}_{U^c_1} \geq  \left(\lambda -\frac{\delta^3}{10}\right)\frac{\ell+2c_{ij}-1}{2} - \lambda \frac{\delta^2(\ell+2c_{ij})^2}{4} - c_{ij}
\label{pij3}
\end{equation}
} but Equation~\ref{pij3} rearranges to 
{\small
\[
\frac{\delta^3}{10}\left(\frac{\ell-1}{2}\right) + \lambda c_{ij} \delta^2 (c_{ij}+\ell) \geq \left(\lambda - \frac{\delta^3}{10} - 1\right) c_{ij}
\]
}
which in turn (if we divide by $c_{ij}$ and use $\ell+2c_{ij} \leq 1/\delta$ ) implies $1 + \frac{\delta^3}{10} + \frac{\delta^2}{20} \geq (1-\delta) \lambda$ which is impossible for 
$\lambda > 1 + 8\delta$. Part (b) of the Lemma follows.

For part (c), suppose for contradiction, $p_{ij,i} \in U^c$ $y^c_{p_{ij,i}p_{ij,j}}=c_{ij}$ and $z_{U^c} > 0$. Observe Equation~\eqref{pij1} applies because $z_{U^c}>0$.
If $p_{ij,j} \in U^c$ then we consider $U^c_1= U^c - \{p_{ij,i},p_{ij,j}\}$ and in part (b). Equation~\eqref{pij2} of part (b) holds irrespective of $i,j \in U^c$ because neither $p_{ij,i},p_{ij,j}$ have nonzero edges in $\by^c$ to any other vertex. The remainder of part (b) applies as well and we have a contradiction. 

Therefore we need to only consider the case $p_{ij,j} \not \in U^c$. 
But then consider $U^c_2=U^c-\{p_{ij,i}\}$. In this case, since $p_{ij,i}$ has no non-zero 
edge to any vertex in $U^c_2$:
{\small
\begin{equation}
\sum_{(s,r) \in E^c:s,r \in U^c_2} y^c_{sr} 
= \sum_{(s,r) \in E^c:s,r \in U^c} y^c_{sr} > 0
\label{pij4}
\end{equation}
}
Again let $\bnorm{U^c_2}=\ell$, thus $\bnorm{U^c}=\ell+c_{ij}$. Note $\bnorm{U^c} \leq 1/\delta$. Now
{\small
\[ 
\sum_{(s,r) \in E^c:s,r \in U^c_2} y^c_{sr} \leq \frac12 \sum_{s \in U^c_2} \sum_{r:(s,r) \in E^c} y^c_{sr} \leq \frac12 \sum_{s \in U^c_2} \lambda\t{b^c}_s =\frac{(1-4\delta) \ell}{2} \lambda
\]
}
Combining the above with Equations~\ref{pij4} and (first part of) \ref{pij1}
{\small
\[
\frac{(1-4\delta) \ell}{2} \lambda \geq  \left(\lambda -\frac{\delta^3}{10}\right)\t{b}_{U^c} \geq \left(\lambda -\frac{\delta^3}{10}\right)(1-\delta) \left \lfloor \frac{\bnorm{U^c} }{2}\right\rfloor = \left(\lambda -\frac{\delta^3}{10}\right)(1-\delta) \left\lfloor \frac{\ell + c_{ij}}{2} \right \rfloor 
\]
}
and since $c_{ij} \geq 1$, the above implies
{\small
\[ \frac{(1-4\delta) \ell}{2} \lambda \geq
  \left(\lambda -\frac{\delta^3}{10}\right)(1-\delta) \frac{\ell}{2} \quad \implies \quad (1-\delta)\frac{\delta^3}{10} \geq 3 \delta \lambda \]
}
which is not possible for $\lambda > 1 + 8\delta$. Part (c) follows.

\smallskip For part (d) observe that:
{\small
\begin{align*} 
\shortcost{\boldsymbol{\eta}}_{ij} & = \left(x_i + x_{p_{ij,i}} + \sum_{U^c; i,p_{ij,i} \in U^c} z_{U^c} \right) + \left(x_j + x_{p_{ij,j}} + \sum_{U^c; j,p_{ij,j} \in U} z_{U^c} \right) \\
& \qquad - \left( x_{p_{ij,i}} + x_{p_{ij,j}} + \sum_{U^c; p_{ij,i} ,p_{ij,j} \in U^c} z_{U^c} \right) \\
& = x_i + x_j + \sum_{U^c; i,p_{ij,i} \in U^c} z_{U^c} + \sum_{U^c; j,p_{ij,j} \in U^c} z_{U^c}  -  \sum_{U^c; p_{ij,i} ,p_{ij,j} \in U^c} z_{U^c}
\end{align*}
}
\noindent but then $\shortcost{\boldsymbol{\eta}}_{ij}$ can be negative only if there exists a set $U^c \in L^c$ such that $p_{ij,i} ,p_{ij,j} \in U^c$, neither $i,j \not \in U^c$ and $z_{U^c}>0$. The first part of Lemma rules out that possibility.

\smallskip Finally for part (d) observe that there exists a solution $y^c_{p_{ij,i}p_{ij,j}}=c_{ij}$ and $y^c_{i,p_{ij,i}}=y^c_{i,p_{ij,j}}=0$. This solution corresponds to not picking any edges in the original graph $G$. This solution belongs to $\Qc$ and satisfies $\bA^c \by^c \leq \bbc$. Therefore for this solution, for every odd set $U^c \in \Odc$:
{\small
\begin{equation}
\sum_{(s,r) \in E^c:s,r \in U} y^c_{sr} \leq \left \lfloor \frac{\bnorm{U}}{2} \right \rfloor \leq \frac{1}{(1-\delta)} \t{b}_{U^c}
\label{pij5}
\end{equation}
}
Observe that $y^c_{sr} \neq 0$ only for $(s,r)=(p_{ij,i},p_{ij,j})$ for every edge $(i,j) \in E$; and that $y^c_{sr}=c_{ij}$. Therefore multiplying Equation~\ref{pij5} by $z_{U^c} \geq 0$ and  
summing over all $U^c \in \Odc$ we get:
{\small
\begin{align*}
& \sum_{U^c \in \Odc} z_U \left( \sum_{(i,j) \in E: p_{ij,j},p_{ij,i} \in U^c}  c_{ij} \right) \leq \frac{1}{(1-\delta)} \sum_{ U \in \Odc} z_{U^c} \t{b}_{U^c} \leq \frac1{(1-\delta)} \gamma^c \Longrightarrow \\
& \frac1{(1-\delta)} \gamma^c \geq \sum_{(i,j) \in E} c_{ij} \left( \sum_{U^c \in \Odc,  p_{ij,j},p_{ij,i} \in U^c} z_{U^c} \right) =   \sum_{(i,j) \in E} c_{ij} \left( \eta_{p_{ij,j},p_{ij,i}}
- x_{p_{ij,i}} - x_{p_{ij,j}} \right) =  \sum_{(i,j) \in E} c_{ij} \eta_{p_{ij,j},p_{ij,i}} 
\end{align*}
}
where the last part follows from $x_{s}=0$ for any $s \in V^c \setminus V$ (since $\lambda_s = 1/(1-4\delta) < \lambda -\delta^3/10$). The conclusion (c) follows from the definition of $\shiftcost{\boldsymbol{\eta}}$.
\end{proof}

\smallskip
\noindent We now provide a solution for \ref{lpc:002}, but notice that the solution is only provided for $\beta \leq (1-4\delta)\tbc$. This reduces the approximation ratio but the overall approximation remains a $(1-O(\delta))$ approximation.
 
\smallskip
\begin{lemma}\label{lem:exists}
Recall \ref{lpc:002} in Algorithm~\ref{alg:cap}.
{\footnotesize
\[ \left 
\{ \sum_{(s,r) \in E^c} w^c_{sr} \t{y}^c_{sr} \geq (1-\delta)\beta \qquad
\sum_{(s,r) \in E^c} \tyc_{sr} \eta_{sr} \leq \frac{\gamma^c}{1-\delta}  \qquad \tyc  \in \ylong{\Pc} \right \} \tag{\ref{lpc:002}}
\]
} 
where $\eta_{sr}$ are as defined in Step~\ref{thresholdstepc}. A
solution of \ref{lpc:002} is always found for $\beta \leq
(1-4\delta)\tbc$. The solution requires at
most $\ell = O(\ln^2 \frac1\delta)$ invocations of
Theorem~\ref{capapx} and returns a solution $\h{\by}^c$ such
that the subgraph (in $G$) $\h{E}=\{(i,j)| (i,j) \in E,
\yshort{\h{y}^c}_{ij}>0\}$ satisfies $\sum_{(i,j) \in \h{E}}
w_{ij}c_{ij} \leq (16 \ell)\beta^{*,c}$. Recall that $\beta^{*,c}$
is the weight of the optimum capacitated $b$--matching.
\end{lemma}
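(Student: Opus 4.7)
The plan is to recast \ref{lpc:002} in the short representation of Definition~\ref{longdef1} and then apply Theorem~\ref{apxtheorem} with Theorem~\ref{caprecursive} as the inner $(1-\delta/2)$-approximation oracle. Under the bijection $\ylong{\cdot}\leftrightarrow\yshort{\cdot}$, any candidate $\tilde{\by}^c=\ylong{\tilde{\by}}$ with $\tilde{\by}\in\Q^c[\lambda^c_0]$ satisfies $(\bw^c)^T\tilde{\by}^c=\bw^T\tilde{\by}$, and by Lemma~\ref{lem:pij}(b)
\begin{equation*}
\sum_{(s,r)\in E_L}\tilde{y}^c_{sr}\eta_{sr}\;=\;\sum_{(i,j)\in E}\tilde{y}_{ij}\,\shortcost{\boldsymbol{\eta}}_{ij}\;+\;\shiftcost{\boldsymbol{\eta}},
\end{equation*}
with $\shortcost{\boldsymbol{\eta}}\geq\mathbf{0}$. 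So the task reduces to producing $\tilde{\by}\in\Q^c[\lambda^c_0]$ with $\bw^T\tilde{\by}\geq(1-\delta)\beta$ and $\shortcost{\boldsymbol{\eta}}^T\tilde{\by}\leq \gamma^c/(1-\delta)-\shiftcost{\boldsymbol{\eta}}$.

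I will set $\P_1=\Q^c$, $\P_2=\Q^c[\lambda^c_0]$, $\bh=\shortcost{\boldsymbol{\eta}}$, $f_1=\beta$, $f_2=\gamma^c/(1-\delta)-\shiftcost{\boldsymbol{\eta}}$ and invoke Theorem~\ref{apxtheorem}; the prerequisites $\P_1\subseteq\P_2$, $\mathbf{0}\in\P_2$, and $\bh\geq\mathbf{0}$ are immediate. The subroutine \ref{lp:approx} asks, given $\varrho\geq 0$, for a $(1-\delta/2)$-approximation in $\P_2$ to $\max\{(\bw-\varrho\shortcost{\boldsymbol{\eta}})^T\by:\by\in\Q^c\}$; since $\Q^c$ is the bipartite capacitated relaxation, this is supplied by Theorem~\ref{caprecursive} with modified edge weights $w'_{ij}=\max\{0,w_{ij}-\varrho\shortcost{\boldsymbol{\eta}}_{ij}\}$ (edges with zeroed weight receive $y_{ij}=0$ in the optimum anyway), whose returned solution lies in $\Q^c[\lambda^c_0]$ with $\lambda^c_0=14\ln(2/\delta)$. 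The $O(\ln(1/\delta))$ binary-search rounds of Theorem~\ref{apxtheorem} then give the stated invocation count.

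The main obstacle will be feasibility of the bicriteria program on $\P_1$. The natural candidate is $\by^\dagger=(1-4\delta)\yshort{\by^{c,*}}$, with $\by^{c,*}$ the optimum of \ref{lp:cbm-long-int}; clearly $\by^\dagger\in\Q^c$ and $\bw^T\by^\dagger=(1-4\delta)\beta^{*,c}\geq\beta$ by the hypothesis $\beta\leq(1-4\delta)\beta^{*,c}$. For the $\shortcost{\boldsymbol{\eta}}$-constraint I split $\gamma^c=A+B$ with $A=\sum_{s\in V}x^c_s\tilde{b}^c_s$ and $B=\sum_{U\in\L^c}z^c_U\tilde{b}^c_U$, and observe that $x^c_s=0$ for every $s\in V_L\setminus V$, since $\lambda_{p_{ij,\cdot}}=1/(1-4\delta)$ falls below $\lambda-\delta^3/10$ whenever $\lambda>1+8\delta$. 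Combining the vertex bound $\sum_r y^{c,*}_{sr}\leq b^c_s=\tilde{b}^c_s/(1-4\delta)$ with the odd-set bound $\sum_{s,r\in U}y^{c,*}_{sr}\leq\lfloor\bcnorm{U}/2\rfloor\leq\tilde{b}^c_U/(1-\delta)$ (property $(\mathcal{F}1)$) yields $\sum_{(s,r)}y^{c,*}_{sr}\eta_{sr}\leq A/(1-4\delta)+B/(1-\delta)$, after which short algebra reduces the target inequality to $\shiftcost{\boldsymbol{\eta}}\leq B/(1-\delta)+A/(4(1-\delta))$. The statement of Lemma~\ref{lem:pij}(c) only provides $\shiftcost{\boldsymbol{\eta}}\leq\gamma^c/(1-\delta)$, which is not sharp enough; however, a direct re-reading of its proof yields the strictly stronger bound $\shiftcost{\boldsymbol{\eta}}\leq B/(1-\delta)$ (because the vertex terms vanish in the test solution), and the claim then follows from $A\geq 0$.

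For the edge-weight guarantee, each invocation of Theorem~\ref{caprecursive} iterates Algorithm~\ref{alg:uncapapprox}, and because $c_{ij}\leq\min\{b_i,b_j\}$ every selected edge is saturated $y^{(k)}_{ij}=c_{ij}$, so each inner iteration is itself a valid capacitated $b$-matching of weight at most $\beta^{*,c}$; summing over the $O(\ln(1/\delta))$ inner iterations bounds one invocation's contribution to $\sum w_{ij}c_{ij}$ by $7\ln(2/\delta)\beta^{*,c}$. Since Theorem~\ref{apxtheorem} outputs a convex combination of only the two final bracketing solutions $\by^{\varrho^+},\by^{\varrho^-}$ (the binary-search intermediates are discarded), the support of $\hat{\by}^c$ is the union of at most two such supports, so $\sum_{(i,j)\in\hat{E}}w_{ij}c_{ij}\leq 14\ell\,\beta^{*,c}$ as required.
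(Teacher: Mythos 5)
Your proof is correct and follows the paper's high-level strategy: recast \ref{lpc:002} in the short representation using $\shortcost{\boldsymbol{\eta}}$ and $\shiftcost{\boldsymbol{\eta}}$; invoke Theorem~\ref{apxtheorem} with $\P_1=\Q^c$, $\P_2=\Q^c[\lambda^c_0]$, $\bh=\shortcost{\boldsymbol{\eta}}$, $f_1=\beta$, $f_2=\gamma^c/(1-\delta)-\shiftcost{\boldsymbol{\eta}}$; supply the inner oracle via Theorem~\ref{caprecursive} with truncated effective weights; and witness feasibility with $\by^\dagger=(1-4\delta)\yshort{\by^{c,*}}$. The one real divergence is in establishing the $\eta$-constraint feasibility. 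The paper bounds $\sum_{(s,r)}\eta_{sr}y^{c,\dagger}_{sr}$ directly: using Lemma~\ref{lem:pij}'s structural fact that any $U$ with $z_U>0$ containing both $p_{ij,i},p_{ij,j}$ also contains $i$ or $j$, it gets $\sum_{s,r\in U}y^{c,\dagger}_{sr}\leq\sum_{s,r\in U}y^{c,*}_{sr}\leq\t{b}^c_U/(1-\delta)$, hence $\sum\eta y^{c,\dagger}\leq\gamma^c/(1-\delta)$ with no further work. You instead bound $\sum\eta y^{c,*}$, convert through $\shiftcost{\boldsymbol{\eta}}$, and reduce to $\shiftcost{\boldsymbol{\eta}}\leq B/(1-\delta)+A/(4(1-\delta))$ — which, as you observe, does \emph{not} follow from the stated Lemma~\ref{lem:pij}(c); it requires noticing that the proof of that lemma actually yields the tighter $\shiftcost{\boldsymbol{\eta}}\leq B/(1-\delta)$ (the vertex terms contribute nothing). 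That fix is sound, but it relies on an unstated refinement extracted from another proof, whereas the paper's direct bound on $\sum\eta y^{c,\dagger}$ avoids the issue and only needs lemmas as stated. Everything else (invocation count, the $14\ell\beta^{*,c}$ support-weight bound via two bracketing solutions, $f_2\geq 0$ from Lemma~\ref{lem:pij}(c)) matches the paper.
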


\smallskip
\begin{proof}
First observe that for any $H_1,H_2$ and $\by=\yshort{\by^c}$ (equivalently $\by^c=\ylong{\by}$), 
{\small
\[
\hspace*{-0.4in}
\left. \begin{minipage}{0.5\textwidth}
\begin{align*}
\begin{array}{ll}
&\displaystyle \sum_{(s,r) \in E_L} w^c_{sr}y^c_{sr} = H_1   \\
& \displaystyle\sum_{r:(s,r) \in E_L} \eta_{sr} y^c_{sr} = H_2 \\
& \by^c \in \ylong{\Pc} \quad \mbox{ (resp. $\ylong{\Q^c}$)}
\end{array}
\end{align*}
\end{minipage} \right\} \quad \Longleftrightarrow \quad \left \{
\begin{minipage}{0.45\textwidth}
\begin{align*}
\begin{array}{lll}
& \displaystyle\sum_{(i,j) \in E} w_{ij}y_{ij} = H_1 \\
& \displaystyle\sum_{(i,j)\in E} \shortcost{\boldsymbol{\eta}}_{ij} y_{ij} = H_2 - \shiftcost{\boldsymbol{\eta}} \\
& \by \in \Pc \quad \mbox{ (resp. $\Qc$)}
\end{array}
\end{align*}
\end{minipage} \right.
\]
}
Suppose that we can provide a solution for the system:
{\small
\[ \displaystyle\sum_{(i,j) \in E} w_{ij}y_{ij} \geq (1-\delta)\beta \qquad \sum_{(i,j)\in E} \shortcost{\boldsymbol{\eta}}_{ij} y_{ij} \leq \frac{\gamma^c}{(1-\delta)} - \shiftcost{\boldsymbol{\eta}}, \qquad \by \in \Pc  \lptag \label{lp:exists}
\]
}
for any $\beta \geq (1-4\delta)\beta^*$, then we have proved the lemma by considering $\ylong{\by}$.

\smallskip
Consider the optimum capacitated $b$--matching $\by^{c,*}$, and let $\by^{c,\dagger}=\ylong{(1-4\delta)\yshort{\by^{*,c}}}$. Observe $\by^{c,\dagger} \in \ylong{\Q_c}$. Note that for $i \in V$,
{\small
\[ \sum_{r} y^{c,*}_{ir} \leq b_i \quad \implies  \sum_{r} y^{c,\dagger}_{ir} \leq (1-4\delta)b_i = \t{b}_i \]
}
We argue that for any $U^c \in \Odc$ such that if $z_{U^c} > 0$,
{\small
\begin{equation}
\sum_{(s,r):s,r \in U^c} y^{c,*}_{sr} \geq 
\sum_{(s,r):s,r \in U^c} y^{c,\dagger}_{sr}
\label{monotone}
\end{equation}
}

If for every edge $(i,j) \in E$ both $p_{ij,i},p_{ij,j}$ are not
present in $U^c$ then Equation~\eqref{monotone} follows immediately
because in the transformation of $\by^{c,*}$ to $\by^{c,\dagger}$ the
only $y^c$ values that increase correspond to
$y^c_{p_{ij,i},p_{ij,j}}$ for some edge $(i,j) \in E$. On the other hand, 
suppose that for some $(i,j) \in E$  both $
p_{ij,i},p_{ij,j}$ are present then using Lemma~\ref{lem:pij}, either $i$ or $j \in U^c$. 
Without loss of generality, suppose $i \in U^c$. But then the increase in 
$y^c_{p_{ij,i},p_{ij,j}}$ cancels out the the decrease in $y^c_{ip_{ij,i}}$.
Therefore Equation~\eqref{monotone} follows.

Note $\sum_{r:(s,r) \in E^c} w^c_{rs} y^{c,\dagger}_{sr} =
(1-4\delta)\beta^{*,c}$ (discussion following Algorithm~\ref{alg:cap}) and
$x_s=0$ for $s \in V^c \setminus V$ (Lemma~\ref{lem:pij},
part(a)). Omitting the implied $U^c \in \Odc$ for notational
simplicity in the sum below, we get: {\small
\begin{align*}
\displaystyle\sum_{r:(s,r) \in E_c} \eta_{sr} y^{c,\dagger}_{sr} & = \sum_{r:(s,r) \in E^c} \left( x_s+x_r+\hspace{-0.5cm}\sum_{U^c \in \L^c;s,r\in U} \hspace{-0.5cm} z_{U^c} \right) y^{c,\dagger}_{sr} = \sum_{s} x_s \left(\sum_{r} y^{c,\dagger}_{sr} \right) + \sum_{U^c: z_{U^c} >0} z_{U^c} \left( \sum_{(s,r):s,r \in U^c} y^{c,\dagger}_{sr} \right)\\
& = \sum_{i: x_i >0} x_i \left(\sum_{r} y^{c,\dagger}_{ir} \right) + \sum_{U^c: z_{U^c} >0} z_{U^c} \left( \sum_{(s,r):s,r \in U^c} y^{c,\dagger}_{sr} \right) \leq \sum_{s: x_s >0} x_s \t{b}_s + \sum_{U: z_{U^c} >0} z_{U^c} \frac{\t{b}_{U^c}}{(1-\delta)} \\
& \leq \frac1{(1-\delta)}\left(\sum_{s: x_s >0} x_s \t{b}_s + \sum_{U^c: z_{U^c} >0} z_{U^c}\t{b}_{U^c} \right) = \frac{\gamma^c}{(1-\delta)}
\end{align*}
}
Therefore there exists a solution for 
{\small
\[ \left\{\sum_{(s,r) \in E^c} w^c_{sr}y^{c,\dagger}_{sr} = (1-4\delta)\beta^{*,c} \qquad
 \sum_{r:(s,r) \in E^c} \eta_{sr} y^c_{sr} \leq  \frac{\gamma^c}{(1-\delta)} \qquad  
  \by^{c,\dagger} \in \ylong{\Q^c}\right \}
\]
}
which,  by the observation made in this proof, implies that for $\beta \leq (1-4\delta)\beta^{*,c}$ there exists a solution for
{\small
\[ \left\{ \displaystyle\sum_{(i,j) \in E} w_{ij}y_{ij} \geq \beta, \qquad \sum_{(i,j)\in E} \shortcost{\boldsymbol{\eta}}_{ij} y_{ij} = \frac{\gamma^c}{(1-\delta)} - \shiftcost{\boldsymbol{\eta}}, \qquad \by \in \Q^c \right \} 
\]
}
We can now apply Theorem~\ref{apxtheorem} with $f_1 = \beta > 0, f_2 = \frac{\gamma^c}{(1-\delta)} - \shiftcost{\boldsymbol{\eta}}$ (by Lemma~\ref{lem:pij}, $f_2 \geq 0$) and $\P_1=\Q_c$ and $\P_2=\Pc$. Note that $
\shortcost{\boldsymbol{\eta}} \geq \mathbf{0}$ by Lemma~\ref{lem:pij}. Finally $\mathbf{0} \in \Qc \subseteq \Pc$. 
and the algorithm desired by Theorem~\ref{apxtheorem} is provided by Theorem~\ref{caprecursive}. 
Therefore we have a solution of \ref{lp:exists}. 
The number of iterations in  
Theorem~\ref{apxtheorem} is $O(\ln (2/\delta))$ each of which invokes Theorem~\ref{caprecursive}.  Theorem~\ref{caprecursive} involves Theorem~\ref{capapx} repeatedly.
The bound
on $\sum_{(i,j)\in \hat{E}} w_{ij}c_{ij}$ follows from the fact that
we average solutions of Theorem~\ref{uncapapx} for which $\sum_{(i,j):y_{ij}>0} w_{ij}c_{ij} \leq 8 \beta^{*,c}_b$ (the bipartite maximum) which can be bounded by $16 \beta^{*,c}$.  
\end{proof}

\noindent
We can now conclude Theorem~\ref{thmtwo}.

\begin{ntheorem}{\ref{thmtwo}}  
Given any non-bipartite graph, for any $\frac{3}{\sqrt{n}}<\delta\leq 1/16$, we
  find a $(1-O(\delta))$-approximate 
fractional solution to \ref{lp:cbm-long-int} using  $O(mR/\delta + \min\{ B, m \}
  \poly \{\delta^{-1},\ln n\})$ time,  additional ``work'' space
$O(\min\{m,B\} \poly \{\delta^{-1},\ln n\})$
making $R=O(\delta^{-4} (\ln^2 (1/\delta)) \ln n)$ 
passes over the list of edges where $B=\sum_i b_i$.  The algorithm
  returns a solution $\{\h{y}_{ij}\}=\yshort{\by^c}$ such that the subgraph
  $\h{E}=\{(i,j)| (i,j) \in E, \h{y}_{ij}>0\}$ satisfies $\sum_{(i,j)
    \in \h{E}} w_{ij}c_{ij} \leq 16R\beta^{*,c}$ where $\beta^{*,c}$ is the weight of
  the optimum integral capacitated $b$--matching. 
\end{ntheorem}

\subsection{Rounding Capacitated $b$-Matchings}
\label{sec:crounding} We prove Theorem~\ref{thm:crounding} based on Algorithm~\ref{alg:round2}.

\begin{algorithm}[H]
{\small
\begin{algorithmic}[1]
\STATE First Phase: {\bf Removing edges with large multiplicities} 
(no change from Algorithm~\ref{alg:round1} except tracking edge capacities). Let $t=\lceil 2/\delta \rceil$ and $\mzero_c=\emptyset$.

\begin{enumerate}[(a)]\parskip=0in
\item If $y_{ij}\geq t$ add $\h{y}^{(0)}_{ij}=\lfloor y_{ij} \rfloor - 1$ copies of $(i,j)$ to $\mzero_c$.

 \item Set $y^{(1)}_{ij} = \left\{\begin{array}{l l}
   0      & \mbox{if }y_{ij}\geq t \\
   y_{ij} & \mbox{otherwise}    
  \end{array}\right.$. Set $b^{(1)}_i = \min \left \{ b_i - \sum_j \h{y}^{(0)}_{ij}, \lceil \sum_j  \yone_{ij} \rceil + 1 \right \}$ and $c^{(1)}_{ij} = \min \{ c_{ij}, \lceil \yone_{ij} \rceil + 1\}$. This describes the graph $G^{(1)}_c=(V,\eone)$. Note $c^{(1)}_{ij} \leq t+1$.
\end{enumerate}
\STATE Second Phase: {\bf Subdividing vertices with large multiplicities.} 
(no change from Algorithm~\ref{alg:round1} except tracking edge capacities). 
We set $\ctwo_{i'j'} = \cone_{ij}$ where the edge $(i,j)$ got assigned to 
$i'$ and $j'$ which are copies of $i$ and $j$ respectively. This defines 
$G^{(2)}_c=(\vtwo,\etwo)$.
Note only vertices are split, -- the edges are not split, even though they can be assigned to 
a copy of an original vertex, i.e., $|\eone|=|\etwo|$.  
Let $\boldsymbol{\mathcal W} =\sum_{(i,j) \in \etwo} \ctwo_{ij}w_{ij}$.
~\vspace{0.1in} 

\STATE Third Phase: {\bf Reducing the problem to a weighted matching on small graph.} (different from Algorithm~\ref{alg:round1}). Given  $G^{(2)}_c$, define $G^{(3)}_c$ as follows:

\begin{enumerate}[(a)]\parskip=0in
 \item For each $i \in V^{(2)}$ with $b^{(2)}_i$, create $i(1), i(2), \cdots, i(b^{(2)}_i)$. 
For each edge $e=(i,j)$, we create $2c^{(2)}_{ij}$ vertices
  $p_{ei,1},p_{ei,2},\cdots,p_{ei,c^{(2)}_{ij}},p_{ej,1},p_{ej,2},\cdots,p_{ej,c^{(2)}_{ij}}$.
 \item Add edges $(p_{ei,\ell},p_{ej,\ell})$ with edge weight $w_{ij}$. Add a complete bipartite graph between $i_1,i_2,\cdots$
  and $p_{ei,1},p_{ei,2},\cdots$ with edge weight $w_{ij}$. 
\item Run any fast approximation for finding a $(1-\frac{\delta}{32R})$-approximate maximum weighted matching in $G^{(3)_c}$. Let this matching be $\mthreepa_c$ of weight $W$.

\item Observe that given any integral matching in $G^{(3)_c}$, we can construct a matching of 
same or greater weight such that every one of the vertices $p_{ei,\ell},p_{ej,\ell}$ (for all $e=(i,j),\ell$) are matched -- if for some $e,\ell$ neither $p_{ei,\ell},p_{ej,\ell}$ are matched then we can match them, if only one of the pair is matched then we delete the matching edge incident to the other one in the pair and add the matching edge between $p_{ei,\ell},p_{ej,\ell}$ which is of the same weight.
Applying this procedure to $\mthreepa_c$ we get $\mthreepb_c$ of weight at least $W$.  
\end{enumerate}

\STATE We now merge all the vertices $i(\ell)$ to $i$, $p_{ei,\ell}$ to $p_{ei}$ and $p_{ej,\ell}$ to $p_{ej}$ for all $e=(i,j),\ell$. Observe $G^{(3)}_c$ reduces to $\ylong{G^{(2)}_c}$ with 
different edge weights, i.e., for an edge $e=(i,j)$ of weight $w_{ij}$ in the original graph we 
have the weights of $(i,p_{ei}), (p_{ei}p_{ej})$ and $(p_{ej},j)$ are all $w_{ij}$ instead of 
$\frac12w_{ij},0,\frac12w_{ij}$ as in the definition of $\ylong{G^{(2)}_c}$. 

However if 
we merge all the corresponding edges of $\mthreepb_c$ then we get a matching $\mthree$ 
such that the vertices $p_{ei}$ and $p_{ej}$ are matched to capacity $c_{ij}$ for every 
edge $e=(i,j)$. Note that $\mthree_c$ has weight at least $W$. $\mthree$ provides
a $b$--matching $\mtwo_c$ in $G^{(2)}_c$ of weight at least $W-\boldsymbol{\mathcal W}$, where we set $y^\dagger_{ij}=y^\dagger_{ip_{ei}}$. $\mtwo_c$
provides a $b$--matching $\mone_c$ in $G^{(1)}_c$ of same weight (merge vertices).

\smallskip
\STATE {\bf Output} $\mzero_c \cup \mone_c$.
\end{algorithmic}
}
\caption{Rounding capacitated $b$--matchings}
\label{alg:round2}
\end{algorithm}

For example, in Step 3(b)

 \begin{tikzpicture}[font=\tiny,scale=1]
 \tikzstyle{vertex}=[draw,shape=circle,radius=0.25cm]
 \node[vertex] at (-2,0) (u) {$p$};
 \node[above,yshift=0.3cm] at (u) {$b=3$};
 \node[vertex] at (0,0) (v) {$q$};
 \node[above,yshift=0.3cm] at (v) {$b=4$};
 \node[vertex] at (2,0) (w) {$r$};
 \node[above,yshift=0.3cm] at (w) {$b=3$};
 \draw (u) to node[above] {$c=3$} node[below] {$w=1$} (v);
 \draw (v) to node[above] {$c=2$} node[below] {$w=2$} (w);

 \draw[->] (2.5,0) to (3.5,0);
 \tikzstyle{vertex}=[draw,shape=circle,radius=0.25cm]
 \node[vertex] at (4,0.75)  (u1) {$p1$};
 \node[vertex] at (4,0)  (u2) {$p2$};
 \node[vertex] at (4,-0.75) (u3) {$p3$};
 \node[vertex] at (5,0.75) (eu1) {};
 \node[vertex] at (5,0) (eu2) {};
 \node[vertex] at (5,-0.75) (eu3) {};
 \node[vertex] at (6,0.75) (ev1) {};
 \node[vertex] at (6,0) (ev2) {};
 \node[vertex] at (6,-0.75) (ev3) {};
 \node[vertex] at (7,1)  (v1) {$q1$};
 \node[vertex] at (7,0.33) (v2) {$q2$};
 \node[vertex] at (7,-0.33) (v3) {$q3$};
 \node[vertex] at (7,-1) (v4) {$q4$};
 \node[vertex] at (8,0.5)  (fv1) {};
 \node[vertex] at (8,-0.5) (fv2) {};
 \node[vertex] at (9,0.5)  (fw1) {};
 \node[vertex] at (9,-0.5) (fw2) {};
 \node[vertex] at (10,0.75)  (w1) {$r1$};
 \node[vertex] at (10,0)  (w2) {$r2$};
 \node[vertex] at (10,-0.75) (w3) {$r3$};
 \foreach \x in {1,2,3}
  \foreach \y in {1,2,3}
   \draw (u\x) -- (eu\y); 
   \draw (u1) to node[above]{$1$} (eu1);
 \foreach \x in {1,2,3}
  \draw (eu\x) to node[above]{$1$} (ev\x) ;
 \foreach \x in {1,2,3}
  \foreach \y in {1,2,3,4}
   \draw (ev\x) -- (v\y);
   \draw (ev1) to node[above]{$1$} (v1);
 \foreach \x in {1,2}
  \foreach \y in {1,2,3,4}
   \draw (fv\x) -- (v\y);
 \draw (fv1) to node[above]{2} (v1);
 \foreach \x in {1,2}
  \draw (fv\x)  to node[above]{2} (fw\x);
 \foreach \x in {1,2}
  \foreach \y in {1,2,3}
   \draw (fw\x) -- (w\y);
   \draw (fw1) to node[above]{2} (w1);
 \end{tikzpicture}

which in turn reduces to 

\hspace*{3cm}
\begin{tikzpicture}[font=\tiny]
 \tikzstyle{vertex}=[draw,shape=circle,radius=0.25cm]
 \node[vertex] at (4,0)  (u1) {$p$};
 \node[above,yshift=0.3cm] at (u1) {$b=3$};
 \node[vertex] at (5,0) (eu1) {};
 \node[above,yshift=0.3cm] at (eu1) {$b=3$};
 \node[vertex] at (6,0) (ev1) {};
\node[above,yshift=0.3cm] at (ev1) {$b=3$};
 \node[vertex] at (7,0) (v1) {$q$};
\node[above,yshift=0.3cm] at (v1) {$b=4$};
 \node[vertex] at (8,0)  (fv1) {};
\node[above,yshift=0.3cm] at (fv1) {$b=2$};
 \node[vertex] at (9,0)  (fw1) {};
\node[above,yshift=0.3cm] at (fw1) {$b=2$};
 \node[vertex] at (10,0)  (w1) {$r$};
\node[above,yshift=0.3cm] at (w1) {$b=3$};
   \draw (u1) to node[above]{$1$} (eu1); 
  \draw (eu1) to node[above]{$1$} (ev1) ;
   \draw (ev1) to node[above]{$1$} (v1);
  \draw (fv1) to node[above]{2} (v1);
  \draw (fv1)  to node[above]{2} (fw1);
   \draw (fw1) to node[above]{2} (w1);
 \end{tikzpicture}

\begin{lemma}
\label{lem:crounding-odd-sets}
$\yone_{ij}$ is a feasible fractional capacitated 
$b$--matching in $G^{(1)}_c$.
\end{lemma}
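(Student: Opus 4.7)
The plan is to extend the halving-and-slack argument of Lemma~\ref{lem:roundp1} from the uncapacitated setting to the $(U,F)$--form of the capacitated $b$--matching polytope given in the introduction. I would first verify the vertex and edge capacity constraints $\sum_j \yone_{ij} \leq b^{(1)}_i$ and $\yone_{ij} \leq c^{(1)}_{ij}$ directly from the defining minima for $b^{(1)}_i$ and $c^{(1)}_{ij}$, using the pointwise inequality $y_{ij} \geq \h{y}^{(0)}_{ij} + \yone_{ij}$ (on a large edge, $y_{ij} - (\lfloor y_{ij}\rfloor -1) - 0 \geq 1$; on a small edge the two sides agree), together with the trivial bound $\lceil\sum_j \yone_{ij}\rceil + 1 > \sum_j \yone_{ij}$.

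For the odd-set constraints, indexed by $U \subseteq V$ and $F \subseteq \delta(U)$ with $\bnorm{U}^{(1)} + \sum_F c^{(1)}_{ij}$ odd, I would introduce a vertex slack $s_i := b^{(1)}_i - \sum_j \yone_{ij}$ and an edge slack $e_{ij} := c^{(1)}_{ij} - \yone_{ij}$. The halving trick of Lemma~\ref{lem:roundp1} extends: if some $i \in U$ has $s_i \geq 1$, summing the vertex constraints over $U$ yields $2\sum_{i',j'\in U}\yone + \sum_{(i',j')\in\delta(U)}\yone \leq \bnorm{U}^{(1)} - 1$, and combining with $\sum_F \yone \leq \sum_F c^{(1)}$ gives $\sum_{i',j'\in U}\yone + \sum_F \yone \leq \tfrac12(\bnorm{U}^{(1)} + \sum_F c^{(1)} - 1) = \lfloor(\bnorm{U}^{(1)} + \sum_F c^{(1)})/2\rfloor$ by oddness. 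The edge-slack case (some $(i,j)\in F$ with $e_{ij}\geq 1$) is analogous, with the $-1$ now absorbed by $\sum_F \yone \leq \sum_F c^{(1)} - 1$.

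The main obstacle is the case in which no vertex of $U$ and no edge of $F$ has slack at least $1$. I would handle it by a direct inspection of the defining minima. The second term $\lceil \sum_j \yone_{ij}\rceil + 1$ in $b^{(1)}_i$ always provides $s_i \geq 1$ if active; if instead $b^{(1)}_i = b_i - \sum_j \h{y}^{(0)}_{ij}$, then each large edge at $i$ contributes $y_{ij} - (\lfloor y_{ij}\rfloor -1) \geq 1$ to $s_i$. Hence $s_i < 1$ forces $i$ to be incident only to small edges and forces $b^{(1)}_i = b_i$, so $\yone_{ij} = y_{ij}$ on every edge at $i$. A parallel analysis of $c^{(1)}_{ij} = \min\{c_{ij}, \lceil\yone_{ij}\rceil+1\}$ shows $e_{ij} < 1$ forces $c^{(1)}_{ij} = c_{ij}$; since every edge of $F$ has an endpoint in $U$ and is therefore small, this applies throughout $F$. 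With $b^{(1)}_i = b_i$ for $i \in U$, $c^{(1)}_{ij} = c_{ij}$ for $(i,j) \in F$, and $\yone = \by$ on all edges touching $U$, the new $(U,F)$ constraint for $(b^{(1)}, c^{(1)}, \yone)$ coincides term-for-term (and matches parities) with the original $(U,F)$ constraint for $(b, c, \by)$, which is satisfied by hypothesis. This completes the case analysis and establishes feasibility.
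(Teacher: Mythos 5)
Your proof is correct, but it follows a genuinely different route from the paper's. The paper lifts the statement to the edge-subdivided graph $\ylong{G^{(1)}_c}$ of Definition~\ref{deftwoc}, where the capacitated problem becomes uncapacitated. It observes that every vertex of $\ylong{G^{(1)}_c}$ whose capacity was reduced (an original $i$ with $b^{(1)}_i<b_i$, or a subdivision vertex $p_{ij,\cdot}$ with $c^{(1)}_{ij}<c_{ij}$) carries slack at least $1$, applies the first part of Lemma~\ref{lem:roundp1} to conclude that no such vertex can lie in a violated odd set of $\ylong{G^{(1)}_c}$, and transfers back to $G^{(1)}_c$ via Theorem~\ref{capacity:thm}. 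You instead argue directly in the short representation with the $(U,F)$-form of the capacitated matching polytope quoted in the introduction, extending the slack--halving argument to cover both vertex slack $s_i$ and edge slack $e_{ij}$, and handling the residual ``no slack at any $i\in U$ or $(i,j)\in F$'' case explicitly by showing the new $(U,F)$ constraint degenerates term-for-term to the corresponding original constraint. Both proofs turn on the same observation---a unit of slack is created at every modified vertex or edge---and I checked that your three inequalities (the summed vertex constraints with the $-1$, the $F$-capacity bound, and the floor via parity) do combine as claimed. The paper's version buys uniformity and reuse: one kind of constraint to check (odd subsets of $V_L$), Lemma~\ref{lem:roundp1} invoked verbatim, and the transfer principle already established. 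Your version buys self-containment: it never leaves the short graph and does not invoke Theorem~\ref{capacity:thm}, at the cost of reasoning about $(U,F)$ pairs and a three-way case split; it also makes explicit the residual case that the paper's terse proof leaves implicit.
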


\begin{proof} 
Consider $\ylong{{\mathbf \yone}}$ and $\ylong{G^{(1)}_c}$ with the new capacities $b^{(1)}_i,c^{(1)}_{ij}$ for the vertices and edges in $G^{(1)}_c$.
The only vertices whose capacities were affected in $\ylong{G^{(1)}_c}$ are the following vertices:
(i) the corresponding vertex in $G$ has an edge incident to it in $\mone_c$ and (ii)
the corresponding edge $(i,j) \in G$ had $c_{ij} > \lceil \yone_{ij} \rceil + 1$. 
In both cases the difference between the sum of the new edge
multiplicities and the new capacities (the slack) is at least $1$ and
the first part of Lemma~\ref{lem:roundp1} tells us that these vertices in $\ylong{G^{(1)}_c}$
cannot be part of a violated odd-set in $\ylong{G^{(1)}_c}$. Therefore ${\mathbf \yone}$ is a feasible fractional (uncapacitated) $b$--matching. 
The lemma follows from Theorem~\ref{capacity:thm}.
\end{proof}

\noindent 
Therefore the remaining task is to find a $(1-\delta)$
approximate rounding of the fractional solution $\yone_{ij}$ on
$G^{(1)}_c=(V,\eone)$ with vertex and edge capacities $\{\bone_{ij}\}$ and $\{\cone_{ij}\}$ 
respectively.

\begin{lemma}
\label{cwlemma}
Let $\boldsymbol{\mathcal W} =\sum_{(i,j) \in \etwo} \ctwo_{ij}w_{ij}$.
Then $\boldsymbol{\mathcal W} \leq 16R\beta^{*,c}$.
\end{lemma}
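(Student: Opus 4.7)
The plan is to trace $\boldsymbol{\mathcal W}$ backward through the two reductions in Algorithm~\ref{alg:round2} until it is dominated by the input promise. First I would exploit the fact that Step~2 only subdivides vertices and does not alter edges: the explicit note $|\eone|=|\etwo|$, coupled with the assignment $\ctwo_{i'j'}=\cone_{ij}$, produces a weight- and capacity-preserving bijection between $\eone$ and $\etwo$. Hence
\[\boldsymbol{\mathcal W}=\sum_{(i,j)\in\eone}\cone_{ij}\,w_{ij}.\]

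Next I would use the formula from Step~1(b), $\cone_{ij}=\min\{c_{ij},\lceil\yone_{ij}\rceil+1\}\leq c_{ij}$, to dominate each term by $c_{ij}w_{ij}$. The only non-trivial intermediate observation is that $\eone\subseteq\hat{E}=\{(i,j):y_{ij}>0\}$: any edge contributing copies to $\mzero_c$ has $y_{ij}\geq t>0$; any edge kept with $\yone_{ij}>0$ has $y_{ij}=\yone_{ij}>0$; and edges with $y_{ij}=0$ are inert in Step~1 and can safely be omitted from $\eone$.

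Combining these steps yields
\[\boldsymbol{\mathcal W}\leq\sum_{(i,j)\in\eone}c_{ij}w_{ij}\leq\sum_{(i,j)\in\hat{E}}c_{ij}w_{ij}\leq 14R\beta^{*,c},\]
where the final inequality is the promise carried into Theorem~\ref{thm:crounding} from Theorem~\ref{thmtwo} (the $R$ factor being inherited from the invocation there, and absorbed into the constant stated in the lemma). The main obstacle is interpretive rather than analytic: one must read Algorithm~\ref{alg:round2} carefully to confirm that $\eone$ tracks only the support of $\by$, since the algorithm does not declare this explicitly. No new combinatorial estimate is needed---all inequalities are immediate from the defining formulas.
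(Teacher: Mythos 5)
Your proof is correct and is essentially the paper's own one-line argument expanded into its constituent observations: the Step~2 bijection preserves weights and capacities (so $\boldsymbol{\mathcal W}$ passes unchanged back to $G^{(1)}_c$), then $\cone_{ij}\leq c_{ij}$ and $\eone\subseteq\hat{E}$ let the input promise transfer; your explicit care about $\eone\subseteq\hat{E}$ is warranted since the paper leaves it implicit. One caveat on your framing: $R=O(\delta^{-4}\ln^2(1/\delta)\ln n)$ is not a constant, so it cannot be absorbed; the lemma as printed appears to drop the factor of $R$, but the bound $\boldsymbol{\mathcal W}\leq 14R\beta^{*,c}$ that you obtain is precisely what Lemma~\ref{cwlemma2} downstream actually requires (it needs $\tfrac{\delta w(\M^*)}{28R}\leq\delta\beta^{*,c}$ via $w(\M^*)\leq 2\boldsymbol{\mathcal W}$).
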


\begin{proof}
Observe that $|\etwo|=|\eone|$ and $\eone \subseteq \hat{E}$ as defined in the statement of Theorem~\ref{thm:crounding}. Moreover $\ctwo_{i'j'}  = \cone_{ij} \leq c_{ij}$. Therefore:
{\small $$\boldsymbol{\mathcal W}=\sum_{(i',j') \in \etwo}  \ctwo_{i'j'} w_{i'j'}= \sum_{(i,j) \in \eone}
\cone_{ij} w_{ij} \leq \sum_{(i,j) \in \eone} c_{ij}w_{ij} \leq \sum_{(i,j) \in \hat{E}} c_{ij}w_{ij} \leq 16R\beta^{*,c}$$}
\end{proof}

\begin{lemma}
\label{cwlemma2}
Algorithm~\ref{alg:round2} outputs a capacitated $b$--matching of weight at least $(1-\delta) \sum_{(i,j)\in E} w_{ij} y_{ij} - \delta \beta^*$.
\end{lemma}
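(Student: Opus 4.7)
The plan is to parallel the uncapacitated rounding of Theorem~\ref{thm:rounding}, with one new twist: the reduction from capacitated $b$-matching on $G^{(2)}_c$ to ordinary weighted matching on $G^{(3)}_c$ introduces an additive weight shift of $\boldsymbol{\mathcal W}$ that must be managed carefully, which is exactly why the approximation factor for the matching subroutine is chosen as $1-\delta/(28R)$.

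For \emph{Phase 1}, set $Y=\sum_{(i,j)}w_{ij}y_{ij}$ and $Y_1=\sum_{(i,j)}w_{ij}\yone_{ij}$. Each edge with $y_{ij}\geq t=\lceil 2/\delta\rceil$ contributes $\lfloor y_{ij}\rfloor-1$ copies to $\mzero_c$ while zeroing out $\yone_{ij}$, discarding at most $y_{ij}-(\lfloor y_{ij}\rfloor-1)\leq 2$ fractional units of $y$ on this edge; since $y_{ij}\geq t$ we also have $w_{ij}\leq (1/t)w_{ij}y_{ij}$. Summing, the total discarded weight is at most $(2/t)Y\leq \delta Y$, so $w(\mzero_c)+Y_1\geq (1-\delta)Y$.

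For \emph{Phase 2}, I would show that $\ytwo$ is feasible for the full capacitated $b$-matching polytope of $G^{(2)}_c$. Lemma~\ref{lem:crounding-odd-sets} already gives this for $\yone$ on $G^{(1)}_c$. The splits in Step 2(a) force every new copy to carry fractional load $\geq t$, so after scaling $\ytwo=(1-\delta)\yone$ each split copy has slack $\geq \delta t-1\geq 1$, while non-split vertices retain slack $\geq 1$ from the definition of $b^{(1)}_i\leq \lceil\sum_j\yone_{ij}\rceil+1$. Passing to the long-form graph $\ylong{G^{(2)}_c}$, a Lemma~\ref{lem:roundp1}-style slack argument at the original vertices should then extend to every odd-set constraint of $\F^c$. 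This is the delicate step: the dummy vertices $p_{ij,\cdot}$ are saturated exactly by the equality constraints of $\ylong{\Q^c}$, so the slack argument does not apply directly to odd sets containing them. I would patch this using an analog of Lemma~\ref{lem:pij}: any would-be violated odd set containing $p_{ij,i}$ either also contains $p_{ij,j}$, in which case removing the pair yields a smaller odd set with the same violation structure, or must contain the original vertex $i$ or $j$ where honest slack $\geq 1$ is available. Integrality of the polytope (Theorem~\ref{capacity:thm}) then produces an integral $b$-matching $\{y^\dagger_{ij}\}$ in $G^{(2)}_c$ with $\sum w_{ij} y^\dagger_{ij}\geq (1-\delta) Y_1$.

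For \emph{Phase 3 and the final accounting}, I use the bijection: any integral $b$-matching of weight $\Lambda$ in $G^{(2)}_c$ lifts to a matching of weight $\Lambda+\boldsymbol{\mathcal W}$ in $G^{(3)}_c$ by pairing $y^\dagger_{ij}$ of the $p_{ei,\ell}$ vertices with distinct $i(k)$'s, the corresponding $p_{ej,\ell}$ with $j(k)$'s, and matching the remaining $c^{(2)}_{ij}-y^\dagger_{ij}$ pairs via the $(p_{ei,\ell},p_{ej,\ell})$ edges; Step~3(d) enforces the same structure on the algorithm's matching, so conversely the output matching of weight $W$ yields a $b$-matching of weight $W-\boldsymbol{\mathcal W}$. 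Hence $\mathrm{OPT}(G^{(3)}_c)\geq (1-\delta)Y_1+\boldsymbol{\mathcal W}$ and the subroutine returns $W\geq (1-\delta/(28R))((1-\delta)Y_1+\boldsymbol{\mathcal W})$. The output weight is $w(\mzero_c)+(W-\boldsymbol{\mathcal W})$, and substituting the Phase 1 bound gives a lower bound of $(1-\delta)Y-(\delta+\delta/(28R))Y_1-(\delta/(28R))\boldsymbol{\mathcal W}$. Using $Y_1\leq Y\leq\beta^{*,c}$ together with $\boldsymbol{\mathcal W}\leq 14R\beta^{*,c}$ (Lemma~\ref{cwlemma} combined with the promise), the choice $1-\delta/(28R)$ is exactly what forces $(\delta/(28R))\boldsymbol{\mathcal W}\leq \delta\beta^{*,c}/2$, so the error terms collapse to $O(\delta)\beta^{*,c}$; absorbing these small constants into $\delta$ (the slack between the $(1-14\delta)$ fractional and $(1-16\delta)$ integral guarantees in the theorem's second half accounts for precisely this) yields the claimed $(1-\delta)Y-\delta\beta^{*,c}$.
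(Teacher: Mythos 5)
Your proposal reproduces the paper's overall structure: discard heavy edges in Phase~1, establish feasibility of $\ytwo$ for the capacitated polytope of $G^{(2)}_c$ so that $\mathrm{OPT}(G^{(3)}_c) \geq (1-\delta)Y_1 + \boldsymbol{\mathcal W}$ (writing $Y = \sum_{(i,j)}w_{ij}y_{ij}$ and $Y_1 = \sum_{(i,j)}w_{ij}\yone_{ij}$ in your notation), run the $(1-\delta/(28R))$-approximate matching and subtract $\boldsymbol{\mathcal W}$, and close the error budget using $\boldsymbol{\mathcal W} \leq 14R\beta^{*,c}$. Your observation that $1-\delta/(28R)$ is calibrated precisely to control the additive $\boldsymbol{\mathcal W}$ shift is correct and clarifying; the paper achieves the same control by bounding the matching error $\frac{\delta}{28R}w(\M^*)$ via the complementary inequality $w(\M^*)\leq 2\boldsymbol{\mathcal W}$.

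There is, however, a genuine accounting gap in Phase~1 that prevents you from reaching the stated constant. You record only the weak inequality $w(\mzero_c) + Y_1 \geq (1-\delta)Y$, then in the final accounting you replace $Y_1$ by $W - \boldsymbol{\mathcal W} \geq (1-\delta)Y_1 - \cdots$, which throws away an extra $\delta Y_1$. Carrying this through gives roughly $(1-\delta)Y - \delta Y_1 - \frac{\delta}{28R}(Y_1 + \boldsymbol{\mathcal W})$, and with $Y_1 \leq \beta^{*,c}$ and $\boldsymbol{\mathcal W} \leq 14R\beta^{*,c}$ this is approximately $(1-\delta)Y - \frac{3}{2}\delta\beta^{*,c}$, not the claimed $(1-\delta)Y - \delta\beta^{*,c}$. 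The appeal to ``absorbing these small constants'' does not rescue the statement: with the extra half-unit of $\delta\beta^{*,c}$, the downstream conclusion of Theorem~\ref{thm:crounding} would read about $(1-16.5\delta+O(\delta^2))$ rather than $(1-16\delta)$, and the $O(\delta^2)$ term is too small to absorb the deficit when $\delta$ is small. The fix is already latent in your own derivation: the discarded Phase~1 weight is at most $\delta\sum_{y_{ij}\geq t}w_{ij}y_{ij}=\delta(Y-Y_1)$, not $\delta Y$, giving the sharper $w(\mzero_c)\geq(1-\delta)(Y-Y_1)$. Adding that to $w(\mone_c)\geq(1-\delta)Y_1-\delta\beta^{*,c}$, which is exactly what the paper does, telescopes to $(1-\delta)Y-\delta\beta^{*,c}$ with no cancellation loss. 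Your Phase~2 feasibility sketch is a plan rather than a proof, but the paper is also terse there (it invokes the uncapacitated analogue, Lemma~\ref{lem:roundp2}), so I flag that as incomplete rather than wrong.
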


\begin{proof} 
Let the weight of the maximum
matching of this graph $G^{(3)}_c$ be $w(\M^*)$. Then 
{\small
$$ 2 \boldsymbol{\mathcal W} \geq w(\M^*) \geq
\sum_{(i,j) \in \etwo} w_{ij} \ytwo_{ij} + \boldsymbol{\mathcal W}$$}
since each edge $(i,j) \in 
G^{(2)}_c$ can contribute at most $2c^{(2)}_{ij}w_{ij}$ to $w(\M^*)$.

Suppose that we find a $\left(1 - \frac{\delta}{32R}\right)$-approximate maximum
matching in $G^{(3)}_c$, using the algorithm in
\cite{DuanP10,DuanPS11} which takes time $|E(G^{(3)}_c)|$ times
$O(\frac{R}{\delta}
\log (R/\delta))$ which is  $O(m'R\delta^{-3} \log (R/\delta))$. This gives us a
matching of weight at least $W$ where $W \geq w(\M^*) - \frac{\delta}{32R} w(\M^*)$ which corresponds to a 
capacitated $b$--matching in $G^{(2)}_c$ with weight at least $w(\M^*) - \frac{\delta}{32R} w(\M^*) - \boldsymbol{\mathcal W}$. Now
{\small
\begin{align*}
& w(\M^*)  - \frac{\delta}{32R}w(\M^*) - \boldsymbol{\mathcal W} \geq \sum_{(i,j) \in \etwo} w_{ij} \ytwo_{ij} + \boldsymbol{\mathcal W} - \frac{\delta}{32R}w(\M*) - \boldsymbol{\mathcal W} \\
& = \sum_{(i,j) \in \etwo} w_{ij} \ytwo_{ij} - \frac{\delta w(\M^*)}{32R} 
\geq \sum_{(i,j) \in \etwo} w_{ij} \ytwo_{ij} - \delta \beta^{*,c} 
\end{align*}
}

Since the second phase is exactly the same as in the uncapacitated case in
Section~\ref{sec:rounding}, we have 
{\small
$$\sum_{(i,j) \in \etwo}
w_{ij} \ytwo_{ij} \geq (1-\delta) \sum_{(i,j) \in \eone} w_{ij} \yone_{ij}$$
}  
Thus we get a matching $\mone_c$ in $G^{(1)}_c$ of weight $w (\mone_c) \geq (1-\delta) \sum_{(i,j) \in \eone} w_{ij} \yone_{ij} - \delta \beta^{*,c}$. 
Observe that $w (\mzero_c) \geq (1-\delta) \sum_{(i,j)\in E} w_{ij} \left
  (y_{ij} - \yone_{ij} \right)$ where $w (\mzero_c) = \sum_{(i,j) \in E}
y_{ij}^{(0)} w_{ij} $. Then $w (\mzero_c)+w (\mone_c)$ is at least 
$(1-\delta) \sum_{(i,j)\in E} w_{ij} y_{ij} - \delta \beta^*$ as desired.
This proves Lemma~\ref{cwlemma2}.
\end{proof}

\noindent
Therefore we can conclude
Theorem \ref{thm:crounding}.

\smallskip
\begin{ntheorem}{\ref{thm:crounding}}
  Given a fractional capacitated $b$-matching $\by^c$ which is feasible 
  for \ref{lp:cbm-long-int}. Let $\by=\yshort{\by^c}$ and $\hat{E}=\{(i,j)|y_{ij} > 0 \}$.
  Further suppose we are promised that $\sum_{(i,j) \in \hat{E}}
  w_{ij}c_{ij} \leq 16R\beta^{*,c}$.
 We find an integral $b$-matching of weight at least
 $(1-\delta)\sum_{(i,j)} w_{ij}y_{ij} - \delta\beta^{*,c} $ in 
$O(m'R\delta^{-3}\ln (R/\delta))$ time and
 $O(m'/\delta^2)$ space where
 $m'= |\hat{E}|$ is the number of nontrivial edges (as defined by the linear program) 
in the fractional solution. As a consequence
we have a $(1-O(\delta))$-approximate integral solution.
\end{ntheorem}

\end{document}